\documentclass[aps,groupedaddress]{revtex4}
\usepackage{epsfig}
\usepackage{graphicx}
\usepackage[T1]{fontenc}
\usepackage{ae}
\usepackage{ulem}
\usepackage{color}
\usepackage[latin1]{inputenc}
\usepackage{amssymb,amsbsy,amsmath}
\usepackage{bbm}

\newtheorem{lemma}{Lemma}
\newtheorem{defi}{Definition}
\newtheorem{prop}{Proposition}

\newtheorem{theorem}{Theorem}
\newtheorem{cor}{Corollary}

\newtheorem{ass}{Assertion}
\newtheorem{axiom}{Axiom}

\input ulem.sty

\begin{document}

%%%%%%%%%%%%%%%%%%%%%%%%%%%%%%%%%%%%%%%%%%%%%%%%%%%%%%%%%%%%%%%%%%%%%%%%%%%%%
%\begin{center}
 % \today
%\end{center}

\title[QOCS]
      {Quantum orientation, Noether structure, composition of systems  and  operations}
\author{Heinz-J\"urgen Schmidt$^1$}
\address{$^1$  Universit\"at Osnabr\"uck,
Fachbereich Mathematik, Informatik und Physik,
 D - 49069 Osnabr\"uck, Germany}

%\tableofcontents

\begin{abstract}
In this paper we argue that, in addition to the statistical structure of quantum theory,
another structure, referred to here as the ``Noether structure,"
is necessary to describe the composition of systems and to define completely positive operations.
A Noether structure reflects the dual role of Hermitian operators as
observables on the one hand and as generators of symmetry transformations on the other.
This idea has been expressed in a similar form in the works of Alfsen and Shultz,
who investigated the conditions under which the Jordan product
can be extended to an associative product of operator algebras.
Our investigations into the Noether structure and the composition of systems are limited to the
finite-dimensional case and establish a connection to completely positive operations.
In the case of pure operations, the latter can be characterized as orientation-preserving maps.
\end{abstract}

\maketitle
%%%%%%%%%%%%%%%%%%%%%%%%%%%%%%%%%%%%%%%%%%%%%%%%%%%%%%%%%%%%%%%%%%%%%%%%%%%%%%%%%%%%%%%%%%%%%%%%%%%%%%%%%%%%%%%%%%%%%%%%%%%%%%%
\section{Introduction}\label{sec:IN}
%%%%%%%%%%%%%%%%%%%%%%%%%%%%%%%%%%%%%%%%%%%%%%%%%%%%%%%%%%%%%%%%%%%%%%%%%%%%%%%%%%%%%%%%%%%%%%%%%%%%%%%%%%%%%%%%%%%%%%%%%%%%%%%

The mathematical formalism commonly used to formulate quantum theory (QT) only has a partial physical interpretation.
For example, a central concept in the mathematical formulation of QT is a complex Hilbert space ${\mathcal H}$,
but only a subset of the Hilbert space vectors, namely of those with norm $1$, has a physical interpretation in terms of pure states,
not to mention the question of whether a phase factor $\exp({\sf i} \alpha)$ of a normalized vector has any physical meaning.
Similarly, the complex scalar product $\langle \phi \left| \right. \psi \rangle$ between two vectors $\phi, \psi\in{\mathcal H}$
with $\| \phi \| =\|\psi\|=1$, which is sometimes referred to as an ``amplitude,"
has no direct physical meaning, in contrast to the transition probability
$\left|\langle \phi \left| \right. \psi \rangle\right|^2$.
Consequently, various approaches were pursued to develop a step-by-step formulation of
QT that does not suffer from this drawback. There was considerable activity in this area,
particularly in the 1960s and 1970s; see, for example, \cite{V68,J68,P76,D79}.
These approaches differ in their choice of basic concepts and laws that are accepted as ``physical"
and in their scope.   An example of a particularly detailed and extensive approach is the
axiomatic foundation of quantum mechanics by G.~Ludwig \cite{L70,L83,L85a,L85b,L87}.

According to Ludwig, the core of QT is, what we will call the ``statistical structure"
$\langle K,L,\mu\rangle$ with the physical interpretation of $K$ as the set of ``ensembles" (pure or mixed states),
of $L$ as the set of ``effects" (generalized yes-no-experiments)
and of $\mu(W,F)$ as the probability with which the effect $F\in L$ occurs in the ensemble $W\in K$.
In the simplest case, this structure is realized by:
\begin{eqnarray}
\label{realK}
  K&=& K({\mathcal H}):=\mbox{ set of statistical operators
acting on the Hilbert space } {\mathcal H},\\
\label{realL}
 L&=&L({\mathcal H}):= \mbox{ set of Hermitian operators with eigenvalues } 0\le \lambda \le 1
\mbox{ acting on the Hilbert space } {\mathcal H},\\
\label{realmu}
\mu &=& \mu({\mathcal H}) \quad \quad \mbox{ given by }
\mu(W,F)=\mbox{Tr}(W\,F) \mbox{ for } W\in K({\mathcal H}) \mbox{ and } F\in L({\mathcal H})
\;,
\end{eqnarray}
which will be called the ``standard realization".

The detailed physical interpretation of the statistical structure
is provided by a pre-theory of
preparation and measurement procedures, which we will not discuss here.
In Ludwig's axiomatic foundation of QT, one seeks for axioms for $\langle K,L,\mu\rangle$
that enable the standard realization. Ideally, these axioms have a direct physical interpretation
as ``fundamental principles of measurement" (Haupts\"atze des Messens).
Opinions may differ as to whether Ludwig's approach has been entirely successful in this regard.
For our purposes, we can replace the relevant axioms with a single one:
\begin{axiom}\label{AX}
There exists a complex, separable Hilbert space ${\mathcal H}$ and bijections $f: K \to K({\mathcal H}),\; g: L\to L({\mathcal H})$
such that\\
$\mu(W,F)= \mbox{Tr}(f(W)\,g(F))$.
\end{axiom}
This allows us to leave aside the question of the physical interpretation of the axioms for the purpose of this paper.
Note that by Axiom \ref{AX} the sets $K$ and $L$ are organized as convex sets embedded into the linear
(Banach) spaces $f(K)\subset B({\mathcal H})$ of Hermitean trace class operators, resp.,
$g(L)\subset B({\mathcal H})'$ of bounded Hermitean operators, the latter viewed as the dual space
of $B({\mathcal H})$ by means of the bilinear form $(X,Y)\mapsto \mbox{Tr}(XY)$.

However, it is obvious that the statistical structure of QT is insufficient
to completely reconstruct the representation of QT as we know it from textbooks.
For example, the formulation of Schr\"odinger's equation requires an extension
of the statistical vocabulary in order to represent time evolution through transformations of ensembles
(Schr\"odinger picture) or, alternatively, of effects (Heisenberg picture).
This was also Ludwig's view.
The statistical structure is not the end point
of the axiomatic foundation of QT, but rather the starting point for further
developing the physical interpretation of QT.
The latter example can also be seen as a hint to the problem of identifying
certain observables with the generators of one-parameter subgroups of automorphisms in QT.
We will come back to this problem below.

Another important concept in QT is the composition of systems,
which, in the standard QT formalism, is described by the tensor product of the Hilbert spaces of the subsystems.
In Section \ref{sec:CS}, we will show that the composition of subsystems
cannot be uniquely described in terms of the statistical structure of QT.
Closely related to this is another fundamental concept:
the completely positive state transformations (see Section \ref{sec:CPT}),
which are used, for example, to describe time evolution in open quantum systems, see, e.~g., \cite{S26}, or
transformations resulting from measurements that generalize the von Neumann projection postulate.
Here, too, the statistical structure of QT seems too limited to adequately define the term.

This raises the question of what additional structures would be capable of doing so.
An obvious requirement for such a structure would be that it reduces the freedom of
choice of Hilbert space to unitary isomorphisms. This would immediately invalidate
the above counterexample in Section \ref{sec:CS} regarding the non-uniqueness of the composition of systems,
as well as all similar counterexamples. A related question has been dealt with
in detail in the works of Alfsen, Shultz et al \cite{AHS80,AS98a,AS98b,AS12a,AS12b}.
These authors have investigated
the conditions under which Jordan algebras can be extended to associative $C^\ast$-algebras
or von Neumann algebras. The central condition for this has been found to be
``orientability," and in this case, the choice of the associative product
corresponds exactly to the choice between two ``orientations".
Precursors to the concept of quantum orientation can be found in the work of Alain Connes, see \cite{C74}.

A structure equivalent to quantum orientation was introduced by Alfsen and Shultz
as ``dynamical correspondence"; see the section \ref{sec:NSI}.
It denotes a linear mapping  -  which is essentially an isomorphism  - from the space of observables
to the Lie algebra of the automorphism group of a quantum system.
For example, the energy observable $H$ is also a generator of the time evolution
according to the Schr\"odinger equation. Thus, the ``dynamical correspondence" or
``Noether structure," as it is called in this paper, lends itself even more readily
to a physical interpretation than  the geometric concept of``quantum orientation". 
In Section \ref{sec:NSII},
we present a version of the Noether structure that also includes a dimensional
conversion factor, which we identify with $\hbar^{-1}$.

As mentioned earlier, Alfsen and Shultz use the Noether structure, or equivalently,
the quantum orientation, to define an associative product on the space of all
linear operators over a Hilbert space (which is always finite-dimensional in this paper).
This then allows for the straightforward definition of composite systems or completely positive operations.
However, the question arises as to whether the latter cannot be defined directly in terms
of the preservation of orientation; see Section \ref{sec:CPT}.
At least for the subset of ``pure" operations, this is indeed possible;
 see Theorem \ref{Tpure} in subsection \ref{sec:CPT2}.
These are operations that map pure states, up to a factor, back to pure states.
The proof is based on the characterization of pure operations given by
Davies in 1969 \cite{D69}; see Theorem \ref{TPO} in subsection \ref{sec:CPT2}.
It is quite complex and has therefore been moved in part to the appendix.
Finally, we summarize the results in Section \ref{sec:SUM}.

%%%%%%%%%%%%%%%%%%%%%%%%%%%%%%%%%%%%%%%%%%%%%%%%%%%%%%%%%%%%%%%%%%%%%%%%%%%%%%%%%%%%%%%%%%%%%%%%%%%%%%%%%%%%%%%%%%%%%%%%%%%%%%%
\section{Composition of systems}\label{sec:CS}
%%%%%%%%%%%%%%%%%%%%%%%%%%%%%%%%%%%%%%%%%%%%%%%%%%%%%%%%%%%%%%%%%%%%%%%%%%%%%%%%%%%%%%%%%%%%%%%%%%%%%%%%%%%%%%%%%%%%%%%%%%%%%%%

In the following, we will limit ourselves to systems that are described by a
finite-dimensional Hilbert space ${\mathcal H}$ with $\mbox{dim}({\mathcal H})=N$.
Furthermore, we will not use any super selection rules,
and hence, in the language of von Neumann algebras, only consider factors $I_N$
which is already implicitly assumed by the form of the realization
(\ref{realK}), (\ref{realL}) of statistical structures of QT.
This is a pragmatic decision to separate conceptual issues from mathematical complications.
Hence the statistical structure $\langle K,L,\mu\rangle$ is determined by the dimension $N$
of the Hilbert space ${\mathcal H}$, up to isomorphism (more on this later).
When composing systems, we limit ourselves to different systems and thus
exclude the theory of bosons and fermions, not because it is unimportant,
but because it would only complicate matters and distract from the fundamental problem.
Physical examples of composition of
finite-dimensional systems would be spin systems, each described by Hilbert spaces
of dimension $N_1=2s_1+1$ and $N_2=2 s_2+1$, where (anti-)symmetrization plays no role.

Throughout this paper  $B({\mathcal H})$ denotes the linear space of Hermitean operators
$X:{\mathcal H}\to {\mathcal H}$,   $B_0({\mathcal H})\subset B({\mathcal H})$
the subspace of Hermitean operators
with vanishing trace, ${\mathcal A}({\mathcal H})$ the space of all linear operators and
${\mathcal U}({\mathcal H})$ the $N^2$-dimensional Lie group of unitary operators.

We will attempt to formulate a ``conservative" extension of the statistical structure
for describing composite systems, which is, however, doomed to failure.
Given are two quantum systems described by Hilbert spaces ${\mathcal H}_1$ and ${\mathcal H}_2$
with dimensions $N_1$ and $N_2$. According to the usual rules of QT, the composite system
is described by the tensor product Hilbert space
${\mathcal H}_{12} = {\mathcal H}_1 \otimes {\mathcal H}_2$ with dimension $N_{12}=N_1 N_2$.
In statistical language, the composite system is a structure
$\langle K_{12}, L_{12}, \mu_{12}\rangle$.
Of course, a connection must still be established with the structures $\langle K_1, L_1, \mu_1\rangle$
and $\langle K_2, L_2, \mu_2\rangle$ belonging to the individual systems.
This can be done by injective, affine mappings
\begin{equation}\label{pq}
p: K_1  \times K_2  \to K_{12} \quad \mbox{ and } q: L_1 \times L_2 \to L_ {12}
\;,
\end{equation}
that satisfy the condition
\begin{equation}\label{condpq}
 \mu_{12}(p(W_1,W_2),q(F_1,F_2))=\mu_1(W_1, F_1)\, \mu_2(W_2, F_2)
 \quad \mbox{ for all } W_1\in K_1,\,W_2\in K_2, \,F_1 \in L_1,\, F_2\in L_2
\end{equation}
This is an axiomatization of the usual composition
$W_{12}= W_1 \otimes W_2$ of (mixed) states to product states
and of effects $F_ {12} = F_1 \otimes F_2$ to product effects,
whereby condition (\ref{condpq}) follows from
$\mbox{Tr}((W_1 \otimes W_2 )( F_1 \otimes F_2))= \mbox{Tr}(W_1 F_1) \mbox{Tr}(W_2 F_2)$.
Since the composite system $\langle K_{12}, L_{12}, \mu_{12}\rangle$ is assumed to satisfy the Axiom \ref{AX},
there exists a Hilbert space $\tilde{\mathcal H}_{12}$ through which $\langle K_{12}, L_{12}, \mu_{12}\rangle$
is realized in the usual way, which does not necessarily coincide {\it a priori}
with the tensor product space ${\mathcal H}_{12} = {\mathcal H}_1 \otimes {\mathcal H}_2$ considered above.
The question is whether the structure
$\langle K_{12}, L_{12}, \mu_{12}\rangle$  can be defined in terms of $\langle K_1, L_1, \mu_1\rangle$,
$\langle K_2, L_2, \mu_2\rangle$, $p$, $q$,  and any additional requirements.

Let us clarify this question a little further. To do so, we recall that in Ludwig's axiomatic foundation of QT,
the sets $K$ and $L$ are each embedded into Banach spaces $K\subset B$ and $L\subset B'$ (dual space)
such that $\mu:K\times L \to [0,1]$ coincides with the restriction of the canonical bilinear form on $B\times B'$.
In the standard realization over a Hilbert space ${\mathcal H}$, this embedding assumes the form
$ K({\mathcal H}) \subset B({\mathcal H})$ and $L({\mathcal H}) \subset B({\mathcal H})'$,  respectively.
In the finite-dimensional case,
$B({\mathcal H})$ and $B({\mathcal H})'$ can be identified by considering the Euclidean scalar product
$(X,Y) \mapsto \mbox{Tr}(X Y)$. We will do this in the following. In this case, the cone
$B^+({\mathcal H})$ of positive semi-definite Hermitian operators is ``self-dual,"
i.~e., $X\ge 0$ iff  $ \mbox{Tr}( X Y) \ge 0$ for all $Y\ge 0$.

When composing systems in standard quantum mechanics,
the two spaces $B({\mathcal H}_1 \otimes {\mathcal H}_2)$ and $B({\mathcal H}_1) \otimes B({\mathcal H}_2)$
can be identified. It therefore makes sense to also consider the sets $K_{12}$ and $L_{12}$ as subsets of
$B_1 \otimes B_2$ and to define the two mappings
\begin{eqnarray}
\label{preal}
 p&:& K_1 \times K_2  \to K_{12} \subset B_1\otimes B_2\\
 \label{qreal}
 q&:& L_1 \times L_2  \to L_{12 }\subset B_1 \otimes B_2
\end{eqnarray}
considered in (\ref{pq}) as tensor product mappings.
Then the situation is such that $K_{12}$ and $L_{12}$ are two initially
unknown subsets of $B_1\otimes B_2$ with the following properties:
\begin{eqnarray}
\label{i}
 && K_1 \otimes K_2 \subset K_{12}\subset B_1\otimes B_2 \;, \\
 \label{ii}
&&  L_1 \otimes L_2 \subset L_{12}\subset B_1\otimes B_2\;,\\
 \label{iii}
&& \langle K_{12}, L_{12}, \mu_{12}\rangle \quad \mbox{ satisfies the axiom \ref{AX}}
\;,
\end{eqnarray}
where $\mu_{12}$ is the Euclidean scalar product on  $B_1 \otimes B_2$.
The question then arises as to whether the sets $K_{12}$ and  $L_{12}$
are uniquely determined by the requirements (\ref{i}),  (\ref{ii}), and  (\ref{iii}).

In preparation to its answer, some remarks on the uniqueness of the Hilbert space ${\mathcal H}$ for the realization of a
statistical structure $\langle  K, L, \mu\rangle$ are in order.
The axiom \ref{AX} merely guarantees the existence of ${\mathcal H}$ .
With regard to the degree of uniqueness, the following can been proven,
see theorem 5.23 in \cite{L83} and \cite{MP01}, cp.~also Wigner's theorem, \cite{W59}.
%%%%%%%%%%%%%%%%%%%%%%%%%%%%%%%%%%%%%%%%%%%%%%%%%%%%%%%%%%%%%%%%%%%%%%%%%%%%%%%%%%%%%%%%%%%%%%%%%%%%%%%%%%%%%%
\begin{prop}\label{PU}
If there are two Hilbert spaces ${\mathcal H}_a$  and ${\mathcal H}_b$ that satisfy the axiom \ref{AX},
then there exists a unitary or antiunitary isomorphism $U: {\mathcal H}_a \to {\mathcal H}_b$.
\end{prop}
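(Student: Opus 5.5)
We transport the whole statistical structure from ${\mathcal H}_a$ to ${\mathcal H}_b$, reduce it to a symmetry of the space of pure states, and then apply Wigner's theorem \cite{W59}. Let $f_a,g_a$ and $f_b,g_b$ be the bijections guaranteed by Axiom \ref{AX} for ${\mathcal H}_a$ and ${\mathcal H}_b$. Define
\[
\Phi:=f_b\circ f_a^{-1}: K({\mathcal H}_a)\to K({\mathcal H}_b)
\quad\mbox{and}\quad
\Psi:=g_b\circ g_a^{-1}: L({\mathcal H}_a)\to L({\mathcal H}_b).
\]
Both are bijections and satisfy $\mbox{Tr}(\Phi(W)\Psi(F))=\mbox{Tr}(WF)$ for all $W,F$. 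Since $\mu$ is the same function for both realizations, this identity is automatic.

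The first step is to show that $\Phi$ and $\Psi$ are affine. The effects separate states, because $\mbox{Tr}(WF)=\mbox{Tr}(W'F)$ for all $F\in L$ forces $W=W'$. Take a convex combination $W=\lambda W_1+(1-\lambda)W_2$. Then $\Phi(W)$ and $\lambda\Phi(W_1)+(1-\lambda)\Phi(W_2)$ give the same probabilities on every $\Psi(F)$. Since $\Psi$ is onto $L({\mathcal H}_b)$, they give the same probabilities on all effects, so the two states coincide. The argument for $\Psi$ is symmetric.

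An affine bijection of $K({\mathcal H}_a)$ onto $K({\mathcal H}_b)$ maps extreme points onto extreme points. Hence $\Phi$ restricts to a bijection between the pure states, the rank-one projections $P_\phi$. Comparing dimensions of the affine hulls gives $N_a^2-1=N_b^2-1$ in the finite-dimensional case. More generally, the pure-state sets must correspond bijectively in a way that preserves the face structure.

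The main obstacle is to show that this bijection preserves transition probabilities, $\mbox{Tr}(\Phi(P_\phi)\Phi(P_\psi))=\mbox{Tr}(P_\phi P_\psi)$. I would first try the following: a rank-one projection is also an effect, so ask whether $\Psi(P_\psi)=\Phi(P_\psi)$. The projection $P_\psi$ is characterized, within the statistical structure alone, as the unique effect $F$ with $\mbox{Tr}(P_\psi F)=1$ which is minimal among such effects. More precisely, it is the smallest effect that is "certain" on $P_\psi$, and it is zero on the orthogonal face. Because $\Phi$ and $\Psi$ preserve $\mu$, the orders involved and the faces, $\Psi$ must map this effect to the corresponding one for $\Phi(P_\psi)$. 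I would derive the order-preservation of $\Psi$ from affinity together with $\Psi(0)=0$ and $\Psi(\mathbbm{1})=\mathbbm{1}$. These hold because $0$ and $\mathbbm{1}$ are the effects with constant probability $0$ and $1$. Once $\Psi(P_\psi)=\Phi(P_\psi)$ is established, we get
\[
\mbox{Tr}(\Phi(P_\phi)\Phi(P_\psi))=\mbox{Tr}(\Phi(P_\phi)\Psi(P_\psi))=\mbox{Tr}(P_\phi P_\psi)=|\langle\phi|\psi\rangle|^2 .
\]

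Finally, Wigner's theorem applies in the form of a bijection between the projective spaces of ${\mathcal H}_a$ and ${\mathcal H}_b$ that preserves transition probabilities. It yields a unitary or antiunitary $U:{\mathcal H}_a\to{\mathcal H}_b$ with $\Phi(P_\phi)=P_{U\phi}$. The existence of such a $U$ is already the claim. Moreover, by affinity and spectral decomposition, $\Phi(W)=UWU^{-1}$, and in the same way $\Psi(F)=UFU^{-1}$.
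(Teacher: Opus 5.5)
Your argument is correct, but it takes a different route from the one the paper relies on. The paper gives no proof of its own. It defers to Theorem 5.23 of \cite{L83} and to \cite{MP01}, i.e.\ to Ludwig's theorem that every ortho-order automorphism of the effect set is implemented by a unitary or antiunitary operator, with \cite{MP01} covering the two-dimensional case that Ludwig's proof leaves out; Wigner's theorem is mentioned only for comparison. That route works with $L$ alone: one checks that $g_b\circ g_a^{-1}$ preserves the order and the complement $F\mapsto{\mathbbm 1}-F$, and a hard structure theorem does the rest.

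You instead use the pairing $\mu$ between the two sides. Since $P_\psi$ is the least effect that is certain on $P_\psi$, and $\Phi(P_\psi)$ is again a rank-one projection, you get $\Psi(P_\psi)=\Phi(P_\psi)$. This turns invariance of $\mu$ into invariance of transition probabilities, so the classical Wigner theorem finishes the proof in all dimensions at once, including $N=2$.

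Three minor remarks. First, order preservation of $\Psi$ follows at once from $F_1\le F_2\Leftrightarrow \mbox{Tr}(WF_1)\le\mbox{Tr}(WF_2)$ for all $W\in K({\mathcal H}_a)$, together with surjectivity of $\Phi$. Your detour via affinity and $\Psi(0)=0$ also works, but it is not needed. Second, the closing claim $\Phi(W)=UWU^{-1}$ is better obtained from $\mbox{Tr}(\Phi(W)P_{U\psi})=\mbox{Tr}(\Phi(W)\Psi(P_\psi))=\mbox{Tr}(WP_\psi)$ and the fact that rank-one projections separate states. In infinite dimensions a spectral decomposition is a countable convex combination, which finite affinity alone does not control. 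Third, read literally, the proposition only asserts that ${\mathcal H}_a$ and ${\mathcal H}_b$ are isomorphic, and that already follows once their dimensions agree (your affine-hull count). The substantive content is that $f_b\circ f_a^{-1}$ is itself of the form $W\mapsto UWU^\ast$, which the paper alludes to by saying that unitary or antiunitary maps exhaust all choices. Your argument does establish this.
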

%%%%%%%%%%%%%%%%%%%%%%%%%%%%%%%%%%%%%%%%%%%%%%%%%%%%%%%%%%%%%%%%%%%%%%%%%%%%%%%%%%%%%%%%%%%%%%%%%%%%%%%%%%%%%

In this case, we obtain canonical bijections $G_K: K({\mathcal H}_a) \to K({\mathcal H}_b)$ and
$G_L: L({\mathcal H}_a) \to L({\mathcal H}_b)$, which are defined by
$G_K(W)= U W U^*$ and $G_L(F)= U F U^*$ and satisfy $\mbox{Tr}(G_K(W) G_L(F))= \mbox{Tr}(U W U^*U F U^*) =\mbox{Tr}(WF)$.
This immediately implies that the uniqueness of the Hilbert space is, at least, only guaranteed up to unitary or anti-unitary isomorphisms.
The less trivial part of the proof consists in showing that the unitary or anti-unitary
isomorphisms already exhaust all possible choices of the Hilbert space.

We will now show that the statistical structure $\langle K_{12}, L_{12}, \mu_{12}\rangle$
of a composite system is {\it not} uniquely determined by conditions  (\ref{i}),  (\ref{ii}), and  (\ref{iii}).
This finding is consistent with the results of \cite{AD78} and \cite{P04},
although these authors use different theoretical frameworks to describe the composition of quantum systems.

Since it suffices to find a single counterexample, we can restrict ourselves
to the simplest case $\dim {\mathcal H}_1=\dim {\mathcal H}_2 =2$.
Due to the freedom in the choice of the Hilbert space, we first consider
the case ${\mathcal H}_1={\mathcal H}_2= {\mathbb C}^2$.
The qubit ensembles $W\in K({\mathbb C}^2)$ can then be represented by Hermitian, positively semi-definite
$2\times 2$-matrices with trace $1$; analogously, the effects $F\in L({\mathbb C}^2)$
can be represented by Hermitian $2\times 2$-matrices with eigenvalues lying in the closed interval $[0,1]$.
The space $B({\mathcal H}_1 \otimes {\mathcal H}_2)$ can therefore be identified with the space of
Hermitian $4\times 4$-matrices. The bilinear embeddings
$p: K_1 \times K_2 \to K_{12} \subset  B_1 \otimes B_2$ and
$q: L_1 \times L_2 \to L_{12} \subset  B_1 \otimes B_2$ are thus uniquely determined and have the following form:
\begin{equation}\label{p12}
  \left( \left(
  \begin{array}{cc}
    \lambda & a \\
    \overline{a} & 1-\lambda
  \end{array}
  \right)
  \;,\;
  \left(
  \begin{array}{cc}
    \mu & b \\
    \overline{b} & 1-\mu
  \end{array}
  \right)\right)
 \stackrel{p}{\longrightarrow}
 \left(
 \begin{array}{cc|cc}
   \lambda \mu & \lambda b & a \mu & a b \\
   \lambda \overline{b} & \lambda (1-\mu) & a \overline{b} & a(1-\mu) \\
   \hline
   \overline{a} \mu&\overline{a}  b& (1-\lambda)\mu & (1-\lambda) b\\
   \overline{a}\overline{b} & \overline{a} (1-\mu) &  (1-\lambda) b &  (1-\lambda) (1-\mu)
 \end{array}
 \right)
 \;,
\end{equation}
where $|a|^2\le \lambda(1-\lambda),\;|b|^2\le \mu(1-\mu)$, and analogously for $q$.
Note that in the r.~h.~s.~of (\ref{p12}) the local blocks are proportional to the second factor
of the l.~h.~s.~such that the proportionality constants are given by the first factor.

The set $K_{12}$ is given by the ``positive" Hermitean $4\times 4$-matrices $X$ with unit trace.
Thus the obvious choice is to define ``positive" in the usual sense of positively semi-definite,
in symbols, $X\ge 0$, and to consider the corresponding statistical structure
$\langle K_{12}, L_{12}, \mu_{12}\rangle$ in $4$ dimensions.
The product states shown in (\ref{p12}) are also ``positive" in this sense and hence (\ref{i})
is satisfied, analogously for (\ref{ii}).

However, we may define ``positive" Hermitean $4\times 4$-matrices in an alternative sense.
To motivate this alternative definition we recall that the second factor of the tensor product
of two-dimensional Hilbert spaces may be chosen as ${\mathbb C}^{2\ast}$, the dual space of ${\mathbb C}^{2}$.
In fact, there exists an anti-unitary isomorphism $U: {\mathbb C}^{2} \to {\mathbb C}^{2\ast}$,
mapping ``kets" to ``bras". In components, $U$ assumes the form of the anti-linear map
$U(c_1,c_2)= (\overline{c_1},\overline{c_2})$. For Hermitean $2\times 2$-matrices $X$ the transformation
$X\mapsto U X U^\ast$ is equivalent to the transposition, denoted by $\tau$.
The map ${\mathbbm 1}\otimes \tau: B_1\otimes B_2 \to B_1\otimes B_2$
performs transpositions of the local blocks. We will define ``positive" Hermitean $4\times 4$-matrices $X$
by the condition
\begin{equation}\label{ge}
 X\succeq 0 \Leftrightarrow ({\mathbbm 1}\otimes \tau) (X)\ge 0
 \;,
\end{equation}
where the latter symbol $\ge 0$, as before, means positively semi-definite.
The alternative definition $\succeq 0 $ leads to an alternative statistical structure
in $4$ dimensions, denoted bx $\langle \widetilde{K_{12}}, \widetilde{L_{12}}, \widetilde{\mu_{12}}\rangle$,
where $\widetilde{\mu_{12}}$ operates in the same manner as $\mu_{12}$ by $(X,Y)\mapsto \mbox{Tr}(X Y)$
and differs only in its domain of definition. We have to show that
$\langle \widetilde{K_{12}}, \widetilde{L_{12}}, \widetilde{\mu_{12}}\rangle$ also
satisfies the conditions (\ref{i}),  (\ref{ii}), and  (\ref{iii}) and that it differs from
$\langle K_{12}, L_{12}, \mu_{12}\rangle$.

The latter follows since ${\mathbbm 1}\otimes \tau$ does not map
$B^+({\mathbb C}^{2} \otimes {\mathbb C}^{2})$ into itself. For example,
\begin{equation}\label{counter}
\left(
\begin{array}{cc|cc}
  1& 0 & 0 & 1 \\
  0 & 0 & 0 & 0 \\
  \hline
   0 & 0 & 0 & 0 \\
   1& 0 & 0 & 1 \\
\end{array}
 \right)
 \stackrel{{\mathbbm 1}\otimes \tau}{\longrightarrow}
 \left(
\begin{array}{cc|cc}
  1& 0 & 0 & 0 \\
  0 & 0 & 1 & 0 \\
  \hline
   0 & 1 & 0 & 0 \\
   0& 0 & 0 & 1 \\
\end{array}
 \right)
 \;.
\end{equation}
The former matrix has the eigenvalues $2,0,0,0$, and the latter one the eigenvalues $1,1,1,-1$.
This example is identical to the usual one showing that $\tau$ is 
monotone but not completely positive, see \cite{NC00}.

Concerning the conditions  (\ref{i}) and  (\ref{ii}) one can argue that
${\mathbbm 1}\otimes \tau$ maps product states, i.~e., states of the form (\ref{p12})
onto product states and analogously for product effects.
Finally, the condition  (\ref{ii}) is satisfied for the alternative structure
$\langle \widetilde{K_{12}}, \widetilde{L_{12}}, \widetilde{\mu_{12}}\rangle$ in $4$ dimensions
since this is the standard realization of the statistical structure over the Hilbert space
${\mathbb C}^{2} \otimes {\mathbb C}^{2\ast}$.
Nevertheless, both statistical structures are isomorphic since they are both realized over
a Hilbert space of dimension $4$.

We summarize our considerations in the following
\begin{prop}\label{P1}
There exist two different, isomorphic statistical structures
$\langle K_{12}, L_{12}, \mu_{12}\rangle$ and
$\langle \widetilde{K_{12}}, \widetilde{L_{12}}, \widetilde{\mu_{12}}\rangle$
satisfying Axiom \ref{AX} such that
\begin{eqnarray}\label{prod12K}
 K({\mathbb C}^{2}) \otimes  K({\mathbb C}^{2}) & \subset & K_{12},
 \widetilde{K_{12}} \subset B({\mathbb C}^{2}) \otimes B({\mathbb C}^{2}) \;,  \\
\label{prod12L}
  L({\mathbb C}^{2}) \otimes  L({\mathbb C}^{2}) & \subset & L_{12}, \widetilde{L_{12}} \subset B({\mathbb C}^{2}) \otimes B({\mathbb C}^{2})
  \;, \mbox{ and }\\
  \mu_{12} \left(W_{12},F_{12}\right)&=&\mbox{Tr}\left(W_{12}F_{12} \right), \;
  \widetilde{\mu_{12}} \left(\widetilde{W_{12}},\widetilde{F_{12}}\right)=\mbox{Tr}\left(\widetilde{W_{12}}\widetilde{F_{12}} \right)
  \;.
\end{eqnarray}
\end{prop}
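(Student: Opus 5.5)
The plan is to make the construction preceding the statement explicit. Put $\Theta:={\mathbbm 1}\otimes\tau$ on $B({\mathbb C}^2)\otimes B({\mathbb C}^2)$. The standard structure $\langle K_{12},L_{12},\mu_{12}\rangle$ is the realization (\ref{realK})--(\ref{realmu}) over ${\mathbb C}^2\otimes{\mathbb C}^2$. The alternative one is obtained by transporting it with $\Theta$:
\begin{equation*}
\widetilde{K_{12}}:=\Theta(K_{12}),\qquad \widetilde{L_{12}}:=\Theta(L_{12}),
\end{equation*}
so $X\in\widetilde{K_{12}}$ iff $\Theta(X)\ge 0$ and $\mbox{Tr}\,X=1$, matching (\ref{ge}). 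Here $\mbox{Tr}(\Theta X)=\mbox{Tr}X$, and $\Theta$ is an involution, so $\Theta^{-1}=\Theta$.

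\emph{Step 1 (Axiom \ref{AX} and isomorphism).} $\Theta$ is an orthogonal map for the scalar product $(X,Y)\mapsto\mbox{Tr}(XY)$. Indeed, transposition preserves $\mbox{Tr}(AB)$ for Hermitean $A,B$, so on product operators
\begin{equation*}
\mbox{Tr}\bigl(\Theta(A\otimes B)\,\Theta(C\otimes D)\bigr)=\mbox{Tr}(AC)\,\mbox{Tr}(B^TD^T)=\mbox{Tr}\bigl((A\otimes B)(C\otimes D)\bigr),
\end{equation*}
and this extends bilinearly. Set $f:=\Theta|_{\widetilde{K_{12}}}$ and $g:=\Theta|_{\widetilde{L_{12}}}$. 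These are bijections onto $K({\mathbb C}^2\otimes{\mathbb C}^2)$ and $L({\mathbb C}^2\otimes{\mathbb C}^2)$, and
\begin{equation*}
\widetilde{\mu_{12}}(W,F)=\mbox{Tr}(WF)=\mbox{Tr}(f(W)g(F)).
\end{equation*}
Hence Axiom \ref{AX} holds with the same Hilbert space ${\mathbb C}^4$. The same maps give an isomorphism of the two statistical structures that respects $\mu$. Equivalently, one may read this as the standard realization over ${\mathbb C}^2\otimes{\mathbb C}^{2\ast}$, using the antiunitary $U$ described above.

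\emph{Step 2 (product inclusions).} For $W_1,W_2\in K({\mathbb C}^2)$ we have $\Theta(W_1\otimes W_2)=W_1\otimes W_2^T$. Transposition preserves the spectrum of Hermitean matrices, so $W_2^T\in K({\mathbb C}^2)$ and $W_1\otimes W_2^T\ge0$ with unit trace. Thus $W_1\otimes W_2\in\widetilde{K_{12}}$. The same argument works for effects, since the eigenvalues of $F_1\otimes F_2^T$ lie in $[0,1]$. The inclusions for the unadorned structure are the standard facts about tensor products of positive operators. This gives (\ref{prod12K}) and (\ref{prod12L}), and the formulas for $\mu_{12}$ and $\widetilde{\mu_{12}}$ hold by definition.

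\emph{Step 3 (the structures differ).} I would use the matrix $X$ on the left of (\ref{counter}), which equals $2\,|\psi\rangle\langle\psi|$ with $\psi=(e_1\otimes e_1+e_2\otimes e_2)/\sqrt2$. Then $\frac12 X\in K_{12}$, while $\Theta(\frac12X)$ has eigenvalue $-\frac12$, so $\frac12X\notin\widetilde{K_{12}}$. Hence $K_{12}\neq\widetilde{K_{12}}$. The same computation applied to $\frac12 X$ viewed as an effect (a projector) shows $L_{12}\ne\widetilde{L_{12}}$. This is the only step with real content: one has to exhibit the concrete failure of complete positivity of $\tau$, namely verify the spectra $\{2,0,0,0\}$ and $\{1,1,1,-1\}$, where the latter is the spectrum of the swap operator.
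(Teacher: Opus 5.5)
Your proposal is correct and follows the paper's route. You transport the standard structure by $\Theta={\mathbbm 1}\otimes\tau$, use $\Theta(A\otimes B)=A\otimes B^{T}$ to keep product states and effects, and separate the two structures with the same example (\ref{counter}). Where you do more is Axiom \ref{AX} and the isomorphism: you derive both from $\Theta$ being an involutive isometry of the trace form, whereas the paper only appeals informally to a realization over ${\mathbb C}^{2}\otimes{\mathbb C}^{2\ast}$, so your justification is the cleaner one.
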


%%%%%%%%%%%%%%%%%%%%%%%%%%%%%%%%%%%%%%%%%%%%%%%%%%%%%%%%%%%%%%%%%%%%%%%%%%%%%%%%%%%%%%%%%%%%%%%%%%%%%%%%%%%%%%%%%%%%%%%%%%%%%%%
\section{Orientation of state spaces}\label{sec:OSS}
%%%%%%%%%%%%%%%%%%%%%%%%%%%%%%%%%%%%%%%%%%%%%%%%%%%%%%%%%%%%%%%%%%%%%%%%%%%%%%%%%%%%%%%%%%%%%%%%%%%%%%%%%%%%%%%%%%%%%%%%%%%%%%%

Let us explain the notion of a ``quantum orientation" once again in the context of this work,
simplified by restricting ourselves to finite-dimensional systems.
The product of two Hermitian operators $X,Y\in B({\mathcal H})$ is generally not Hermitian,
but the Jordan product $X\circ Y=1/2(XY+YX)$ is. It is plausible that the Jordan product can be
uniquely defined within the framework of statistical structure.
In fact, Alfsen and Shultz have constructed a Jordan algebra within the framework of a theory
that bears a strong resemblance to Ludwig's foundation of QT and is similarly motivated by physical questions.
We will briefly outline this construction: One can define ``projectors"
as extremal points of the convex set $L$ and consider decompositions of elements $X\in B$
(or $B'$ in the infinite-dimensional case)
as linear combinations of orthogonal projectors.
This provides a spectral theorem and hence kind of functional calculus for $B$, and allows one to define
$X \circ Y:=1/2( (X+Y)^2 - X^2 -Y^2)$.
In the standard realization of a statistical structure,
this results in the Jordan product of Hermitian operators.
The question of physical interpretation of the Jordan product will be left aside here.

In any case, the Jordan product is invariant under unitary or
anti-unitary automorphisms of the statistical structure.
In contrast, there is the associative product of operators of the
(in our case finite-dimensional) $C^\ast$-algebra
${\mathcal A}({\mathcal H}):=B({\mathcal H})\oplus {\sf i}B({\mathcal H})$.
Under an anti-unitary automorphism, the product $XY$ transforms into the product $YX$.
This highlights the need for a further structure in order to make the
extension of the Jordan algebra $B({\mathcal H})$ to the $C^\ast$-algebra
${\mathcal A}({\mathcal H})$ unique.

There are two equivalent definitions of this additional structure in the works of Alfsen and Shultz.
In this section, we examine the geometric concept of quantum orientation, drawing heavily on \cite{AHS80},
although we make some minor modifications that are appropriate for this work.
First, we recall the one-to-one correspondence between two-dimensional subspaces ${\mathcal H}_2$
and three-dimensional faces $F$ of $K({\mathcal H})$ such that $F$ consists of a statistical
operators $W\in K({\mathcal H})$ with support contained in ${\mathcal H}_2$.
Therefore, there exists a bijection
between the set ${\mathcal F}_3$ of all $3$-dimensional faces and  the Grassmann manifold $\mbox{Gr}(2,{\mathcal H})$
of all two-dimensional subspaces of ${\mathcal H}$.
We will add a few remarks on the structure of $\mbox{Gr}(2,{\mathcal H})$ although this is well-known, see, e.~g,, \cite{GH94}.

We fix an arbitrary two-dimensional subspace ${\mathcal H}_2^0$ and
represent every other two-dimensional subspace ${\mathcal H}_2$ as the image of
${\mathcal H}_2^0$ under some unitary operator $U\in {\mathcal U} ({}\mathcal H)$,
that is, ${\mathcal H}_2=U({\mathcal H}_2^0)$.
However, $U$ is not uniquely determined, but only up to right multiplication
by another unitary operator $V\in {\mathcal U} ({}\mathcal H)\cong {\mathcal U}(N)$ that leaves ${\mathcal H}_2^0$
- and thus also its orthogonal complement $\left({\mathcal H}_2^0\right)^\perp$ - invariant.
The subgroup of all such unitary operators $V$ is isomorphic to
${\mathcal U}\left( {\mathcal H}_2^0\right) \times {\mathcal U}\left( \left({\mathcal H}_2^0\right)^\perp\right)
\cong {\mathcal U}(2) \times {\mathcal U}(N-2)$.
These considerations lead to the well-known bijection
$\mbox{Gr}(2,{\mathcal H})\cong {\mathcal U}(N)/({\mathcal U}(2)\times {\mathcal U}(N-2))$
which can be used to endow $\mbox{Gr}(2,{\mathcal H})$ with the structure of a $4(N-2)$-dimensional compact
$C^\infty$-manifold. In what follows we will, however, only make use of the topology of $\mbox{Gr}(2,{\mathcal H})$.

Let $E^3$ denote the  unit ball in $3$-dimensional Euclidean space,
where the latter will be identified with ${\mathbb R}^3$.
For $K=K({\mathcal H})$ each face generated by two different extremal points of $K$ is affinely isomorphic to $E^3$.
One says that $K({\mathcal H})$ has the ``3-ball property".
Let ${\mathcal B}$  denote the set of all affine isomorphisms
$\phi$ from $E^3$ onto a face of $K$, that is,
\begin{equation}\label{defB}
 {\mathcal B} = \{ \phi: E^3 \to F\subset K({\mathcal H})\left|\right.
 F\in {\mathcal F}_3 \;\mbox{and}\, \phi \;\mbox{is an affine isomorphism}\}
 \;.
\end{equation}
${\mathcal B}$ will be equipped with the topology of pointwise convergence.
The group $O(3)$ operates freely and transitively on ${\mathcal B}$ by right multiplication.
Hence the continuous projection $\pi: {\mathcal B}\to {\mathcal F}_3,\;\pi(\phi):=F,$
can be viewed as a principal $O(3)$ bundle. For the local triviality of this bundle see Appendix \ref{LT}.

Next we note that $SO(3)$ is a normal subgroup of $O(3)$ such that $O(3)/SO(3)\cong {\mathbb Z}_2$.
By factorizing over  $SO(3)$ we thus obtain the principal  ${\mathbb Z}_2-$ bundle
\begin{equation}\label{defZ2}
 \Pi: {\mathcal B}/SO(3) \to {\mathcal F}_3
 \;.
\end{equation}
A global section of this bundle will be called a ``quantum orientation"
on $K({\mathcal H})$.
It can be shown that there exist exactly two global sections of (\ref{defZ2}) (see Appendix \ref{GS}), and
the additional structure of QT (additional to the statistical structure) consists in
specifying one of the two possible quantum orientations.

The above definitions can be generalized to a general convex set $K$ to a large extent; see \cite{AHS80},
and result in a ${\mathbb Z}_2$-bundle $\Psi: {\mathcal B}/SO(3) \rightarrow {\mathcal F}_3$.
The convex set $K$ is said to be {\it orientable} iff  $\Psi$ is globally trivial;
and in this case a continuous cross section of $\Psi$ is called an {\it orientation} of $K$.
It can be shown that if $K$ is the state space of a JB-algebra $A$, then $A$ is the
self-adjoint par of a $C^\ast$-algebra iff $A$ has the $3$-ball property and is orientable,
see theorem 8.4 of \cite{AHS80}, where the terms used above are defined in more detail.
In this case
there exists a $1:1$-correspondence between $C^\ast$-structures on $A \oplus {\sf i}A$
and orientations on $K$.

%%%%%%%%%%%%%%%%%%%%%%%%%%%%%%%%%%%%%%%%%%%%%%%%%%%%%%%%%%%%%%%%%%%%%%%%%%%%%%%%%%%%%%%%%%%%%%%%%%%%%%%%%%%%%%%%%%%%%%%%%%%%%%%
\section{Noether structures I}\label{sec:NSI}
%%%%%%%%%%%%%%%%%%%%%%%%%%%%%%%%%%%%%%%%%%%%%%%%%%%%%%%%%%%%%%%%%%%%%%%%%%%%%%%%%%%%%%%%%%%%%%%%%%%%%%%%%%%%%%%%%%%%%%%%%%%%%%%

In the case of statistical structures in the standard realization,
the automorphisms of these structures form a Lie group the identity component
of which can be identified with $U(N)/U(1)$.
To clarify this, first note that the automorphisms generated by
anti-unitary operators do not belong to the connected component of the identity
and hence do not contribute to its Lie algebra.
The reason for the quotient formation is that the
operation of unitary operators $U$ on statistical operators (or, analogously,
on effects) is given by $W\mapsto U W U^\ast$ and therefore constant phase factors
$\exp({\sf i}\alpha)\in U(1)$ act as identity. The Lie algebra ${\mathcal L}$
of the automorphism group can therefore be identified with the real
$N^2-1$-dimensional space of anti-Hermitian operators with trace $0$ and the
commutator as the Lie product. ${\mathcal L}$ is thus isomorphic to $su(N)$ and, as a vector space,
to the space $B_0({\mathcal H})$ of Hermitian operators with vanishing trace.

One possible physical interpretation of $B_0({\mathcal H})$ is given
by the set of (trace-free) observables.
(For the generalized concept of observables in Ludwig's approach, see below.)
Thus, in QT there is an isomorphism between trace-free observables
and the generators of one-parameter automorphism groups.
An example of this is the connection between the energy observable $H$ (Hamilton operator)
and the time evolution $U_t=\exp(-\frac{\sf i}{\hbar}\,H t)$ of a system,
which is formulated in the Schr\"odinger equation
Here, a specific orientation of the time axis (``arrow of time") is assumed,
and the time parameter $t$ used above reflects this orientation
in that it increases from the past toward the future.
Other examples are momentum observables and spatial translation (which, however,
requires an infinite-dimensional Hilbert space) as well as angular momentum observables
(or, in the finite-dimensional case, spin observables) and the representation of spatial rotations.
Based on Noether's theorem, which formulates a connection between symmetries and conserved quantities,
we will refer to the aforementioned isomorphism as a ``Noether structure."
In classical mechanics, the analogous Noether structure is given by $H \mapsto X_H$
with $\omega(X_H,.)=-dH$ and the symplectic $2$-form $\omega$, where $H$ is the Hamiltonian and
$X_H$ is called the ``Hamiltonian vector field" defined on the phase space, see, for example, \cite{A78}.

The idea of the Noether structure in QT can be found in the works of Alfsen and Shultz
\cite{AS98a,AS98b,AS12a,AS12b} under the name ``dynamical correspondence,"
since the authors primarily had in mind the above-mentioned example of the correspondence
between energy observables and time evolution.
We will briefly review the corresponding definitions, referring to the cited literature for details.
The framework for dynamical correspondences is the theory of unital JB-algebras $A$, Jordan algebras
with Jordan product $a \circ b$ and unit $1$  that are Banach spaces.
A bounded linear operator $\delta$ on $A$ is called an ``order derivation" if
$\tau \mapsto \exp(\tau \delta)$ is a one-parameter group of order isomorphisms.
Special cases are ``self-adjoint" order derivations of the form $\delta_b (a) = b \circ a$ for some
$b\in A$ and ``skew" order derivations satisfying $\delta(1)=0$.
A ``dynamical correspondence" is a
linear map $\psi: a \mapsto \psi_a$ from $A$ into the set of skew order
derivations on $A$, which satisfies the requirements:
\begin{eqnarray}
\label{requirement1}
  [\psi_a, \psi_b] &=&-[\delta_a, \delta_b]\quad \mbox{for all } a,b\in A\;,\\
  \label{requirement2}
  \psi_a(a) &=& 0\quad \mbox{for all } a\in A
  \;.
\end{eqnarray}
It can be shown that there exists a $1:1$ correspondence between dynamical correspondences $\psi$
and Jordan-compatible $C^\ast$-products on $A\oplus{\sf i}A$ such that $\psi_a(b)=\frac{\sf i}{2}(ab -ba)$,
see theorem 1 in \cite{AS98b}. Together with the cited theorem in section \ref{sec:OSS} this gives
also a $1:1$-correspondence between orientations on $K$ and dynamical correspondences on $A$.

%%%%%%%%%%%%%%%%%%%%%%%%%%%%%%%%%
\section{Generalized observables}\label{sec:GO}
%%%%%%%%%%%%%%%%%%%%%%%%%%%%%%%%%%%%%%%%%%%%%%%%%%%%%%%%%%%%%%%%%%%%%%%%%%%%%%%%%%%%%%%%%%%%%%%%%%%%%%%%%%%%%%%%%%%%%%%%%%%%%%%

We will outline how the concept of Noether structure can be incorporated into Ludwig's reconstruction of QT.
First, let us consider the concept of observables.

In the usual formalism of QT,
self-adjoint operators are physically interpreted as observables. In the finite-dimensional case
these are Hermitian operators $A$ with the spectral representation $A=\sum_{n} a_n P_n$.
Equivalent to specifying $A$ is therefore the definition of the possible measuring values $a_n$
together with the projectors $P_n$ that satisfy the constraint $\sum_{n} P_n = {\mathbbm 1}$.
The probability that the measuring value $a_n$ will be obtained is then given by
$\mbox{Tr }W P_n$ with the statistical operator $W$. Here, the projector $P_n$ plays the role of an ``effect."
It therefore makes sense to generalize the concept of an observable in such a way that
(in the finite-dimensional case) an observable is given by a finite sequence of measuring values $a_n, n=1,\ldots,M,$
and by a corresponding set of effects $F_n, n=1,\ldots,M,$ that satisfy the constraint $\sum_{n=1}^M F_n={\mathbbm 1}$.
In the special case where all $F_n$ are projectors, the observable will be called a ``decision observable".
In the general case we can define a Hermitian operator $A_1$ by $A_1=\sum_n a_n F_n$, but in general,
$A_1$ does not allow for the reconstruction of the observable $A$, except in the case of decision observables. $A_1$ is
rather considered as the ``first moment" of the observable $A$. This yields another interpretation
of the space $B({\mathcal H})$ of Hermitean operators as the space of first moments of (generalized) observables,
which is suited for the consideration of Noether structures.\\

For the purposes of this paper, it seems appropriate to extend the concept of generalized observables
so that it can be applied to the case of dimensional quantities.
In doing so, we draw on the proposal in \cite{S25} for the formal treatment of physical dimensions.
There the ``range of a physical quantity" ${\mathcal R}$ was defined as a one-dimensional, real, ordered
vector space and ${\mathcal R}_{\ge 0} \subset {\mathcal R}$ the subset of non-negative values.
A physical unit is a positive basis vector $r_0\in {\mathcal R}_{\ge 0}$,
such that any value $r\in {\mathcal R}$ of the corresponding quantity  can be represented by a real
number via $r=\xi\,r_0,\; \xi\in{\mathbb R}$.

Thus we assume a generalized observable $A$ with values in  ${\mathcal R}$  to be given
by a sequence of effects $F_n,\; n=1,\ldots , M$ and a corresponding sequence
of measuring values $a_n\in {\mathcal R},\; n=1,\ldots,M$. The first moment of $A$ will be given by
\begin{equation}\label{1stmom}
 A_1= \sum_{n=1}^{M} a_n\, F_n =\sum_{n=1}^{M} \xi_n\,r_0\, F_n 
 =\left(\sum_{n=1}^{M} \xi_n\, F_n \right)\otimes r_0 \in B({\mathcal H})\otimes{\mathcal R}
 \;,
\end{equation}
where $r_0\in {\mathcal R}_{\ge 0}$ is some physical unit and $a_n = \xi_n\,r_0$ the corresponding decomposition
of the $n$-th measuring value.

It will be in order to reformulate the assumptions on the future-directed time evolution of quantum states
taking into account physical dimensions.
This will be of the form
\begin{equation}\label{timeevolution}
 W\in K({\mathcal H}) \mapsto U_t\,W \, U_t^\ast=: \exp(\underbrace{- {\sf i} \textsf{H}\,t}_{\in {\mathcal L}})\,W
 \;,
\end{equation}
where we have assumed that $t>0$ corresponds to the future time evolution and, as in Section \ref{sec:NSI},
${\mathcal L}\cong {\sf i}B_0({\mathcal H})$ denotes the Lie algebra of the automorphism group of the quantum system which
we consider to be dimensionless. If the value $t$ in (\ref{timeevolution}) is not just a real parameter
but a physical quantity of dimension ``time", say, $t\in{\mathcal T}$, then the remaining factor $- {\sf i} \textsf{H}$
in  (\ref{timeevolution}) must be an element of ${\mathcal L}\otimes {\mathcal T}^{-1}$, or, using the above
identification, $- {\sf i} \textsf{H}\in  {\sf i}B_0({\mathcal H}) \otimes {\mathcal T}^{-1}$.
The range ${\mathcal T}^{-1}$ of the quantity ``inverse time" can be identified with the dual space
 ${\mathcal T}^\ast$, see \cite{S25}.

%%%%%%%%%%%%%%%%%%%%%%%%%%%%%%%%%%%%%%%%%%%%%%%%%%%%%%%%%%%%%%%%%%%%%%%%%%%%%%%%%%%%%%%%%%%%%%%%%%%%%%%%%%%%%%%%%%%%%%%%%%%%%%%
\section{Noether structures II}\label{sec:NSII}
%%%%%%%%%%%%%%%%%%%%%%%%%%%%%%%%%%%%%%%%%%%%%%%%%%%%%%%%%%%%%%%%%%%%%%%%%%%%%%%%%%%%%%%%%%%%%%%%%%%%%%%%%%%%%%%%%%%%%%%%%%%%%%%
The Noether structure is an axiomatization of the double role of, say, Hamiltonians as energy observable
and generators of time evolution, and analogously for the other examples mentioned above.
If we stick with the example of time evolution  and keep in mind the considerations
regarding physical dimensions from the previous section, the Noether structure would be represented by a linear map
\begin{equation}\label{Psi}
 \Psi: B({\mathcal H}) \otimes {\mathcal E} \rightarrow {\sf i}\, B_0({\mathcal H}) \otimes {\mathcal T}^{-1}
 \;,
\end{equation}
where ${\mathcal E}$ denotes the range of energies.
It can be factored according to
\begin{eqnarray}
\label{Psi1}
 \Psi &=& \psi \otimes \hbar^{-1},\quad \mbox{where} \\
 \label{Psi2}
 \psi &:&  B({\mathcal H}) \rightarrow {\sf i}\, B_0({\mathcal H})) \quad \mbox{and}\\
 \label{Psi3}
 \hbar^{-1}&:& {\mathcal E}   \rightarrow {\mathcal T}^{-1}
 \;.
\end{eqnarray}
Here we have already denoted the conversion factor by $\hbar^{-1}$  thereby anticipating the desired result.
However, it is clear that the factorization $\Psi= \psi \otimes \hbar^{-1}$ is only unique
up to a factor in $\psi$ and the inverse factor in $\hbar^{-1}$. Therefore, we can only
fix the value of $\hbar^{-1}$ if we simultaneously fix the absolute value of $\psi$, as it has been done
in the approach of Alfsen and Shultz, see (\ref{requirement1}).
To this end we do not want to invoke the theory of Jordan algebras and rather use the fact
that the Lie algebra ${\mathcal L}\cong su(N)$
has a natural Euclidean structure essentially given by its Killing form
$\beta(X,Y)=- 2 N \mbox{Tr}(X\,Y)$, see, e.~g., \cite {H72}.

In what follows we only consider the ``dimensionless" variant of the Noeter structure and  define
\begin{defi}\label{D1}
A {\it Noether structure} is a linear map $\psi: B({\mathcal H})\to {\mathcal L}$ subject to the following requirements
\begin{eqnarray}
\label{cond1}
  \psi(X) &=& 0 \Leftrightarrow X= \alpha {\mathbbm 1}  \quad \mbox{for all }X \in B \mbox{ and some } \alpha\in {\mathbb R} \\
  \label{cond2}
\exp(t \psi(X))(X) &=& X \quad \mbox{for all } X\in B \mbox{ and } t\in {\mathbb R}\\
 \label{cond3} \beta(\psi(X),\psi(Y)) &=&- 2 N \mbox{Tr}(X\,Y)\quad \mbox{for all}\; X, Y\in  B({\mathcal H})
\;.
\end{eqnarray}
\end{defi}
The first condition (\ref{cond1}) expresses the fact that the phase factors operate as the identity
on ensembles and effects. The second condition corresponds to (\ref{requirement2}) and says that
the one-parameter group generated by $\psi(X)$ leaves $X$ fixed. Finally, the third condition replaces
(\ref{requirement1}), cf.~ the discussion above.
Then we can the prove, see \ref{AP2}, the following
%%%%%%%%%%%%%%%%%%%%%%%%%%%%%%%%%%%%%%%%%%%%%%%%%%%%%%%%%%%%%%%%%%%%%%%%%%%%%%%%%%%%%%%%%%%%%%%%%%%
\begin{prop}\label{P2}
Consider a statistical structure in the standard representation
$\langle  K({\mathcal H}), L({\mathcal H}), \mu({\mathcal H})\rangle$ and a Noether structure
$\psi: B({\mathcal H}) \to {\mathcal L}$, where ${\mathcal L}$
is identified with the space ${\sf i}B_0({\mathcal H})$ of anti-hermitean operators with vanishing trace acting on ${\mathcal H}$.
Then $\psi$ is of the form $\psi(X) = \pm{\sf i} X $ for all $X\in  B({\mathcal H})$.
\end{prop}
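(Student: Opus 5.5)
The plan is to reduce the three conditions of Definition \ref{D1} to commutator and orthogonality statements, and then show that $\psi$ acts as a scalar on $B_0({\mathcal H})$. Throughout, the Lie algebra element $\psi(X)$ acts on $B({\mathcal H})$ by conjugation, $\exp(t\psi(X))(Y)=e^{t\psi(X)}Ye^{-t\psi(X)}$.

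Two caveats about the conditions as written. First, (\ref{cond3}) cannot hold literally for $X=Y={\mathbbm 1}$, since $\psi({\mathbbm 1})=0$ by (\ref{cond1}) while $\mbox{Tr}({\mathbbm 1})=N$. Second, for $\psi(X)=\pm{\sf i}X$ one gets $\beta(\psi(X),\psi(Y))=+2N\,\mbox{Tr}(XY)$, so the sign in (\ref{cond3}) as printed excludes the asserted solution. I will therefore read (\ref{cond3}) in the sense the statement needs: $\mbox{Tr}(\psi(X)\psi(Y))=-\mbox{Tr}(XY)$ for traceless $X,Y$, i.e.\ $\psi$ is an isometry on $B_0({\mathcal H})$ up to the factor ${\sf i}$.

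Write $\psi(X)={\sf i}\varphi(X)$, where $\varphi:B({\mathcal H})\to B_0({\mathcal H})$ is real linear and $\varphi({\mathbbm 1})=0$. Differentiating (\ref{cond2}) at $t=0$ gives $[\varphi(X),X]=0$ for all $X$. The goal is now to prove $\varphi=a\,\mathrm{id}$ on $B_0({\mathcal H})$ for some real $a$. Once this is done, the reinterpreted (\ref{cond3}) gives $a^2=1$. Since $\psi({\mathbbm 1})=0$, it follows that $\psi(X)=\pm{\sf i}X$ for all $X\in B_0({\mathcal H})$. For general $X$, (\ref{cond1}) forces $\psi$ to vanish on multiples of ${\mathbbm 1}$, so on all of $B({\mathcal H})$ we get $\psi(X)=\pm{\sf i}(X-\frac{1}{N}\mbox{Tr}(X){\mathbbm 1})$. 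This is the statement, modulo the trace part that must be dropped because $\psi$ takes values in ${\sf i}B_0({\mathcal H})$.

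\emph{Step 1 (maximal abelian subalgebras are invariant).} Fix an orthonormal basis and let ${\mathcal D}$ be the real span of the corresponding diagonal projectors. If $X\in{\mathcal D}$ has pairwise distinct eigenvalues, then $[\varphi(X),X]=0$ forces $\varphi(X)$ into the commutant of $X$, which is ${\mathcal D}$. Such $X$ are dense in ${\mathcal D}$, and $\varphi$ is linear and hence continuous, so $\varphi({\mathcal D})\subset{\mathcal D}\cap B_0$. I expect this step to be the main point of the proof, because it converts the pointwise condition $[\varphi(X),X]=0$ into a structural property of $\varphi$.

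\emph{Step 2 (projectors are eigenvectors).} Let $P$ be a projector of rank $k$ with $0<k<N$. Then $\varphi(P)$ lies in every ${\mathcal D}$ containing $P$, namely those whose bases are adapted to $\mbox{ran}P\oplus(\mbox{ran}P)^\perp$. The intersection of all such ${\mathcal D}$ is $\mbox{span}\{P,{\mathbbm 1}\}$, because one may rotate the basis freely inside each of the two subspaces. Tracelessness then forces $\varphi(P)=a(P)\,(P-\frac{k}{N}{\mathbbm 1})$ for some real $a(P)$.

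\emph{Step 3 (the eigenvalue is constant on each ${\mathcal D}$).} Let $P_1,\dots,P_N$ be the minimal projectors of ${\mathcal D}$. For $N\ge3$ and $i\neq j$, apply Step 2 to the rank-one projectors $P_i,P_j$ and to the rank-two projector $P_i+P_j$. Comparing the coefficients of $P_i$ and $P_j$ in $\varphi(P_i+P_j)=\varphi(P_i)+\varphi(P_j)$ gives $a(P_i)=a(P_j)$. Hence $\varphi=a_{\mathcal D}\,\mathrm{id}$ on ${\mathcal D}\cap B_0$, which is spanned by the $P_i-\frac1N{\mathbbm 1}$.

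\emph{Step 4 (the constant is global).} Any two maximal abelian subalgebras can be joined by a finite chain in which consecutive members share a rank-one projector. For example, pass through a basis containing one vector from each of the two given bases, and then change one vector at a time. Consecutive members therefore share a nonzero traceless element $P-\frac1N{\mathbbm 1}$, so they have the same constant $a_{\mathcal D}$. Since every $X\in B_0({\mathcal H})$ lies in some ${\mathcal D}$, this gives $\varphi=a\,\mathrm{id}$ on $B_0({\mathcal H})$.

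The case $N=2$ is simpler. Every nonzero traceless $X$ is a multiple of $P-\frac12{\mathbbm 1}$ for some rank-one projector $P$, so by Step 2 every vector of $B_0$ is an eigenvector of $\varphi$. A linear map with this property is a scalar.
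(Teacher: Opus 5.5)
Your argument is correct, and both of your caveats are justified. As printed, (\ref{cond3}) cannot hold for $\psi(X)=\pm{\sf i}X$: it has the wrong sign, and it fails at $X=Y={\mathbbm 1}$. Also, only the traceless part is determined, so the accurate conclusion is $\psi(X)=\pm{\sf i}\bigl(X-\frac1N\mbox{Tr}(X){\mathbbm 1}\bigr)$. The paper's proof passes over both points.

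Your route differs from the paper's.

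The paper applies the commutant argument, through a limit of simple-spectrum operators, to \textit{differences} $P_1-P_2$ of orthogonal rank-one projectors. This leaves two unknowns: $\Psi(P_1-P_2)=\lambda(P_1-P_2)+\mu\,\sigma_{12}$. It then uses the sphere of the face generated by $P_1,P_2$ to make $\lambda$ constant on that face. It tries to remove $\mu$ with linear relations among three mutually orthogonal projectors. It finishes with a spanning set of $B_0({\mathcal H})$ (generalized Pauli matrices and $P_1-P_k$) that lies in the faces ${\mathcal F}_{k\ell}$ of one fixed basis, so no change of basis is needed.

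You apply the commutant argument to \textit{single} projectors. This gives the one-parameter form $\varphi(P)=a(P)(P-\frac kN{\mathbbm 1})$ directly, and you equalize the constants through the rank-two projector $P_i+P_j$. This is shorter, needs no coordinates on the face spheres, and covers $N=3$ uniformly.

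It is also sturdier exactly where the paper is thin. Once $\Psi(P)=a(P)(P-\frac1N{\mathbbm 1})$ (your Step 2, with $\Psi=\varphi$), one has $\lambda_{ij}=\frac12(a_i+a_j)$ and $\mu_{ij}=a_j-a_i$. So the relations (\ref{mu12})--(\ref{mu13}) hold identically. By themselves they cannot make $\mu_{12}$ independent of $\mathbf e$, because the faces carrying $\lambda_{13}$ and $\lambda_{23}$ move with $\mathbf e$. Applying the commutation condition to the rank-two projector, as you do, is the extra input that forces this term to vanish. The price is your globalization in Step 4, which the paper's face-based spanning set avoids.

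There is one slip in Step 4. A basis containing one vector from each of two given bases need not exist: for mutually unbiased bases, no $e_i$ is parallel or orthogonal to any $f_j$. The repair for $N\ge 3$ is short:
\begin{itemize}
\item Step 2 defines $a(P)$ intrinsically for each rank-one $P$.
\item Step 3 gives $a(P)=a(Q)$ whenever $P\perp Q$.
\item Any two rank-one projectors $P,Q$ have a common orthogonal rank-one projector $R$, so $a(P)=a(R)=a(Q)$.
\end{itemize}
Since the elements $P-\frac1N{\mathbbm 1}$ span $B_0({\mathcal H})$, this gives $\varphi=a\,\mathrm{id}$ on $B_0({\mathcal H})$.
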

%%%%%%%%%%%%%%%%%%%%%%%%%%%%%%%%%%%%%%%%%%%%%%%%%%%%%%%%%%%%%%%%%%%%%%%%%%%%%%%%%%%%%%%%%%%%%%%%%%%

This proposition shows that there exist exactly two different Noether structures, analogously to
two different quantum orientations. Empirical evidence suggests the choice
$\psi(X) = -{\sf i}X$ or, more explicitly,
\begin{equation}\label{Noetherchoice}
 \psi(X) (Y) = - {\sf i} [X,Y]\quad \mbox{for all}\; X,Y \in  B({\mathcal H})
 \;.
\end{equation}

Finally, we note that extending the statistical structure $\langle K,L,\mu\rangle$ to a
``statistical Noether structure" $\langle K,L,\mu,\psi\rangle$ reduces the
automorphism group of this structure to $SU(N)$, assuming that Axiom \ref{AX} holds.
Anti-unitary transformations transform $\psi$ into $-\psi$
and are therefore to be excluded as automorphisms.
Similarly the choice of Hilbert space
${\mathcal H}$ is restricted only up to unitary isomorphisms.
This means that the tensor product of Hilbert spaces is also unique
up to unitary isomorphisms, and the composition of systems can be handled
without the problems described in section $\ref{sec:CS}$.

Or, to put it another way, using the Noether structure $\psi$, we can,
as in the works of Alfsen and Shultz, define a $C^\ast$-product in
$B({\mathcal H})\oplus {\sf i}B({\mathcal H})$ 
and thus uniquely define ``positive" elements
of $B({\mathcal H}_1)\otimes B({\mathcal H}_2)$ as squares
w.~r.~t.~this product. This yields the unique state space of the composite system
$K_{12}\subset B({\mathcal H}_1) \otimes B({\mathcal H}_1)$
we have been looking for in section \ref{sec:CS}.

%%%%%%%%%%%%%%%%%%%%%%%%%%%%%%%%%%%%%%%%%%%%%%%%%%%%%%%%%%%%%%%%%%%%%%%%%%%%%%%%%%%%%%%%%%%%%%%%%%%%%%%%%%%%%%%%%%%%%%%%%%%%%%%
\section{Completely positive transformations}\label{sec:CPT}
%%%%%%%%%%%%%%%%%%%%%%%%%%%%%%%%%%%%%%%%%%%%%%%%%%%%%%%%%%%%%%%%%%%%%%%%%%%%%%%%%%%%%%%%%%%%%%%%%%%%%%%%%%%%%%%%%%%%%%%%%%%%%%%

%%%%%%%%%%%%%%%%%%%%%%%%%%%%%%%%%%%%%%%%%%%%%%%%%%%%%%%%%%%%%%%%%%%%%%%%%%%%%%%%%%%%%%%%%%%%%%%%%%%%%%%%%%%%%%%%%%%%%%%%%%%%%%%
\subsection{Generalities}\label{sec:CPT1}
%%%%%%%%%%%%%%%%%%%%%%%%%%%%%%%%%%%%%%%%%%%%%%%%%%%%%%%%%%%%%%%%%%%%%%%%%%%%%%%%%%%%%%%%%%%%%%%%%%%%%%%%%%%%%%%%%%%%%%%%%%%%%%%

Further evidence that the statistical structure alone is insufficient to capture
essential features of the QT can be found in the treatment of general
state transformations, which go beyond the time evolution of closed systems
and also include state transformations caused by measurements.
Affine transformations $T: K \to K$ would be suitable for this purpose.
However, it has been found that this term is too general
and that all physical examples involve so-called ``completely positive transformations."
Their definition involves tensor products or multiplication of operators, see, e.~g., \cite{K83},
and hence cannot be given on the basis of purely statistical structures.

Recall the following definitions adapted to the finite-dimensional case:
\begin{defi} \label{defcp}
\begin{enumerate}
  \item Let ${\mathcal H}$ be an $N$-dimensional Hilbert space. Then a linear map
    ${T}: B({\mathcal H})  \to B({\mathcal H})$ is called ``monotone" iff
    it maps $B^+({\mathcal H})$ into $B^+({\mathcal H})$.
  \item A linear map $T:B({\mathcal H}) \to B({\mathcal H})$ is called ``completely positive"
    iff it is monotone and for all $n\in {\mathbb N}$  the map
    $T \otimes {\mathbbm 1}_{B\left({\mathbb C}^n\right)}: B({\mathcal H}) \otimes B({\mathbb C}^n) \to  B({\mathcal H}) \otimes  B({\mathbb C}^n)$
   is also monotone.
\end{enumerate}
\end{defi}

It can be shown that an equivalent characterization of ``completely positive maps" $T$ is given by
the possibility to represent $T$ in the form
\begin{equation}\label{Kraus}
  T(X) = \sum_{i=1}^{n} A_i\,X \,A_i^\ast  \quad \mbox{for all } X\in B({\mathcal H})
  \;,
\end{equation}
where the ``Kraus operators" $A_i$ are complex linear operators $A_i: {\mathcal H}\to  {\mathcal H}$
and $1\le n \le N^2$, see, e.~g., \cite{NC00}.
Let the space ${\mathcal A}({\mathcal H})$ of all linear operators $X:{\mathcal H}\to {\mathcal H}$ be identified with
$B({\mathcal H})\oplus {\sf i}B({\mathcal H})$, then it is obvious that
a completely positive map $T$ can be defined on ${\mathcal A}({\mathcal H})$ by means of complex-linear extension,
and that this extension has the same Kraus operator sum representation as (\ref{Kraus}).
Linear, monotonic mappings ${T}: B({\mathcal H})  \to B({\mathcal H})$ are also referred to as ``operations."
This terminology dates back to works such as \cite{D69}, whereas in more recent literature, the
term ``operation" is often used exclusively for completely positive mappings.
It is convenient not to include the property of trace-preserving into this definition,
which implies that the operation $T$ in the case of (\ref{Kraus}) is written as a sum of $n$ other operations.

Further consider the following definition:
\begin{defi}\label{D5}
An operation  $T:B({\mathcal H})\to B({\mathcal H})$ is called ``pure" iff
$T$ maps positive rank $1$ operators onto  non-negative rank $1$ operators.
Equivalently, we may define ``pure" operations $T$ by the condition that for every one-dimensional
projector $P_\varphi$, with $\varphi\in{\mathcal H}$ and $\|\varphi\|=1$, $T \left(P_\varphi\right)$ will be of the form
$T \left(P_\varphi\right)= \lambda P_\psi$ with $\psi\in{\mathcal H},\;\|\psi\|=1$ and $\lambda\ge 0$.
\end{defi}

Pure operations can be characterized by the following theorem, which also holds in the case of
infinite-dimensional Hilbert spaces, see \cite{D69}, theorem 3.1:
\begin{theorem}\label{TPO}
  Let $T:B({\mathcal H})\to B({\mathcal H})$ be a pure operation. Then
  \begin{enumerate}
    \item[(i)] there exists a complex linear operator $A: {\mathcal H} \to {\mathcal H}$ such that
    $T(X)=A X A^\ast$ for all $X\in B({\mathcal H})$, or
    \item[(ii)] there exists a complex anti-linear operator $B: {\mathcal H} \to {\mathcal H}$ such that
    $T(X)=B X B^\ast$ for all $X\in B({\mathcal H})$, or
    \item[(iii)] there exists a $W\in B^+({\mathcal H})$ and a $\psi\in{\mathcal H},\;\|\psi\|=1,$ such that
    $T(X)= \mbox{Tr}(W X)\,P_\psi$ for all $X\in B({\mathcal H})$.
  \end{enumerate}
\end{theorem}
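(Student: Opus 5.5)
The plan is to turn the pure operation $T$ into a map on rays of ${\mathcal H}$ and then apply a Wigner-type rigidity argument (cf.\ Proposition \ref{PU} and \cite{W59}). Let $N_T=\{\varphi : T(P_\varphi)=0\}$. Since $T$ is linear and monotone, if $T(P_\varphi)=0$ then $T(X)=0$ for every $X\ge 0$ with support in ${\mathbb C}\varphi$. For arbitrary $\varphi,\chi$ we have $0\le P_{\varphi+\chi}+P_{\varphi-\chi}$ and $P_{\varphi+\chi}+P_{\varphi-\chi}\le 2(P_\varphi+P_\chi)$ in the unnormalized sense. By positivity this shows that $N_T$ is a linear subspace. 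Put ${\mathcal K}=N_T^\perp$. For $\varphi\notin N_T$ define the ray $[\psi(\varphi)]$ by $T(P_\varphi)=\lambda(\varphi)P_{\psi(\varphi)}$ with $\lambda(\varphi)>0$.

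The key step is the behaviour of $T$ on the $3$-ball faces. Take a two-dimensional subspace ${\mathcal H}_2$ and its face $F\cong E^3$ in $K({\mathcal H})$. $T$ is affine on $F$ up to the positive scalar $\lambda$, and it maps every extreme point of $F$ to a multiple of a pure state. Every $W\in F$ is a sum of two pure states. Hence $T(W)$ is positive and a sum of two rank-$\le 1$ operators, so its range lies in the span of two vectors. Applying this to all pairs shows that $T(F)$ lies in the cone over a face of dimension $\le 3$, i.e.\ over $K({\mathcal M})$ with $\dim{\mathcal M}\le 2$, or over a single ray. Whether this collapse happens is the dichotomy between case (iii) and cases (i)/(ii). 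If for some (hence, by a connectedness/overlap argument, every) pair the image rays coincide, then all image rays are one ray $\psi$. Then $T(X)=\ell(X)P_\psi$ for a positive linear functional $\ell$, and self-duality of $B^+({\mathcal H})$ gives $\ell=\mbox{Tr}(W\,\cdot)$ with $W\ge 0$, which is case (iii).

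Otherwise $T$ maps each ball $\{\lambda P_\varphi\}$ restricted to ${\mathcal H}_2\cap{\mathcal K}$ onto a ball of the same dimension. It does so by a projective map of the pure-state sphere $S^2$ that sends great-circle mixtures linearly, so on rays it preserves the linear structure. A nonconstant map of this kind on ${\mathbb C}P^1$ induced by a linear positive map of $2\times 2$ Hermitian matrices preserving rank one has to be a conformal M\"obius transformation, either orientation-preserving or reversing. This is the classical fact that rank-one preservers on $M_2$ are congruences $X\mapsto AXA^\ast$ or $X\mapsto AX^tA^\ast$. I would prove it by checking that the induced map on $S^2$ sends circles to circles and then using the fundamental theorem of projective geometry. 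Gluing these local descriptions over all two-dimensional subspaces of ${\mathcal K}$ yields a map of rays of ${\mathcal K}$ into rays that preserves collinearity of lines (2-planes). The fundamental theorem of projective geometry then gives a semilinear $A:{\mathcal K}\to{\mathcal H}$. Since the planar restrictions are already linear or antilinear, the field automorphism must be continuous, hence the identity or conjugation. Extending $A$ by $0$ on $N_T$ and fixing the scale by $\lambda$ gives $T(X)=AXA^\ast$ or $BXB^\ast$, which are cases (i) and (ii). The case $\dim{\mathcal K}\le 1$ falls under (iii).

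The hardest step will be the consistency of the linear/antilinear choice across different planes, together with matching the scalars $\lambda(\varphi)$ with $|A\varphi|^2$. For the choice I would argue by connectedness of $\mbox{Gr}(2,{\mathcal K})$: the orientation type is a continuous ${\mathbb Z}_2$-valued function, in the spirit of the quantum orientation of Section \ref{sec:OSS}, and hence constant. Alternatively, I would compare two planes sharing a ray. For the scalars, linearity of $T$ on the sum $P_{\varphi+\chi}+P_{\varphi-\chi}=2(P_\varphi+P_\chi)$ forces $\lambda$ to be a quadratic form, and hence equal to $|A\varphi|^2$ after rescaling $A$. For $\dim{\mathcal H}=2$ the projective theorem is unavailable, and there one relies directly on the $M_2$ rank-one preserver computation.
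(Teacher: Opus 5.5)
The paper does not prove this theorem. It quotes it from Davies \cite{D69}, Theorem~3.1, which also covers infinite-dimensional $\mathcal{H}$. Your proposal is therefore a genuinely different, self-contained finite-dimensional route. You reduce to the $3$-ball faces, classify a non-degenerate pure operation on one face as $X\mapsto AXA^\ast$ or $X\mapsto AX^tA^\ast$ (a M\"obius or anti-M\"obius map of the Bloch sphere), and glue the planar pieces with the fundamental theorem of projective geometry (FTPG). Compared with the citation, this costs length and two classical rigidity theorems (circle-preserving maps of $S^2$, and the FTPG), and as written it is confined to finite dimensions. In return it fits the paper's geometry well: your face-wise alternative ``linear versus antilinear'' is exactly the sign of $\mbox{Det}$ that the paper later computes via $\Lambda$ in the proof of Theorem~\ref{Tpure}.

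The outline is sound, but some steps are not yet arguments as written.

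First, the dichotomy is false on all of $\mathcal{H}$. For $T(X)=AXA^\ast$ with $A=\mbox{diag}(1,1,0)$, the independent vectors $e_1$ and $e_1+e_3$ have the same image ray, while $e_2$ does not. The dichotomy must be stated for planes $\mathcal{H}_2\subset\mathcal{K}$.
\begin{itemize}
\item Inside one such plane, collapse is all-or-nothing by monotonicity. If independent $\varphi,\chi\in\mathcal{H}_2$ share the image ray $\psi$, then $P_\varphi+P_\chi\ge cQ$ with $c>0$ and $Q$ the projector onto $\mathcal{H}_2$. Hence $T(P_\omega)\le T(Q)\propto P_\psi$ for every unit $\omega\in\mathcal{H}_2$. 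The same inequality $T(P_\omega)\le T(Q)$, not ``applying this to all pairs'', is what puts $T(F)$ into one face.
\item In the non-collapsed case $T$ is injective on $B(\mathcal{H}_2)$: a kernel element $aP_u-bP_v$ with $a,b>0$ would force $u,v$ to share an image ray. Your phrase ``onto a ball of the same dimension'' uses this tacitly.
\item Across planes, connectedness of $\mbox{Gr}(2,\mathcal{K})$ does not help. Collapse is a closed condition, and you give no reason for it to be open. What works is an overlap argument that needs the planar classification first. Suppose $\mathcal{H}_a\subset\mathcal{K}$ collapses to $\psi$ and $v\in\mathcal{K}$ has image ray $\neq\psi$; pick independent $a_1,a_2\in\mathcal{H}_a$. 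The plane spanned by $a_2,v$ maps its rays bijectively onto the same $\mathbb{C}P^1$ as the plane spanned by $a_1,v$. So some $a_2+z_0v$ has the same image ray as $a_1+v$, and that ray is $\neq\psi$. The plane spanned by these two vectors then collapses, yet it contains $z_0a_1-a_2\in\mathcal{H}_a$, whose image ray is $\psi$. This is a contradiction.
\end{itemize}

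Second, the scalar step does not follow as stated. That $\mbox{Tr}\,T(P_\varphi)$ is a quadratic form is automatic, and it does not make $\mu(\varphi):=\lambda(\varphi)/\|A\varphi\|^2$ constant. Instead, apply $T$ to $|\varphi+\chi\rangle\langle\varphi+\chi|+|\varphi-\chi\rangle\langle\varphi-\chi|=2|\varphi\rangle\langle\varphi|+2|\chi\rangle\langle\chi|$. Then compare coefficients of the linearly independent operators $|A\varphi\rangle\langle A\varphi|$, $|A\chi\rangle\langle A\chi|$ and $|A\varphi\rangle\langle A\chi|+|A\chi\rangle\langle A\varphi|$. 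This gives $\mu(\varphi)=\mu(\chi)$.

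Third, two smaller points:
\begin{itemize}
\item The ray map is not onto the projective space of $\mathcal{H}$. Either use the FTPG for non-surjective morphisms, or note that each line goes onto a line, so the image is a projective subspace on which the map is a bijective collineation.
\item Extending $A$ by $0$ on $N_T$ needs $T(X)=T(Q_{\mathcal K}XQ_{\mathcal K})$, with $Q_{\mathcal K}$ the projector onto $\mathcal{K}$. This follows from $T(|k+tn\rangle\langle k+tn|)\ge 0$ for all real $t$, $k\in\mathcal{K}$, $n\in N_T$ (and likewise with $n$ replaced by ${\sf i}n$).
\end{itemize}
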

Since we will frequently refer to cases (i) through (iii), we define:
\begin{defi}\label{Dcases}
 A pure operation $T:B({\mathcal H})\to B({\mathcal H})$ is said to be ``of type (i), (ii), or (iii)"
iff it has the form (i), (ii), or (iii), respectively, according to the list of cases of Theorem \ref{TPO}.
\end{defi}

The list of cases (i), (ii), and (iii) is exhaustive but not exclusive,
as suggested by the wording ``either" in \cite{D69}, theorem 3.1.
In the case (i) of this theorem the pure operation is completely positive, since $T(X)=A X A^\ast$
is a special Kraus sum operator representation of $T$. In the case (iii) we may use the spectral
representation $W=\sum_{i=1}^N w_i |i\rangle \langle i|$ of $W\ge 0$ satisfying $w_i\ge 0$ for all $i=1,\ldots,N$
and conclude
\begin{eqnarray}
\label{case3a}
  T(X) &=& \mbox{Tr }(W X)\,P_\phi \\
  \label{case3b}
   &=& \sum_i w_i\, \mbox{Tr} ( |i\rangle \langle i| X)\,  |\phi\rangle \langle \phi|\\
   \label{case3c}
   &=& \sum_i \underbrace{\sqrt{w_i}\, |\phi\rangle \langle i|}_{=:A_i}
   X \underbrace{|i\rangle \langle \phi|\, \sqrt{w_i}}_{=A_i^\ast}\\
   \label{case3d}
   &=& \sum_i A_i\,X\,A_i^\ast
   \;,
\end{eqnarray}
which shows that also in this case $T$ will be completely positive.
The example $T(X) = P_\phi \,X\,P_\phi= \mbox{Tr}(P_\phi X)\,P_\phi$
belongs to both cases, (i) and (iii).
In the case (ii) the pure operation can be both, completely positive or not.
Let $c:{\mathcal H}\to {\mathcal H}$ denote complex conjugation (w.~r.~t.~some orthonormal basis).
Then $T(X) =  c\,P_\phi \,X\,P_\phi\,c^\ast=
c\,|\phi\rangle \langle \phi| X|\phi\rangle \langle \phi| c^\ast=
\mbox{Tr}(P_\phi\,X)\,P_{\overline{\phi}}$
represents an example belonging to all three cases (i), (ii) and (iii).
We summarize our considerations in the following
%%%%%%%%%%%%%%%%%%%%%%%%%%%%%%%%%%%%%%%%%%%%%%%%%%%%%%%%%%%%%%%%%%%%%%%%%%%%%%%%%%%%%%%%%%%%%%%%%%%%%%%%
\begin{prop}\label{PC}
 Every pure operation  of type (i) or (iii) is completely positive.
\end{prop}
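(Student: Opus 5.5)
The plan is to show that each of the two types admits a Kraus representation of the form (\ref{Kraus}). Then we invoke the equivalence, quoted above, between such representations and complete positivity in the sense of Definition \ref{defcp}. To keep the argument self-contained, we will also check directly that any map of the form $T(X)=\sum_i A_i X A_i^\ast$ with complex linear $A_i$ is completely positive. The argument therefore splits into a general step and two case checks.

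\emph{General step.} Let $T(X)=\sum_i A_iXA_i^\ast$. Monotonicity follows from the identity $\langle \chi, A_iXA_i^\ast\chi\rangle=\langle A_i^\ast\chi, X A_i^\ast\chi\rangle\ge 0$ for $X\ge 0$. For $n\in{\mathbb N}$ we compute $T\otimes{\mathbbm 1}_{B({\mathbb C}^n)}$ on elementary tensors $X\otimes Y$ and extend by linearity. This gives $(T\otimes{\mathbbm 1})(Z)=\sum_i (A_i\otimes{\mathbbm 1})Z(A_i\otimes{\mathbbm 1})^\ast$ for all $Z\in B({\mathcal H}\otimes{\mathbb C}^n)$, using the identification $B({\mathcal H})\otimes B({\mathbb C}^n)\cong B({\mathcal H}\otimes{\mathbb C}^n)$ recalled in Section \ref{sec:CS}. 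Since $A_i\otimes{\mathbbm 1}$ is complex linear, the same positivity argument as before shows that this map is monotone.

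\emph{Type (i).} Here $T(X)=AXA^\ast$ with a single complex linear $A$, which is already of the form (\ref{Kraus}) with $n=1$. The general step applies directly.

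\emph{Type (iii).} Here $T(X)=\mbox{Tr}(WX)\,P_\psi$ with $W\ge 0$. We use the spectral decomposition $W=\sum_i w_i|i\rangle\langle i|$ with $w_i\ge 0$ and set $A_i=\sqrt{w_i}\,|\psi\rangle\langle i|$. The computation (\ref{case3a})--(\ref{case3d}) then yields $T(X)=\sum_iA_iXA_i^\ast$, since $\langle i|X|i\rangle=\mbox{Tr}(|i\rangle\langle i|X)$. The only thing to verify is that the $A_i$ are complex linear, which is clear. The general step again gives complete positivity.

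The main, and fairly mild, obstacle is the general step. There one must make sure that $T\otimes{\mathbbm 1}$ is defined on the real tensor product $B({\mathcal H})\otimes B({\mathbb C}^n)$ and agrees with the Kraus form on all of $B({\mathcal H}\otimes{\mathbb C}^n)$, not just on product operators. This holds because the product operators span the real space $B({\mathcal H})\otimes B({\mathbb C}^n)$, which we identify with $B({\mathcal H}\otimes{\mathbb C}^n)$, and both sides are linear. Alternatively, one may simply cite the Kraus characterization from \cite{NC00}.
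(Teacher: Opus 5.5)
Your proof is correct and follows the paper's argument. Case (i) is a one-term Kraus representation, and case (iii) is rewritten via the spectral decomposition of $W$ with Kraus operators $\sqrt{w_i}\,|\psi\rangle\langle i|$, exactly as in (\ref{case3a})--(\ref{case3d}); the only addition is that you verify ``Kraus form $\Rightarrow$ completely positive'' directly, including the identification $B({\mathcal H})\otimes B({\mathbb C}^n)\cong B({\mathcal H}\otimes{\mathbb C}^n)$, instead of citing \cite{NC00}, which is fine.
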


%%%%%%%%%%%%%%%%%%%%%%%%%%%%%%%%%%%%%%%%%%%%%%%%%%%%%%%%%%%%%%%%%%%%%%%%%%%%%%%%%%%%%%%%%%%%%%%%%%%%%%%%

Next we assume a standard realization of a statistical Noether structure
$\langle K({\mathcal H}),L({\mathcal H}),\mu({\mathcal H}),\psi({\mathcal H})\rangle$ and
consider an arbitrary $3$-dimensional face $F\in{\mathcal F}_3$ corresponding to
a two-dimensional subspace ${\mathcal H}_2$. Let $(\varphi,\psi)$ be any orthonormal
basis in ${\mathcal H}_2$, and $B(F)\cong B({\mathcal H}_2)$ be the $4$-dimensional
subspace of $B({\mathcal H})$ spanned by all statistical operators $W\in F$ or,
equivalently, consisting of all Hermitean operators with support included in ${\mathcal H}_2$.
We define the following orthonormal basis in $B(F)$:
\begin{eqnarray}\label{sigma0}
 \sigma_0(F)&=&\frac{1}{\sqrt{2}}\left(|\varphi\rangle \langle \varphi | +|\psi\rangle \langle \psi | \right),\\
 \label{sigma1}
 \sigma_1(F)&=&\frac{1}{\sqrt{2}}\left(|\varphi\rangle \langle \psi | +|\psi\rangle \langle \varphi| \right),\\
 \label{sigma2}
 \sigma_2(F)&=&\frac{{\sf i}}{\sqrt{2}}\left(-|\varphi\rangle \langle \psi | +|\psi\rangle \langle \varphi| \right),\\
 \label{sigma3}
 \sigma_3(F)&=&\frac{1}{\sqrt{2}}\left(|\varphi\rangle \langle \varphi | -|\psi\rangle \langle \psi | \right),
 \end{eqnarray}
analogously to $\sigma_i^\pm(F), \; i=0,1,2,3,$ in (\ref{sigmapm0}-\ref{sigmapm3}).
It will be referred to as the ``standard basis" in $B(F)$.
This gives rise to an
affine isomorphism $\phi:E^3 \to F$ in the following way:
\begin{eqnarray}
\label{defphi1}
  \tau_0(F)=\frac{1}{\sqrt{2}} \, \sigma_0(F),\; \tau_i(F)&=& \tau_0(F) +\frac{1}{\sqrt{2}} \,\sigma_i(F),\quad \mbox{for}\; i=1,2,3, \\
  \label{defphi2}
  \phi(0,0,0) &=& \tau_0(F),\\
   \label{defphi3}
  \phi(1,0,0) &=& \tau_1(F),\\
   \label{defphi4}
  \phi(0,2,0) &=& \tau_2(F),\\
   \label{defphi5}
  \phi(0,0,3) &=& \tau_3(F)
  \;,
\end{eqnarray}
compare (\ref{defphi1} - \ref{defphi5}).
Here we have used the identification of $E^3$ with the unit ball in ${\mathbb R}^3$ and that an affine
isomorphism $\phi:E^3 \to F$ is uniquely determined by the image of four point that affinely generate ${\mathbb R}^3$.
The global section $\Sigma: {\mathcal F}_3\to {\mathcal B}/SO(3) $
of the ${\mathbb Z}_2$-bundle $\Pi:{\mathcal B}(SO(3)\to {\mathcal F}_3$ satisfying
$\Sigma(F) = \left\{\phi\,R\left|\right. R\in SO(3)\right\}$ defines the corresponding quantum
orientation which will be held fixed in what follows.

Next we consider, analogously to the well-known characterization of the determinant:
\begin{defi}\label{Ddet}
  For all $F\in {\mathcal F}_3$ we define a map $\mbox{Det}:B(F)^4 \to {\mathbbm R}$ by the three postulates
  \begin{enumerate}
    \item $\mbox{Det}$ is 4-linear,
    \item $\mbox{Det}$ is alternating, and
    \item $\mbox{Det}(\sigma_0(F), \sigma_1(F),\sigma_2(F),\sigma_3(F))=1$.
  \end{enumerate}
\end{defi}
Recall, that, by definition, an alternating map vanishes if two of its arguments are equal.
By 4-linearity it follows that $\mbox{Det}$ will change its sign if two of its arguments are swapped.
Further, it follows that $\mbox{Det}\left(X_0,X_1,X_2,X_3\right)=0$ if $\left(X_0,X_1,X_2,X_3\right)$ is linearly dependent
in $B(F)$.

\begin{defi}\label{Dop}
  An operation $T: B({\mathcal H}) \to B({\mathcal H})$ is called ``orientation-preserving"
  iff the following holds:
  \begin{enumerate}
    \item For all $F\in {\mathcal F}_3$ there exists an  $F'\in {\mathcal F}_3$ such that
    $T\left( B(F)\right) \subset B(F')$, and
    \item for all $F\in {\mathcal F}_3$ and $X_i\in B(F),\,i=0,1,2,3,$ the inequality
    $\mbox{Det}(X_0,X_1,X_2,X_3)\ge 0$ always implies $\mbox{Det}(T\,X_0,T\,X_1,T\,X_2,T\,X_3)\ge 0$.
  \end{enumerate}
\end{defi}

%%%%%%%%%%%%%%%%%%%%%%%%%%%%%%%%%%%%%%%%%%%%%%%%%%%%%%%%%%%%%%%%%%%%%%%%%%%%%%%%%%%%%%%%%%%%%%%%%%%%%%%%%%%%%%%%%
\begin{prop}\label{Pdet}
 Let $F\in {\mathcal F}_3$. We expand the operators $X_i\in B(F),\,i=0,1,2,3,$ w.~r.~t.~the
 orthonormal basis $(\sigma_0(F), \sigma_1(F),\sigma_2(F),\sigma_3(F))$ according to
 $X_i= \sum_{j=0}^{3}\xi_{ji}\sigma_j(F)$ and consider the $4\times 4-$matrix ${\boldsymbol \xi}$
 formed of the  expansion coefficients $\xi_{ij}$.
 Then the following holds:
 \begin{equation}\label{eqPdet}
 \mbox{Det}\left(X_0,X_1,X_2,X_3\right)=\det {\boldsymbol \xi}
 \;.
 \end{equation}
\end{prop}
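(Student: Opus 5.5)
This is the standard uniqueness argument for the determinant, transferred to $B(F)$ via the basis $(\sigma_0(F),\ldots,\sigma_3(F))$. The plan has two parts. First, show that the right-hand side $D(X_0,X_1,X_2,X_3):=\det{\boldsymbol \xi}$ satisfies the three postulates of Definition \ref{Ddet}. Second, show that these postulates determine a map $B(F)^4\to{\mathbb R}$ uniquely; then $\mbox{Det}=D$. Note that $\xi_{ji}$ is the $j$-th coordinate of $X_i$, so the $i$-th column of ${\boldsymbol \xi}$ is the coordinate vector of $X_i$. Whether one reads ${\boldsymbol \xi}$ as $(\xi_{ji})$ or its transpose is irrelevant, since $\det{\boldsymbol \xi}=\det{\boldsymbol \xi}^T$.

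\emph{Step 1.} The coordinate map $\kappa:B(F)\to{\mathbb R}^4$, $X\mapsto(\mbox{Tr}(\sigma_j(F)X))_{j=0,\ldots,3}$, is a linear isomorphism. This holds because the $\sigma_j(F)$ form an orthonormal basis of the real $4$-dimensional space $B(F)$ with respect to $(X,Y)\mapsto\mbox{Tr}(XY)$. Hence $D(X_0,\ldots,X_3)=\det(\kappa X_0,\ldots,\kappa X_3)$ is the ordinary determinant applied to the columns. It is therefore $4$-linear and alternating, since equal columns give determinant zero. Moreover $\kappa\sigma_j(F)=e_j$, so $D(\sigma_0(F),\ldots,\sigma_3(F))=\det{\mathbbm 1}_4=1$.

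\emph{Step 2 (uniqueness).} Let $\mbox{Det}$ satisfy the postulates. Expand $X_i=\sum_j\xi_{ji}\sigma_j(F)$ and use $4$-linearity to get
$$\mbox{Det}(X_0,\ldots,X_3)=\sum_{j_0,\ldots,j_3}\xi_{j_00}\cdots\xi_{j_33}\,\mbox{Det}(\sigma_{j_0}(F),\ldots,\sigma_{j_3}(F)).$$
Because $\mbox{Det}$ is alternating, every term with a repeated index vanishes, so only terms with $(j_0,\ldots,j_3)$ a permutation $\pi$ of $(0,1,2,3)$ survive. As remarked after Definition \ref{Ddet}, $\mbox{Det}$ changes sign under a transposition of two arguments. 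Decomposing $\pi$ into transpositions then gives $\mbox{Det}(\sigma_{\pi(0)},\ldots,\sigma_{\pi(3)})=\mbox{sgn}(\pi)\cdot 1$. The sum becomes the Leibniz formula $\sum_\pi\mbox{sgn}(\pi)\prod_i\xi_{\pi(i)i}=\det{\boldsymbol \xi}$, which is (\ref{eqPdet}).

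There is no real obstacle here. The only points needing care are the orthonormality of the standard basis (\ref{sigma0})--(\ref{sigma3}), and the fact that sign-change under swaps follows from the alternating property plus linearity. The orthonormality is a short check: for instance $\mbox{Tr}(\sigma_2^2)=\frac{-1}{2}\mbox{Tr}\bigl((-|\varphi\rangle\langle\psi|+|\psi\rangle\langle\varphi|)^2\bigr)=1$, and the mixed traces vanish. The sign-change follows by expanding $0=\mbox{Det}(\ldots,X+Y,\ldots,X+Y,\ldots)$.
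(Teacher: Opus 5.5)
Your proof is correct, and your Step 2 is exactly the paper's argument: expand each $X_i$ by $4$-linearity, discard terms with repeated indices using the alternating property, and use $\mbox{Det}(\sigma_0(F),\sigma_1(F),\sigma_2(F),\sigma_3(F))=1$ to turn the surviving sum into the Leibniz formula for $\det\boldsymbol{\xi}$. Your Step 1 goes beyond the paper by checking existence, i.e.\ that $\det\boldsymbol{\xi}$ really satisfies the postulates so that Definition \ref{Ddet} is consistent; the paper takes this for granted.
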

%%%%%%%%%%%%%%%%%%%%%%%%%%%%%%%%%%%%%%%%%%%%%%%%%%%%%%%%%%%%%%%%%%%%%%%%%%%%%%%%%%%%%%%%%%%%%%%%%%%%%%%%%%%%%%%%%

{\bf Proof}: Inserting the expansion of the $X_i\in B(F),\,i=0,1,2,3,$ into the arguments of $\mbox{Det}$ yields
\begin{eqnarray}
\label{proofPdet1}
 \mbox{Det}\left(X_0,X_1,X_2,X_3\right)&=&  \mbox{Det}\left(\sum_{i=0}^{3}\xi_{i0}\sigma_i,
 \sum_{j=0}^{3}\xi_{j1}\sigma_j,\sum_{k=0}^{3}\xi_{k2}\sigma_k,\sum_{\ell=0}^{3}\xi_{\ell 3}\sigma_\ell\right) \\
 \label{proofPdet2}
   &=& \sum_{ijk\ell} \xi_{i0}\,\xi_{j1}\,\xi_{k2}\,\xi_{\ell 3}\,\mbox{Det}\left(\sigma_i,\sigma_j,\sigma_k,\sigma_\ell \right)\\
  \label{proofPdet3}
   &=& \sum_{ijk\ell} \xi_{i0}\,\xi_{j1}\,\xi_{k2}\,\xi_{\ell 3}\,\epsilon_{ijk\ell}^{0123}\,
   \underbrace{\mbox{Det}\left(\sigma_0,\sigma_1,\sigma_2,\sigma_3 \right)}_{=1}\\
   \label{proofPdet4}
   &=& \det {\boldsymbol \xi}
   \;,
\end{eqnarray}
where $\epsilon_{ijk\ell}^{0123}$ denotes the usual $\epsilon$-tensor with the values $\pm1$ if
$(ijk\ell)$ is an even/odd permutation of $(0123)$ and $0$ else.
\hfill$\square$\\

Thus Proposition \ref{Pdet} shows that, by using a coordinate representation,
$Det$ can be calculated as the usual determinant. This implies the following
%%%%%%%%%%%%%%%%%%%%%%%%%%%%%%%%%%%%%%%%%%%%%%%%%%%%%%%%%%%%%%%%%%%%%%%%%%%%%%%%%%%%%%%%%%%%%%%%%%%%%%%%%%%%%%%%%
\begin{cor}\label{C1}
Let $F\in {\mathcal F}_3$. For all operators $X_i\in B(F),\,i=0,1,2,3,$  it holds that
  $\mbox{Det}\left(X_0,X_1,X_2,X_3\right)=0$ iff $\left(X_0,X_1,X_2,X_3\right)$  is linearly
  dependent in $B(F)$.
\end{cor}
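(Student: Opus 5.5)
The plan is to reduce Corollary \ref{C1} directly to Proposition \ref{Pdet} and then to the elementary fact from linear algebra that a real square matrix has vanishing determinant iff its columns are linearly dependent. Fix $F\in{\mathcal F}_3$ and $X_0,\ldots,X_3\in B(F)$. Since $(\sigma_0(F),\ldots,\sigma_3(F))$ is an orthonormal basis of the four-dimensional real space $B(F)$, each $X_i$ has a unique expansion $X_i=\sum_j \xi_{ji}\sigma_j(F)$. By Proposition \ref{Pdet}, $\mbox{Det}(X_0,X_1,X_2,X_3)=\det{\boldsymbol\xi}$, where the $i$-th column of ${\boldsymbol\xi}$ is the coordinate vector of $X_i$.

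The next step is to transfer linear dependence through the coordinate map. The map $\kappa: B(F)\to{\mathbb R}^4$, $X\mapsto (\mbox{Tr}(\sigma_j(F)X))_{j=0,\ldots,3}$, is a linear isomorphism; this follows from the orthonormality of the basis with respect to $(X,Y)\mapsto\mbox{Tr}(XY)$. A relation $\sum_i c_i X_i=0$ with real coefficients $c_i$ holds iff $\sum_i c_i\kappa(X_i)=0$, that is, iff ${\boldsymbol\xi}\,c=0$. Hence $(X_0,\ldots,X_3)$ is linearly dependent in $B(F)$ iff the columns of ${\boldsymbol\xi}$ are linearly dependent in ${\mathbb R}^4$. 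This holds iff ${\boldsymbol\xi}$ is singular, which holds iff $\det{\boldsymbol\xi}=0$.

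Combining the two steps gives the equivalence. The direction ``linearly dependent $\Rightarrow$ $\mbox{Det}=0$'' was already noted after Definition \ref{Ddet}, so it could alternatively be taken from there. The converse is the only nontrivial direction, and it relies entirely on Proposition \ref{Pdet} together with invertibility of nonsingular matrices. I do not expect a real obstacle. The only point needing care is that the coefficients are real, since $B(F)$ is a real vector space of Hermitian operators. The coefficients $\xi_{ji}$ are real because the $\sigma_j(F)$ are Hermitian and orthonormal with respect to the real trace form, so the standard real-matrix determinant criterion applies.
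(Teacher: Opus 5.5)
Your proposal is correct and follows the paper's own route: the paper derives Corollary~\ref{C1} directly from Proposition~\ref{Pdet}, which identifies $\mbox{Det}$ with the ordinary determinant of the real coefficient matrix ${\boldsymbol \xi}$, together with the standard fact that a square matrix is singular iff its columns are linearly dependent. Your added details (the coordinate isomorphism $B(F)\to{\mathbb R}^4$ and the reality of the coefficients $\xi_{ji}$) only make explicit what the paper leaves implicit.
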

%%%%%%%%%%%%%%%%%%%%%%%%%%%%%%%%%%%%%%%%%%%%%%%%%%%%%%%%%%%%%%%%%%%%%%%%%%%%%%%%%%%%%%%%%%%%%%%%%%%%%%%%%%%%%%%%%

%%%%%%%%%%%%%%%%%%%%%%%%%%%%%%%%%%%%%%%%%%%%%%%%%%%%%%%%%%%%%%%%%%%%%%%%%%%%%%%%%%%%%%%%%%%%%%%%%%%%%%%%%%%%%%%%%
\begin{prop}\label{Tdet}
 Let $T: B({\mathcal H}) \to B({\mathcal H})$ be an operation such that for all
 $F\in {\mathcal F}_3$ there exists an $F'\in {\mathcal F}_3$ satisfying $T\left( B(F)\right)\subset B(F')$.
 Further, for all $F\in {\mathcal F}_3$, let ${\sf T}(F)$ be the $4\times 4-$matrix formed of the expansion
 coefficients ${\sf T}_{kj}$ according to
 \begin{equation}\label{Tdet1}
   T\,\sigma_j(F)=\sum_{k=0}^{3}{\sf T}_{kj}\, \sigma_k(F')
   \;.
 \end{equation}
 Then $T$ will be orientation-preserving iff $\det {\sf T}(F)\ge 0$ for all $F\in {\mathcal F}_3$.
\end{prop}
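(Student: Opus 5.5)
The plan is to reduce condition 2 of Definition \ref{Dop} to a statement about the matrix ${\sf T}(F)$, using Proposition \ref{Pdet} twice: once in $B(F)$ and once in $B(F')$. Fix $F\in{\mathcal F}_3$ and the face $F'$ with $T(B(F))\subset B(F')$. Take $X_0,\ldots,X_3\in B(F)$ and expand $X_i=\sum_j \xi_{ji}\sigma_j(F)$. Linearity of $T$ and (\ref{Tdet1}) give
\begin{equation*}
T X_i=\sum_{j}\xi_{ji}\,T\sigma_j(F)=\sum_k\Big(\sum_j {\sf T}_{kj}\xi_{ji}\Big)\sigma_k(F')\;.
\end{equation*}
So the coefficient matrix of $(TX_0,\ldots,TX_3)$ with respect to the standard basis of $B(F')$ is ${\sf T}(F)\,{\boldsymbol\xi}$. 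By Proposition \ref{Pdet}, applied in $B(F')$ and then in $B(F)$, and by multiplicativity of the determinant,
\begin{equation*}
\mbox{Det}(TX_0,TX_1,TX_2,TX_3)=\det{\sf T}(F)\,\det{\boldsymbol\xi}=\det{\sf T}(F)\,\mbox{Det}(X_0,X_1,X_2,X_3)\;.
\end{equation*}
This identity is the core of the proof.

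For the direction ($\Leftarrow$), suppose $\det{\sf T}(F)\ge 0$ for every $F$. Then $\mbox{Det}(X_0,\ldots,X_3)\ge0$ implies that the product above is $\ge 0$. Condition 1 of Definition \ref{Dop} holds by hypothesis, so $T$ is orientation-preserving.

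For the direction ($\Rightarrow$), take $X_i=\sigma_i(F)$. Then ${\boldsymbol\xi}={\mathbbm 1}$ and $\mbox{Det}=1\ge0$. Orientation preservation forces $\mbox{Det}(TX_0,\ldots,TX_3)=\det{\sf T}(F)\ge0$.

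The main obstacle I expect is well-definedness, not the algebra. The face $F'$ need not be unique when $T(B(F))$ is lower-dimensional, for instance when $T$ has rank $\le 1$ on $B(F)$. Also, $\mbox{Det}$ on $B(F')$ depends on the standard basis $\sigma_k(F')$ only through the fixed quantum orientation. I would argue as follows.
\begin{itemize}
\item If $T|_{B(F)}$ is not injective, then $\det{\sf T}(F)=0$ for every admissible choice of $F'$. In this case both sides of the equivalence hold trivially for that $F$, since the image tuple is linearly dependent and Corollary \ref{C1} gives $\mbox{Det}(TX_0,\ldots,TX_3)=0$.
\item If $T|_{B(F)}$ is injective, then $T(B(F))=B(F')$ by dimension count. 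This determines the two-dimensional support subspace, hence $F'$, uniquely.
\item Any two standard bases of the same face built from orthonormal bases $(\varphi,\psi)$ differ by a transformation that fixes $\sigma_0$ and acts on $(\sigma_1,\sigma_2,\sigma_3)$ by an element of $SO(3)$, because it is induced by conjugation with a unitary. So $\mbox{Det}$, and hence the sign of $\det{\sf T}(F)$, is independent of the choice.
\end{itemize}
I would state this independence as a brief remark, since it is implicit in Definition \ref{Ddet} being tied to the chosen orientation.
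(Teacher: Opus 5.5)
Your proposal is correct and follows the paper's proof essentially verbatim: the identity $\mbox{Det}(TX_0,\ldots,TX_3)=\det{\sf T}(F)\,\mbox{Det}(X_0,\ldots,X_3)$, obtained from Proposition~\ref{Pdet} and the determinant product theorem, is exactly the paper's argument. Your explicit handling of both directions (testing with $X_i=\sigma_i(F)$ for the forward one) and your well-definedness remarks fill in points the paper leaves implicit, and they are sound; these remarks are that $F'$ may be non-unique when $T|_{B(F)}$ is singular, and that $\mbox{Det}$ does not depend on the choice of $(\varphi,\psi)$ because the $U(2)$ adjoint action lies in $SO(3)$.
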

%%%%%%%%%%%%%%%%%%%%%%%%%%%%%%%%%%%%%%%%%%%%%%%%%%%%%%%%%%%%%%%%%%%%%%%%%%%%%%%%%%%%%%%%%%%%%%%%%%%%%%%%%%%%%%%%%

{\bf Proof}: Let $X_i\in B(F)$ for $i=0,1,2,3,$  and consider the expansions
\begin{eqnarray}
\label{Tdet2}
  T\,X_i &=& T\left(\sum_j \xi_{ji}\,\sigma_j(F) \right)= \sum_j  \xi_{ji}\,T\,\sigma_j(F)\\
  \label{Tdet3}
   &=&\sum_{jk} \xi_{ji}\, {\sf T}_{kj}\,\sigma_k(F') =
   \sum_k\left(\sum_j  {\sf T}_{kj}\,\xi_{ji}\right) \,\sigma_k(F')\\
 \label{Tdet4}
   &=& \sum_k \left( {\sf T}\,{\boldsymbol\xi}\right)_{ki}\,\sigma_k(F')
   \;,
\end{eqnarray}
where ${\sf T}\,{\boldsymbol\xi}$ denotes the matrix product.
Using  Proposition \ref{Pdet} and the determinant product theorem we conclude
\begin{equation}\label{Tdet5}
 \mbox{Det}\left(T\,X_0, T\,X_1,T\,X_2,T\,X_3\right)= \det \left( {\sf T}\,{\boldsymbol\xi}\right)
 = \det \left( {\sf T}\right)\,\det\left({\boldsymbol\xi}\right)
 =\det \left( {\sf T}\right)\, \mbox{Det} \left( X_0,X_1,X_2,X_3\right)
 \;.
\end{equation}
Hence $T$ is orientation-preserving iff $\det \left( {\sf T}\right)\ge 0$ .                 \hfill$\square$

%%%%%%%%%%%%%%%%%%%%%%%%%%%%%%%%%%%%%%%%%%%%%%%%%%%%%%%%%%%%%%%%%%%%%%%%%%%%%%%%%%%%%%%%%%%%%%%%%%%%%%%
\begin{prop}\label{PU1}
 Let $U:{\mathcal H}\to {\mathcal H}$ be a unitary operator, then the map $T_U: B({\mathcal H}) \to B({\mathcal H})$
 defined by $T_U(X)=U\,X\,U^\ast$ will be a pure, completely positive, orientation-preserving operation.
 \end{prop}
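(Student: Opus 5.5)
Three of the four claims follow directly from earlier results, and the fourth is the only one that needs work.

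\emph{Operation, pure, completely positive.} $T_U$ is linear, and $U X U^\ast\ge 0$ whenever $X\ge 0$, so $T_U$ is monotone. It maps $P_\varphi$ to $P_{U\varphi}$, so it is pure in the sense of Definition \ref{D5}. It is of type (i) with $A=U$, so Proposition \ref{PC} gives complete positivity.

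\emph{Orientation preservation.} I would use Proposition \ref{Tdet}. Let $F\in{\mathcal F}_3$ correspond to ${\mathcal H}_2$, and let $(\varphi,\psi)$ be the orthonormal basis of ${\mathcal H}_2$ used to define $\sigma_j(F)$. Then $T_U$ maps $B(F)$ onto $B(F')$, where $F'$ corresponds to $U({\mathcal H}_2)$; this is condition 1 of Definition \ref{Dop}. Choose $(U\varphi,U\psi)$ as the basis of $U({\mathcal H}_2)$. Conjugating each of (\ref{sigma0})--(\ref{sigma3}) by $U$ gives $T_U\sigma_j(F)=\sigma_j(F')$ for this basis, so ${\sf T}(F)={\mathbbm 1}$ and $\det{\sf T}(F)=1>0$.

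\emph{Main obstacle: basis dependence.} The standard basis $\sigma_j(F')$ depends on the chosen orthonormal basis of ${\mathcal H}_2'$, so I must check that $\det{\sf T}(F)$ does not. Any two orthonormal bases of a two-dimensional space differ by some $V\in{\mathcal U}(2)$. The induced change of the $\sigma_j$ basis fixes $\sigma_0$ and acts on $(\sigma_1,\sigma_2,\sigma_3)$ by the adjoint action $X\mapsto VXV^\ast$ on traceless Hermitian $2\times 2$ matrices. This action is the image of the connected group ${\mathcal U}(2)$ under a continuous homomorphism into $O(3)$, so it lies in $SO(3)$ and has determinant $+1$. Hence $\mathrm{Det}$ on $B(F)$, and the sign of $\det{\sf T}(F)$, are independent of the basis choices. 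This is exactly the consistency needed for the quantum orientation $\Sigma$ to be well defined.

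With ${\sf T}(F)={\mathbbm 1}$ established for every $F$, Proposition \ref{Tdet} shows that $T_U$ is orientation-preserving.
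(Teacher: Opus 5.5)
Your proof is correct, but you reach orientation preservation by a different route than the paper.

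The paper's route: it works with arbitrary standard bases of $B(F)$ and $B(F')$. It notes that $T_U$ fixes $\sigma_0$ and maps $B_0(F)$ isometrically onto $B_0(F')$, so ${\sf T}(F)$ is block-diagonal with a $3\times 3$ block in $O(3)$. It then rules out $\det{\sf T}=-1$ by deforming $U$ to ${\mathbbm 1}$ inside ${\mathcal U}({\mathcal H})$ and appealing to continuity of $\det{\sf T}$ in $U$.

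Your route: you push the basis forward to $(U\varphi,U\psi)$, so that ${\sf T}(F)={\mathbbm 1}$ outright. All the work then goes into showing that $\mathrm{Det}$ on $B(F)$ does not depend on which orthonormal basis of ${\mathcal H}_2$ is used in (\ref{sigma0})--(\ref{sigma3}). You do this via $\mathrm{Ad}({\mathcal U}(2))\subset SO(3)$.

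Both arguments rest on connectedness of a unitary group, but yours is local to a single face and the paper's is global along a path. Your version has the advantage of making explicit a point the paper passes over. Definition~\ref{Ddet} is set up with ``any'' orthonormal basis, and ${\sf T}(F)$ depends on an arbitrary choice of standard basis on $F'=U(F)$. So the paper's claim that $\det{\sf T}$ varies continuously with $U$ tacitly needs exactly your basis-independence (or a continuous choice of standard bases along the path). In effect, your lemma underwrites the paper's continuity step as well. The paper's route, in turn, stresses the topological picture: the orientation is a continuous section whose sign cannot jump along a path from ${\mathbbm 1}$ to $U$.
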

%%%%%%%%%%%%%%%%%%%%%%%%%%%%%%%%%%%%%%%%%%%%%%%%%%%%%%%%%%%%%%%%%%%%%%%%%%%%%%%%%%%%%%%%%%%%%%%%%%%%%%%

{\bf Proof}: Obviously, $T_U$ is a completely positive operation satisfying the first item of the definition \ref{Dop}
since it maps any face $F\in {\mathcal F}_3$ onto the face $F'=\{U\,X\,U^\ast \left|\right. X\in F\}\in {\mathcal F}_3$.
Moreover, $T_U\left(  \sigma_0(F)\right)= \sigma_0(F')$ and $T_U$ maps the three-dimensional Euclidean space $B_0(F)$
isometrically onto $B_0(F')$. Hence the matrix ${\sf T}$ corresponding to $T_U$ defined in Proposition \ref{Tdet}
has a block form with ${\sf T}_{00}=1$ and only the submatrix $\widetilde{\sf T}$
with entries ${\sf T}_{ij},\,1\le i,j \le 3,$ has possibly non-zero values. It follows that $\widetilde{\sf T}\in O(3)$
and hence $\det \widetilde{\sf T} =\pm 1$. The value $\det  \widetilde{\sf T} =-1$ can be excluded since the unitary operator
$U$ can be continuously deformed into the identity operator and $\det {\sf T}$ depends continuously on $U$.
It follows that $\det {\sf T}=1$ and that $T_U$ is orientation-preserving due to Proposition \ref{Tdet}.  \hfill$\square$

%%%%%%%%%%%%%%%%%%%%%%%%%%%%%%%%%%%%%%%%%%%%%%%%%%%%%%%%%%%%%%%%%%%%%%%%%%%%%%%%%%%%%%%%%%%%%%%%%%%%%%%%%%%%%%%%%%%%%%%%%%%%%%%
\subsection{Characterization of completely positive pure operations}\label{sec:CPT2}
%%%%%%%%%%%%%%%%%%%%%%%%%%%%%%%%%%%%%%%%%%%%%%%%%%%%%%%%%%%%%%%%%%%%%%%%%%%%%%%%%%%%%%%%%%%%%%%%%%%%%%%%%%%%%%%%%%%%%%%%%%%%%%%

We have already pointed out that the definition of completely positive
operations requires either the tensor product or the
associative product of operators and thus presupposes a quantum orientation
or an equivalent structure. It therefore stands to reason to seek
a characterization of completely positive operations that refers
directly to the orientation. In doing so, we have found the following partial result.

\begin{theorem}\label{Tpure}
Let $T:B({\mathcal H})  \to  B({\mathcal H})$ be a pure operation. Then $T$ is completely positive
iff it is orientation-preserving.
\end{theorem}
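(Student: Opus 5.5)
The proof proceeds by case analysis along Theorem~\ref{TPO}. I would combine Proposition~\ref{PC} with Proposition~\ref{Tdet}, reducing everything to the sign of the $4\times 4$ determinant $\det{\sf T}(F)$ on each face. First I would record a reduction lemma. Let $A:{\mathcal H}_2\to{\mathcal H}_2'$ be a complex linear map between two-dimensional subspaces. Compose with a unitary $U$ of ${\mathcal H}$ mapping ${\mathcal H}_2'$ onto ${\mathcal H}_2$. By Proposition~\ref{PU1} and the multiplicativity of determinants (as in the proof of Proposition~\ref{Tdet}), the sign of $\det{\sf T}(F)$ is unchanged. So it suffices to compute the determinant of $X\mapsto MXM^\ast$ on $B({\mathbb C}^2)$ in the basis $\sigma_0,\dots,\sigma_3$, where $M\in GL(2,{\mathbb C})$.

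This determinant never vanishes, since the map is invertible. It depends continuously on $M$ and equals $1$ at $M={\mathbbm 1}$. Since $GL(2,{\mathbb C})$ is connected, it is positive; in fact it equals $|\det M|^4$. For an anti-linear $B=Mc$, with $c$ the complex conjugation in the basis $(\varphi,\psi)$, the map $X\mapsto cXc^\ast$ is the transposition. Transposition fixes $\sigma_0,\sigma_1,\sigma_3$ and sends $\sigma_2\mapsto-\sigma_2$, so its determinant is $-1$. Hence $X\mapsto BXB^\ast$ has negative determinant whenever $B$ is injective on the face's subspace.

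\emph{Orientation-preservation, case by case.}
\begin{itemize}
\item Type (i), $T(X)=AXA^\ast$. For a face $F$ with subspace ${\mathcal H}_2$, either $A|_{{\mathcal H}_2}$ is injective, or it is not. In the injective case, $F'$ is the face of $A({\mathcal H}_2)$ and $\det{\sf T}(F)>0$ by the lemma. Otherwise $T(B(F))$ consists of operators supported on a subspace of dimension at most one. It therefore lies in $B(F')$ for any face $F'$ containing that subspace, and $\det{\sf T}(F)=0$ since $T|_{B(F)}$ is not injective.
\item Type (iii), $T(X)={\rm Tr}(WX)P_\psi$. The image of $T$ is one-dimensional, so item~1 of Definition~\ref{Dop} holds with any $F'\ni P_\psi$, and $\det{\sf T}(F)=0$.
\end{itemize}
So types (i) and (iii) are orientation-preserving, and by Proposition~\ref{PC} they are completely positive.

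\emph{Type (ii).} Split according to the rank of the anti-linear $B$. If ${\rm rank}\,B\le1$, write $Bx=\overline{\langle a|x\rangle}\,b$. Then $BXB^\ast$ is proportional to $\langle a|X|a\rangle\, P_b$, so $T$ is also of type (iii) and hence both completely positive and orientation-preserving. If ${\rm rank}\,B\ge2$, choose ${\mathcal H}_2$ with $B|_{{\mathcal H}_2}$ injective. The lemma then gives $\det{\sf T}(F)<0$, so $T$ is not orientation-preserving by Proposition~\ref{Tdet}. For non-complete positivity, pick orthonormal $\varphi,\psi\in{\mathcal H}_2$ and consider $T\otimes{\mathbbm 1}_{B({\mathbb C}^2)}$ applied to the projector onto $\varphi\otimes e_1+\psi\otimes e_2$. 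Since $B$ acts on ${\mathcal H}_2$ as $M$ composed with conjugation, this reproduces, up to the invertible congruence by $M\otimes{\mathbbm 1}$, the partial-transpose example (\ref{counter}). The output has the eigenvalue $-1$ before the congruence, and congruence preserves the existence of a negative eigenvalue. Combining the cases gives both directions of the equivalence.

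\emph{Main obstacle.} I expect the hardest step to be the well-definedness and comparison of orientations between different faces $F$ and $F'$. The standard basis $\sigma_i(F)$ depends on the choice of orthonormal basis $(\varphi,\psi)$, so I must check that $\det{\sf T}(F)$ has a sign independent of that choice. I would handle this with Proposition~\ref{PU1}: a change of basis is implemented by a unitary of ${\mathcal H}_2$, which gives determinant $+1$. A secondary point is the freedom in choosing $F'$ when $T(B(F))$ is degenerate; there the determinant is $0$ for every admissible $F'$, so the choice is harmless.
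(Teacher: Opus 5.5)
Your proof is correct. It uses the same case split along Theorem~\ref{TPO} as the paper, but both substantive steps are argued differently.

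\textbf{Type (i).} The paper uses the polar decomposition and then factors $P$ into matrices of the form $\mathrm{diag}(p,1,\ldots,1)$. It computes the matrix $\Lambda$ entry by entry (Proposition~\ref{Pid}, Appendix~\ref{APid}) and reads off its eigenvalues $\alpha',\beta'',\alpha'\beta'',\alpha'\beta''$. You instead note that $M\mapsto\det(X\mapsto MXM^\ast)$ is continuous and nowhere zero on the connected group $GL(2,{\mathbb C})$, and equals $1$ at $M={\mathbbm 1}$. Hence it is positive (indeed it equals $|\det M|^4$), and no computation is needed.

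\textbf{Type (ii).} The paper shows $\det\Lambda\le 0$. To conclude non-complete positivity when $\det\Lambda<0$ on some face, it invokes Proposition~\ref{CPP} (every pure completely positive map is of type (i) or (iii)), whose appendix proof is long. You instead split on the rank of $B$:
\begin{itemize}
\item If the rank is at most $1$, $T$ is visibly of type (iii).
\item If the rank is at least $2$, you exhibit both failures directly. On a face where $B$ is injective, the transposition gives $\det{\sf T}(F)<0$. For complete positivity, $(T\otimes{\mathbbm 1})$ applied to the projector onto $\varphi\otimes e_1+\psi\otimes e_2$ is the swap operator conjugated by an injective $A'\otimes{\mathbbm 1}$. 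Surjectivity of its adjoint transports the negative expectation value.
\end{itemize}

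\textbf{Comparison.} Your route is shorter and self-contained, and it bypasses Proposition~\ref{CPP} entirely. It also settles a point the paper leaves implicit: $\mathrm{Det}$ does not depend on the choice of orthonormal basis of ${\mathcal H}_2$, since a change of basis is conjugation by an element of $U(2)$ and has determinant $1$. The paper's route gives explicit eigenvalue data for $\Lambda$. It also yields, as a by-product, the independent characterization of pure completely positive maps in Proposition~\ref{CPP}.
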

This theorem can be viewed as a generalization of theorem 4.35 in \cite{AS12a},
which states the analogous equivalence for unital order automorphisms.\\

{\bf Proof}: We consider the three cases of a pure operation described in Theorem \ref{TPO} separately.

\subsubsection{Case (i) } \label{Ci}

In this case $T$ is of the form $T \,X=A\,X\,A^\ast$ with
a complex linear operator $A: {\mathcal H} \to {\mathcal H}$ and is hence completely positive.
We have to show that $T$ is orientation-preserving.\\

We consider a general three-dimensional face $F\in{\mathcal F}_3$ corresponding to a two-dimensional
subspace ${\mathcal H}_2 \subset {\mathcal H} $.
The subspace $A\left( {\mathcal H}_2 \right)$ will be of dimension $0,1$ or $2$. In either case
$T(F)$ will be a subset of $B(F')$ for another three-dimensional face $F'\in{\mathcal F}_3$ and the
first item of definition \ref{Dop} will be satisfied.

Assume that the dimension of  $A\left( {\mathcal H}_2 \right)$ will be $0$ or $1$ and consider $X_i\in B(F),\,i=0,1,2,3,$
such that $\mbox{Det}(X_0,X_1,X_2,X_3)\ge 0$. Then $\dim T(B(F))<4$ and hence
$(T\,X_0,T\,X_1,T\,X_2,T\,X_3)$ will be linearly dependent
and hence $\mbox{Det}(T\,X_0,T\,X_1,T\,X_2,T\,X_3)=0$, see Corollary \ref{C1}.
In this case the condition $\mbox{Det}(T\,X_0,T\,X_1,T\,X_2,T\,X_3)\ge 0$
for $T$ being orientation-preserving will be satisfied.
Thus we may assume  $\dim \,A\left( {\mathcal H}_2 \right)=2$ for the following part of the proof.

Next we consider the polar decomposition $A=U\,P$ where  $U:{\mathcal H}\to {\mathcal H}$ is a
unitary operator and $P:{\mathcal H}\to {\mathcal H}$ a positive semi-definite one.
Let $T=T_2 \circ T_1$ be the corresponding decomposition of $T$ where $T_1(X)= P\,X\,P$
and $T_2(Y) = U\,Y\,U^\ast$. Since the property ``orientation-preserving" is conserved under
composition of maps, it suffices to show that $T_1$ and $T_2$ are orientation-preserving.
For $T_2$ this holds due to Proposition \ref{PU1}.
We can therefore restrict ourselves to the case $T_1(X)=P\,X\,P$.

We consider an eigenbasis of $P$ in the Hilbert space ${\mathcal H}$ and represent all
operators by matrices w.~r.~t.~this eigenbasis without changing the notation.
Similarly, we set ${\mathcal H}={\mathbb C}^N$.
Thus we may write $P=\mbox{diag}(p_1,p_2,\ldots,p_N)$ with $p_n\ge 0$ for $n=1,\ldots, N$.
$P$ can be written as a product of diagonal matrices with only a single diagonal element
$\neq 1$ (the case $P={\mathbbm 1}$ can be excluded). Without loss of generality we can
therefore restrict ourselves to the case $P=\mbox{diag}(p,1,1\ldots,1)$.

Let $(\varphi, \psi)$ be the orthonormal basis of ${\mathcal H}_2$, in component representation
\begin{equation}\label{phipsicomp}
  \varphi=(\varphi_1,\varphi_2,\ldots,\varphi_N)\quad \mbox{and }\quad \psi=(\psi_1,\psi_2,\ldots,\psi_N)
  \;.
\end{equation}
Upon multiplication with suitable phase factors
the first components $\varphi_1$ and $\psi_1$ can be chosen as non-negative. The corresponding orthonormal basis
$(\sigma_0(F),\sigma_1(F),\sigma_2(F),\sigma_3(F))$ of $B(F)$ has been defined in (\ref{sigma0}-\ref{sigma3}).
The two-dimensional subspace ${\mathcal H}_2':=P\left({\mathcal H}_2 \right)$ is spanned by
\begin{equation}\label{phip}
\varphi' = P\,\varphi = ( p\,\varphi_1, \varphi_2,\ldots,\varphi_N)
\end{equation}
and
\begin{equation}\label{psip}
\psi' = P\,\psi = ( p\,\psi_1, \psi_2,\ldots,\psi_N)
\;.
\end{equation}
For future reference we compute
\begin{equation}\label{alphap}
 \alpha':= \| \varphi'\|^2 =p^2 |\varphi_1|^2 +  |\varphi_2|^2+ \ldots + |\varphi_N|^2=
 (p^2-1)\varphi_1^2 +  \underbrace{|\varphi_1|^2 +  |\varphi_2|^2+ \ldots + |\varphi_N|^2}_{=\| \varphi\|^2=1}
 =1+ (p^2-1)\varphi_1^2 > 0
 \;,
\end{equation}
and, analogously,
\begin{equation}\label{betap}
 \beta':= \| \psi'\|^2  =1+ (p^2-1)\psi_1^2
 \;,
\end{equation}
using $\varphi_1\ge 0$ and  $\psi_1\ge 0$. Further we calculate
\begin{equation}\label{phippsip}
 \langle \varphi' | \psi'\rangle = p^2\,\varphi_1\,\psi_1+ \sum_{i=2}^{N}\overline{\varphi_i}\,\psi_i
 =(p^2-1)\varphi_1\,\psi_1 +\underbrace{\sum_{i=1}^{N}\overline{\varphi_i}\,\psi_i}_{=\langle \varphi|\psi\rangle=0}
 =(p^2-1)\varphi_1\,\psi_1 \in {\mathbb R}
 \;.
\end{equation}
Since in general $\varphi'$ and $\psi'$ will not be orthogonal
we define
\begin{equation}\label{psipp}
 \psi'':= \psi' - \frac{\langle \varphi' | \psi'\rangle}{\| \varphi'\|^2} \varphi'
 \stackrel{(\ref{alphap},\ref{phippsip})}{=}  \psi' - \frac{(p^2-1)\varphi_1\,\psi_1}{1+ (p^2-1)\varphi_1^2}\, \varphi'
 =: \psi' - \gamma\,\varphi'
 \;,
\end{equation}
such that $\langle \varphi' | \psi''\rangle =0$.
For future reference we calculate
\begin{eqnarray}\label{betapp1}
 \beta''&:=& \| \psi''\|^2 \stackrel{(\ref{psipp})}{=}\langle \psi' - \gamma\,\varphi' | \psi' - \gamma\,\varphi'\rangle
 = \langle \psi' |\psi'\rangle -2 \gamma \langle \varphi' | \psi'\rangle +\gamma^2  \langle \varphi' |\varphi'\rangle\\
 \label{betapp2}
 & \stackrel{(\ref{alphap},\ref{betap},\ref{phippsip})}{=}&
 1+ (p^2-1)\psi_1^2- 2\, \frac{(p^2-1)\varphi_1\,\psi_1}{1+ (p^2-1)\varphi_1^2}\,(p^2-1)\varphi_1\,\psi_1
 +\left( \frac{(p^2-1)\varphi_1\,\psi_1}{1+ (p^2-1)\varphi_1^2}\right)^2\,\left(1+ (p^2-1)\varphi_1^2\right)\\
  \label{betapp3}
 &=& 1+ \frac{(p^2-1)\,\psi_1^2}{1+(p^2-1)\varphi_1^2} >0
 \;,
\end{eqnarray}
after performing some elementary simplifications in the last step of (\ref{betapp3}).

In the two-dimensional subspace ${\mathcal H}_2'$ we will use the orthonormal basis
$(\frac{\varphi'}{\sqrt{\alpha'}},\frac{\psi''}{\sqrt{\beta''}})$.
The corresponding orthonormal standard basis in $B(F')$ assumes the form
\begin{eqnarray}\label{sigmap0}
 \sigma_0(F')&=&\frac{1}{\sqrt{2}}\left(\frac{|\varphi'\rangle \langle \varphi' |}{\alpha'} +
 \frac{|\psi''\rangle \langle \psi'' |}{\beta''}
 \right)=:\tau_0,\\
 \label{sigmap1}
 \sigma_1(F')&=&\frac{1}{\sqrt{2\alpha'\beta''}}\left(|\varphi'\rangle \langle \psi'' | +|\psi''\rangle \langle \varphi'| \right)
 := \frac{1}{\sqrt{\alpha'\beta''}}\,\tau_1,\\
 \label{sigmap2}
 \sigma_2(F')&=&\frac{{\sf i}}{\sqrt{2\alpha'\beta''}}\left(-|\varphi'\rangle \langle \psi'' | + |\psi''\rangle \langle \varphi'| \right)
  := \frac{1}{\sqrt{\alpha'\beta''}}\,\tau_2,\\
 \label{sigmap3}
 \sigma_3(F')&=&\frac{1}{\sqrt{2}}\left(\frac{|\varphi'\rangle \langle \varphi' |}{\alpha'}
  -\frac{|\psi''\rangle \langle \psi'' |}{\beta''} \right)=:\tau_3
  \;,
 \end{eqnarray}
 where we have introduced the operators $\tau_0,\ldots,\tau_3$ for future reference.

Next we calculate the images $\pi_i:=P\,\sigma_i(F)\,P,\,i=0,1,2,3,$ of the standard basis in $B(F)$.
We obtain:
\begin{eqnarray}
\label{pi0a}
 \pi_0 &=& P\,\sigma_0(F)\,P =\frac{1}{\sqrt{2}}\left(|\varphi'\rangle \langle \varphi'| + |\psi'\rangle \langle \psi'|\right)
 = \frac{1}{\sqrt{2}}\left(|\varphi'\rangle \langle \varphi'| + |\psi''+\gamma \varphi'\rangle \langle \psi''+\gamma \varphi'| \right)\\
 \label{pi0b}
 &=& \frac{1}{\sqrt{2}}\left((1+\gamma^2)|\varphi'\rangle \langle \varphi'| + \gamma \left(|\psi''\rangle \langle \varphi'|
 + |\varphi'\rangle \langle \psi''|\right)+|\psi''\rangle \langle \psi''|\right),\\
 \label{pi1a}
 \pi_1 &=& P\,\sigma_1(F)\,P =\frac{1}{\sqrt{2}}\left(|\varphi'\rangle \langle \psi'| + |\psi'\rangle \langle \varphi'|\right)
 = \frac{1}{\sqrt{2}}\left(|\varphi'\rangle \langle \psi''+\gamma \varphi'| + |\psi''+\gamma \varphi'\rangle \langle \varphi'|\right)\\
 \label{pi1b}
  &=& \frac{1}{\sqrt{2}}\left(2\gamma |\varphi'\rangle\langle \varphi'| +|\varphi'\rangle \langle \psi''|+  |\psi''\rangle \langle \varphi'| \right)\;,\\
 \label{pi2a}
 \pi_2 &=& P\,\sigma_2(F)\,P =\frac{\sf i}{\sqrt{2}}\left(-|\varphi'\rangle \langle \psi'| + |\psi'\rangle \langle \varphi'|\right)
 = \frac{\sf i}{\sqrt{2}}\left(-|\varphi'\rangle \langle \psi''+\gamma \varphi'| + |\psi''+\gamma \varphi'\rangle \langle \varphi'|\right)\\
 \label{pi2b}
  &=&\frac{\sf i}{\sqrt{2}}\left(-|\varphi'\rangle \langle \psi''| + |\psi''\rangle \langle \varphi'|\right)\;,\\
  \label{pi3a}
 \pi_3 &=& P\,\sigma_3(F)\,P =\frac{1}{\sqrt{2}}\left(|\varphi'\rangle \langle \varphi'| - |\psi'\rangle \langle \psi'|\right)
 = \frac{1}{\sqrt{2}}\left(|\varphi'\rangle \langle \varphi'| - |\psi''+\gamma \varphi'\rangle \langle \psi''+\gamma \varphi'| \right)\\
 \label{pi3b}
 &=& \frac{1}{\sqrt{2}}\left((1-\gamma^2)|\varphi'\rangle \langle \varphi'| - \gamma \left(|\psi''\rangle \langle \varphi'|
 + |\varphi'\rangle \langle \psi''|\right)-|\psi''\rangle \langle \psi''|\right)\;.
 \end{eqnarray}

According to Proposition \ref{Tdet} we will have to show that $\det {\sf T}(F)\ge 0$, where ${\sf T}(F)$
is the matrix with entries
\begin{equation}\label{Tij}
 {\sf T}_{ij}=\mbox{Tr} \left(\pi_j \, \sigma_i(F') \right)
 \;,
\end{equation}
since the expansion coefficients are given by scalar products in $B({\mathcal H})$.
It turns out to be easier to use the related matrix $\Lambda$ with entries
\begin{equation}\label{Lij}
 \Lambda_{ij}=\mbox{Tr} \left(\pi_j \, \tau_i \right)
 \;,
\end{equation}
where the $\tau_i$ are defined in (\ref{sigmap0}) - (\ref{sigmap3}). It will suffice to prove
$\det \Lambda\ge 0$ since $\det  \Lambda= \alpha'\,\beta''\,\det {\sf T}(F)$ and $\alpha', \beta''>0$.\\

We provide a numerical example for the matrix $\Lambda$ calculated for $N=4$:
\begin{equation}\label{Lnum}
  \Lambda=\left(
\begin{array}{cccc}
 0.862251 & 0.00428142 & 0 & 0.137683 \\
 0.0031021 & 0.724502 & 0 & -0.0031021 \\
 0 & 0 & 0.724502 & 0 \\
 0.137701 & 0.00428142 & 0 & 0.862233 \\
\end{array}
\right)
\;.
\end{equation}
This example suggests that the following identities hold for the $\Lambda_{ij}$:
%%%%%%%%%%%%%%%%%%%%%%%%%%%%%%%%%%%%%%%%%%%%%%%%%%%%%%%%%%%%%%%%%%%%%%%%%%%%%%%%%%%%%%%%%%%%%%%%%%%
\begin{prop}\label{Pid}
 \begin{eqnarray}
 \label{Pid1}
   \Lambda_{02} &=& \Lambda_{12}= \Lambda_{32}=0\;, \\
   \label{Pid2}
   \Lambda_{20} &=& \Lambda_{21}= \Lambda_{23}=0\;, \\
   \label{Pid3}
   \Lambda_{10}+\Lambda_{13}&=& 0\;, \\
    \label{Pid4}
   \Lambda_{01}&=& \Lambda_{31}\;, \\
    \label{Pid5}
   \Lambda_{11}&=& \Lambda_{22}\;, \\
   \label{Pid6}
   \Lambda_{00}+ \Lambda_{03}&=& \ \Lambda_{30}+ \Lambda_{33}\;.
 \end{eqnarray}
\end{prop}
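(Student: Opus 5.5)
The plan is to compute every entry $\Lambda_{ij}=\mbox{Tr}(\pi_j\tau_i)$ explicitly. The key simplification is that all eight operators $\pi_j$ and $\tau_i$ are already written in (\ref{pi0b})--(\ref{pi3b}) and (\ref{sigmap0})--(\ref{sigmap3}) as real or imaginary combinations of the four dyads $|a\rangle\langle b|$ with $a,b\in\{\varphi',\psi''\}$. These two vectors are orthogonal by construction (\ref{psipp}), with $\|\varphi'\|^2=\alpha'$ and $\|\psi''\|^2=\beta''$. So no component calculations with $\varphi_i,\psi_i$ are needed.

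The only trace formula required is
\begin{equation*}
\mbox{Tr}\bigl(|a\rangle\langle b|\,|c\rangle\langle d|\bigr)=\langle b|c\rangle\langle d|a\rangle .
\end{equation*}
By orthogonality it vanishes unless $b=c$ and $d=a$. Writing $e=\varphi'$ and $f=\psi''$, the only nonzero products are
\begin{equation*}
\mbox{Tr}(|e\rangle\langle e|e\rangle\langle e|)=\alpha'^2,\quad \mbox{Tr}(|f\rangle\langle f|f\rangle\langle f|)=\beta''^2,\quad \mbox{Tr}(|e\rangle\langle f|f\rangle\langle e|)=\mbox{Tr}(|f\rangle\langle e|e\rangle\langle f|)=\alpha'\beta''.
\end{equation*}
In particular, diagonal dyads are trace-orthogonal to off-diagonal ones.

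First I establish the vanishing identities (\ref{Pid1}) and (\ref{Pid2}) by a symmetry argument. The operators $\pi_2$ and $\tau_2$ contain only the antisymmetric off-diagonal combination ${\sf i}(-|e\rangle\langle f|+|f\rangle\langle e|)$. All other $\pi_j$ and $\tau_i$ contain off-diagonal dyads only in the symmetric combination $|e\rangle\langle f|+|f\rangle\langle e|$, besides diagonal dyads. The trace of the symmetric combination against the antisymmetric one is $-\alpha'\beta''+\alpha'\beta''=0$, and diagonal dyads contribute nothing against off-diagonal ones. This gives all six zeros.

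The remaining entries I compute term by term:
\begin{itemize}
\item $\Lambda_{22}=\Lambda_{11}=\alpha'\beta''$; the $2\gamma|e\rangle\langle e|$ term of $\pi_1$ drops out against $\tau_1$. This is (\ref{Pid5}).
\item $\Lambda_{10}=\gamma\alpha'\beta''$ and $\Lambda_{13}=-\gamma\alpha'\beta''$, which is (\ref{Pid3}).
\item $\Lambda_{01}=\Lambda_{31}=\gamma\alpha'$, since only the $2\gamma|e\rangle\langle e|$ part of $\pi_1$ pairs with the $|e\rangle\langle e|/\alpha'$ part of $\tau_0$ and $\tau_3$, which enters both with coefficient $+1/\sqrt2$. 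This is (\ref{Pid4}).
\item From (\ref{pi0b}) and (\ref{pi3b}), $\pi_0+\pi_3=\sqrt2\,|e\rangle\langle e|$. Hence
\begin{equation*}
\Lambda_{00}+\Lambda_{03}=\mbox{Tr}((\pi_0+\pi_3)\tau_0)=\alpha'=\mbox{Tr}((\pi_0+\pi_3)\tau_3)=\Lambda_{30}+\Lambda_{33},
\end{equation*}
which is (\ref{Pid6}).
\end{itemize}

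There is no real obstacle here. The only point needing care is bookkeeping of the factors ${\sf i}$ and $1/\sqrt2$, and remembering that $\gamma$ is real, which follows from (\ref{phippsip}). A closed form of the $\gamma$-dependence is not needed, since every identity holds for arbitrary real $\gamma$ and positive $\alpha',\beta''$. As a check, the numerical matrix (\ref{Lnum}) should be reproduced, for instance $\Lambda_{11}=\Lambda_{22}$.
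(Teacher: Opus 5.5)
Your proposal is correct and is essentially the paper's own argument: everything is expanded in the four dyads built from the orthogonal pair $\varphi',\psi''$, the same table of nonvanishing traces ($\alpha'^2$, $\alpha'\beta''$, $\alpha'\beta''$, $\beta''^2$) does all the work, and (\ref{Pid6}) uses the identical shortcut $\pi_0+\pi_3=\sqrt{2}\,|\varphi'\rangle\langle\varphi'|$. The differences are cosmetic. You obtain (\ref{Pid1}) from the symmetric/antisymmetric trace-orthogonality instead of the lemma $\pi_2=\tau_2$, and you compute $\Lambda_{10},\Lambda_{13},\Lambda_{01},\Lambda_{31}$ individually instead of pairing them via $\pi_0+\pi_3$ and $\tau_0-\tau_3$; your explicit values are right and consistent with (\ref{Lnum}).
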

%%%%%%%%%%%%%%%%%%%%%%%%%%%%%%%%%%%%%%%%%%%%%%%%%%%%%%%%%%%%%%%%%%%%%%%%%%%%%%%%%%%%%%%%%%%%%%%%%%
The proof of Proposition \ref{Pid} is quite long and will be therefore moved to the Appendix \ref{APid}.
Due to the identities (\ref{Pid1}) - (\ref{Pid6}) the eigenvalues $\lambda_i,\; i=0,1,2,3,$ of $\Lambda$ can be
explicitly calculated with the result:
\begin{eqnarray}\label{evL0}
\lambda_0&=& \Lambda_{00}+\Lambda_{03} = \alpha' >0\;,\\
\label{evL1}
\lambda_1&=& \Lambda_{00}-\Lambda_{30} = \beta'' >0\;,\\
\label{evL23}
\lambda_2&=&  \lambda_3 =\Lambda_{22}= \alpha'\,\beta'' >0\;.
\end{eqnarray}

This immediately implies $\det \Lambda>0$ and the proof for the case (i) is complete.
Generally, we have only shown that $\det \Lambda \ge 0$ in the case (i),
since we have also taken into account some degenerate cases where $\det \Lambda = 0$.

\subsubsection{Case (iii) } \label{Ciii}

In this case the pure operation $T$ is of the form $T(X)= \mbox{Tr}(W X)\,P_\phi$ and completely positive.
The quadruple $(T\, X_0, T\, X_1, T\, X_2, T\, X_3)$ will be always linearly dependent
and hence $\mbox{Det}(T\, X_0, T\, X_1, T\, X_2, T\, X_3)=0$, see Corollary \ref{C1}.
Therefore $T$ is orientation-preserving.

\subsubsection{Case (ii) } \label{Cii}

In this case there exists a complex anti-linear operator $B: {\mathcal H} \to {\mathcal H}$ such that
$T(X)=B X B^\ast$ for all $X\in B({\mathcal H})$. $B$ can be written as a product $B=A\,C$, where
$A$ is a linear operator and $C$ a special anti-unitary one. As we have seen, in this case both possibilities
can occur: namely, $T$ may be completely positive or it may not be.

As in the case (i) we consider an
arbitrary three-dimensional face $F\in{\mathcal F}_3$ of operators with support in ${\mathcal H}_2$
and an orthonormal basis $(\varphi,\psi)$ in ${\mathcal H}_2$.
As the special anti-unitary operator $C$ we choose an extension of the complex-conjugation in ${\mathcal H}_2$
w.~r.~t.~the basis $(\varphi,\psi)$.  Otherwise, we proceed exactly as in case (i). The only difference
is that the anti-unitary operator $C$ operates on the standard basis of $B(H)$ in the following way:
\begin{equation}\label{Cop}
  C\,\sigma_i(F)\,C=\sigma_i(F)\quad\mbox{for } i=0,1,3,\quad \mbox{but}\quad  C\,\sigma_2(F)\,C=-\sigma_2(F)
  \;.
\end{equation}
This implies that the third column of the matrix $\Lambda$ in the case (i) will be replaced by its negative,
i.e, $\Lambda_{22}\mapsto -\Lambda_{22}$. Consequently, $\det \Lambda \le 0$.

We proceed by the following Proposition which will be proven in the Appendix \ref{ACPP}:
%%%%%%%%%%%%%%%%%%%%%%%%%%%%%%%%%%%%%%%%%%%%%%%%%%%%%%%%%%%%%%%%%%%%%%%%%%%%%%%%%%%%%%%%%%%%%%%%%
\begin{prop}\label{CPP}
 Let $T:B({\mathcal H})\to B({\mathcal H})$ be a pure operation.
 Then $T$ is completely positive iff it is of type (i) or (iii).
\end{prop}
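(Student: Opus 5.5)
One direction is already available: by Proposition \ref{PC}, every pure operation of type (i) or (iii) is completely positive. For the converse I would take a completely positive pure operation $T$ and apply Theorem \ref{TPO}. If $T$ is of type (i) or (iii) there is nothing left to show. So the only remaining case is $T(X)=BXB^\ast$ with $B$ anti-linear, and there I must prove that $T$ is in fact also of type (i) or of type (iii).

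\textbf{Reduction of the anti-linear case.} I write $B=AC$, where $C$ is complex conjugation with respect to a fixed orthonormal basis $\{|n\rangle\}$ and $A$ is complex linear. Then $CXC^\ast = X^{\tau}$, the transpose, so
\[
T(X)=A\,X^{\tau}A^\ast .
\]
The Choi matrix of $T$ is
\[
J(T)=\sum_{m,n}T(|m\rangle\langle n|)\otimes |m\rangle\langle n| ,
\]
and $T$ is completely positive iff $J(T)\ge 0$. Since $T(|m\rangle\langle n|)=A|n\rangle\langle m|A^\ast$, we get
\[
J(T)=(A\otimes{\mathbbm 1})\,S\,(A\otimes{\mathbbm 1})^\ast ,
\]
where $S=\sum_{m,n}|n\rangle\langle m|\otimes|m\rangle\langle n|$ is the swap operator.

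\textbf{Rank argument.} Let $r=\mathrm{rank}\,A$. Write $A=\sum_{k=1}^r s_k|u_k\rangle\langle v_k|$ as a singular value decomposition with $s_k>0$, and put $R=\mathrm{span}\{v_k\}$, which has dimension $r$. Because $A\otimes{\mathbbm 1}$ is injective on $R\otimes{\mathcal H}$, the congruence above implies that $J(T)\ge 0$ iff the compression of $S$ to $R\otimes{\mathcal H}$ is positive semi-definite. Now assume $r\ge 2$. Choose orthonormal $v_1,v_2\in R$ and set
\[
w=\overline{v_1}\otimes \overline{v_2}-\overline{v_2}\otimes \overline{v_1}.
\]
This vector has to be adjusted so that both tensor factors lie in the correct spaces. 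Concretely, I would compute
\[
\langle x\otimes y|J(T)|x'\otimes y'\rangle=\langle x|A|\overline{y'}\rangle\,\overline{\langle x'|A|\overline{y}\rangle}
\]
and choose $x,y$ with $A\overline{y}$ and $x$ suitably aligned, so as to exhibit an antisymmetric vector $w$ with $\langle w|J(T)|w\rangle<0$. This mirrors the partial-transpose counterexample (\ref{counter}), where the eigenvalue $-1$ comes from an antisymmetric vector. The outcome should be that complete positivity forces $r\le 1$.

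\textbf{Conclusion and main obstacle.} If $r=0$, then $T=0$, which is of type (i). If $r=1$, then $A=|u\rangle\langle v|$ and
\[
T(X)=|u\rangle\langle v|X^\tau|v\rangle\langle u| = \mathrm{Tr}(X\,|\overline v\rangle\langle\overline v|)\,P_u\,\|u\|^2 ,
\]
which is of type (iii) with $W=\|u\|^2|\overline v\rangle\langle \overline v|\ge0$. This completes the converse. The main obstacle is the rank step: I have to produce the explicit negative direction of $J(T)$ cleanly when $A$ is not invertible and its range is not aligned with $R$. What I would try first is to restrict to the $2\times2$ block spanned by $v_1,v_2$ and their images. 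On that block $T$ reduces to a qubit map of the form $X\mapsto M X^\tau M^\ast$ with $M$ an invertible $2\times 2$ matrix. Its Choi matrix is congruent to the swap on ${\mathbb C}^2\otimes{\mathbb C}^2$, which has eigenvalue $-1$.
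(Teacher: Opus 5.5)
Your strategy is sound, but the step you call the main obstacle is left at ``the outcome should be'', and the difficulty there comes from a wrong formula rather than from the problem itself. Your matrix-element formula has misplaced conjugates. With your convention $J(T)=\sum_{m,n}T(|m\rangle\langle n|)\otimes|m\rangle\langle n|$ and $T(|m\rangle\langle n|)=A|n\rangle\langle m|A^\ast$, one gets $\langle x\otimes y|J(T)|x'\otimes y'\rangle=\langle x|A|y'\rangle\,\overline{\langle x'|A|y\rangle}$.

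With that corrected, your compression statement already finishes the job. One has $\langle w'|J(T)|w'\rangle=\langle w|S|w\rangle$ with $w=(A^\ast\otimes{\mathbbm 1})w'$, and $w$ ranges over exactly $\mathrm{ran}(A^\ast)\otimes{\mathcal H}=R\otimes{\mathcal H}$; this, rather than injectivity of $A\otimes{\mathbbm 1}$, is the clean justification of your equivalence. For $r\ge 2$, take $w=v_1\otimes v_2-v_2\otimes v_1\in R\otimes R$, with no conjugation. Then $Sw=-w$, so $\langle w|S|w\rangle=-2$. Explicitly, $w'=s_1^{-1}u_1\otimes v_2-s_2^{-1}u_2\otimes v_1$ gives $\langle w'|J(T)|w'\rangle=-2$, so $T$ is not completely positive.

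Your $2\times 2$-block fallback would also work, but only if you restrict to operators supported on $\mathrm{span}\{\overline{v_1},\overline{v_2}\}$. The transpose carries these to operators supported on $\mathrm{span}\{v_1,v_2\}$, where $A$ is injective. Your treatment of the cases $r=0$ and $r=1$ is correct.

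With that line supplied, your proof is correct and takes a genuinely different route from the paper's. The paper does not start from the type-(ii) form or use a Choi matrix. It takes a Kraus representation of the completely positive pure map and uses purity to show that all Kraus operators send each $\varphi$ to multiples of one common $\psi$ (Lemma~\ref{LCPP}). It then splits into two cases:
\begin{itemize}
\item Case A: some Kraus operator has rank $\ge 2$, so $T$ agrees on some $B({\mathcal H}_2)$ with a type-(i) map. The sign of the orientation determinant then excludes type (ii): it is $\le 0$ for type (ii), $\ge 0$ for type (i), and nonzero by bijectivity.
\item Case B: all Kraus operators have rank $1$, and $T$ is written directly in type-(iii) form.
\end{itemize}
Your argument instead eliminates type (ii) in one stroke through the $-1$ eigenvalue of the swap on antisymmetric vectors. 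This is the same mechanism as in the partial-transpose example (\ref{counter}). It also yields the sharper fact that $X\mapsto A X^{\tau}A^\ast$ is completely positive iff $\mathrm{rank}\,A\le 1$.

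Your route is shorter and needs only Choi's criterion plus Theorem~\ref{TPO}. It also does not borrow the determinant computations from the proof of Theorem~\ref{Tpure}, which itself relies on Proposition~\ref{CPP}. The paper's route, in exchange, works directly with the Kraus representation it already quotes and ties the result to the orientation determinant it is developing.
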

%%%%%%%%%%%%%%%%%%%%%%%%%%%%%%%%%%%%%%%%%%%%%%%%%%%%%%%%%%%%%%%%%%%%%%%%%%%%%%%%%%%%%%%%%%%%%%%%%%%

Let us first consider the case where $\det \Lambda < 0$ holds for some $F\in{\mathcal F}_3$.
Then $T$ will not be orientation-preserving.
In this case, $T$ is also not completely positive: Otherwise,
according to Proposition \ref{CPP}, $T$ would fall under case (i) or (iii) of Theorem \ref{TPO},
and thus $\det \Lambda \ge 0$ for all $F\in{\mathcal F}_3$, as we have shown above.
This contradicts the assumption that $\det \Lambda < 0$ for some $F\in{\mathcal F}_3$.

We are thus left with the case  $\det \Lambda =0$ for all $F\in{\mathcal F}_3$.
It follows that $A$ will map each two-dimensional subspace ${\mathcal H}_2$ corresponding to $F$ onto some
one-dimensional subspace $S(F)$ or onto $\{0\}$. As in the proof of Proposition \ref{CPP}, case B,
one can show that all one-dimensional subspaces $S(F)$ must coincide and hence we are in the case (iii).
$T$ is thus completely positive and orientation-preserving.      \hfill$\square$

\subsubsection{Outlook for the case of non-pure operations} \label{OUT}

The question arises as to whether the characterization of completely positive pure operations given
in Theorem \ref{Tpure} can also be applied to general operations that are not necessarily pure.
Recall that the operator sum representation (\ref{Kraus}) shows that every completely positive operation
can be written as the sum of pure ones. However, counter examples show that a completely positive operation
need not be orientation-preserving. The reason for this can be traced back to the fact that the sum
of, say, two matrices with positive determinants can have a negative determinant. A simple example is
\begin{equation}\label{counterdet}
 X=\left( \begin{array}{cc}
            1 & 1/2 \\
            1/2 & 1
          \end{array}\right),\quad
 Y=\left( \begin{array}{cc}
            -1 & 1/2 \\
            1/2 & -1
          \end{array}\right),\quad
 X+Y=\left( \begin{array}{cc}
           0 & 1 \\
            1 &0
          \end{array}\right)\;,
\end{equation}
satisfying $\det X=\det Y=3/4$ and $\det (X+Y)=-1$.
Therefore, characterizing completely positive operations by respecting the
Noether structure in any sense will probably take a more complicated form.

\section{Summary}\label{sec:SUM}

This paper outlines a step-by-step reconstruction of QT that is
closely based on physically interpretable structures.
We start from the statistical structure of quantum mechanics,
as described most extensively by G.~Ludwig. Then, following the work of Alfsen and Shultz,
we propose additionally considering the ``quantum orientation" or, equivalently,  ``Noether structure,"
which essentially mediates an isomorphism between the space of (first moments of) observables
and the Lie algebra of the automorphism group of a physical system.
If one assumes physical units for the spaces mentioned,
the Noether structure contains a conversion factor with the dimension of ``inverse action",
which can naturally be identified with $\hbar^{-1}$.

In our view, the Noether structure is a prerequisite for the
definition of composite systems and completely positive state transformations.
We have provided arguments to support this: on the one hand, we have shown that the
composition of systems is not uniquely determined by using only the statistical structure;
on the other hand, in the finite-dimensional case, pure completely positive operations
can be characterized by the property of preserving the quantum orientation.
However, given the relatively complex proof and the restriction to pure operations,
it is not yet advisable to include orientation preservation in addition to the three known
equivalent textbook definitions of completely positive operations \cite{NC00}.
This will likely require further work.

\section*{Acknowledment}
This paper is dedicated to J\"urgen Schnack on the occasion of his 60th birthday.
Working with him has always been a pleasure, and I have learned a great deal from him,
which has helped me find my orientation in the world of quantum physics.
I would also like to thank Pekka Lahti for our discussions on the topic of
this article and for providing some references.

\appendix

\begin{figure}[ht]
  \centering
    \includegraphics[width=0.7\linewidth]{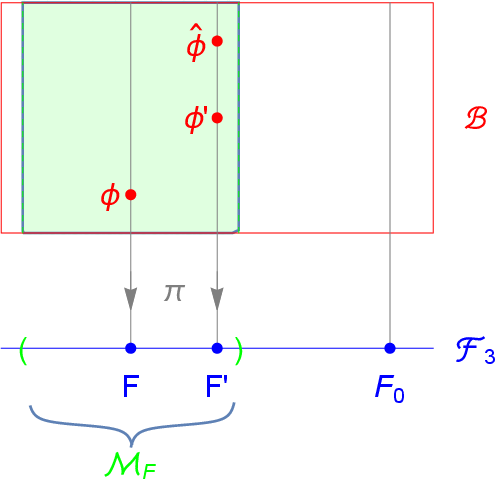}
  \caption{
   Sketch of the principal $O(3)$-bundle $\pi: {\mathcal B}\to{\mathcal F}_3$ and some points
   that appear in subsection \ref{LT}. The light green area represents the inverse image
   $\pi^{-1} ({\mathcal M}_F)$ of the neighbourhood ${\mathcal M}_F$ of $F$.
   }
  \label{FIGPFB}
\end{figure}

%\newpage

\section{Notes on quantum orientation}\label{NQO}

\subsection{Local triviality of the  $O(3)$-bundle $\pi: {\mathcal B}\to {\mathcal F}_3$}\label{LT}

We will sketch the argument for the required local triviality of the $O(3)$-bundle $\pi: {\mathcal B}\to {\mathcal F}_3$,
see Fig.~ \ref{FIGPFB}.
Let $F\in {\mathcal F}_3$ and ${\mathcal M}_F$ be some open neighbourhood of $F$.
Consider the principal $U(2)\times U(N-2)$-bundle $\Psi: U(N)\to U(N)/(U(2)\times U(N-2))\cong {\mathcal F}_3$
and a local section $u: {\mathcal M}_F  \to U({\mathcal H}) \cong  U(N)$.

We have to show that there exists a homeomorphism
\begin{equation}\label{defGamma}
 \Gamma: \pi^{-1} \left({\mathcal M}_F\right) \to {\mathcal M}_F\times O(3)
 \;,
\end{equation}
such that $\Gamma(\psi)=\left(\pi(\psi), R(\psi)\right)$ for all $\psi\in \pi^{-1} \left({\mathcal M}_F\right)$.
Let $\phi'\in  \pi^{-1} \left({\mathcal M}_F\right)$ such that $\phi': E^3 \to F'\in {\mathcal M}_F$
and fix some ${\phi}\in \pi^{-1} (F)$ hence satisfying ${\phi}: E^3 \to F$.
Recall the fixed two-dimensional subspace ${\mathcal H}_2^0$ and denote by $F_0\in {\mathcal F}_3$ the corresponding face.
$F$ and $F'$ are unitary transforms of $F_0$; for the corresponding unitary operators we use the local section
$u$ considered above. We thus define $S_\phi: F_0 \to F$ by $S_\phi(f_0) = u(\phi)\,f_0\,u(\phi)^\ast$  for all $f_0\in F_0$, and,
analogously, $S_{{\phi}'}: F_0 \to F'$ by $S_{{\phi}'}(f_0) = u({{\phi'}})\,f_0\,u({{\phi}'})^\ast$.
We further consider the affine isomorphism $\hat{\phi}: E_3 \to F'$ defined by the composition
$\hat{\phi} :
E^3\stackrel{\phi}{\rightarrow} F
\stackrel{S_{\phi}^{-1}}{\rightarrow}F_0
\stackrel{S_{\phi'}}{\rightarrow}F'$.
It follows that $R(\phi'):=\hat{\phi}^{-1}\,{\phi'}$ will be an affine isomorphism of $E^3$ onto itself, and hence a rotation $R(\phi')\in O(3)$.
We will define $\Gamma(\phi') =\left( F', R(\phi')\right)$
and omit the proof that $\Gamma: \pi^{-1} \left({\mathcal M}_F\right) \to {\mathcal M}_F\times O(3)$
will be a homeomorphism.

\subsection{Global sections of the  ${\mathbb Z}_2$-bundle $\Pi: {\mathcal B}/SO(3)\to {\mathcal F}_3$}\label{GS}

We will define two distinct global sections $\Sigma^\pm$ of the
${\mathbb Z}_2$-bundle $\Pi: {\mathcal B}/SO(3)\to {\mathcal F}_3$.
Let us consider an arbitrary $3$-dimensional face $F\in{\mathcal F}_3$ corresponding to
a two-dimensional subspace ${\mathcal H}_2$. Let $(\varphi,\psi)$ be any orthonormal
basis in ${\mathcal H}_2$, and $B(F)\cong B({\mathcal H}_2)$ be the $4$-dimensional
subspace of $B({\mathcal H})$ spanned by all statistical operators $W\in F$ or,
equivalently, consisting of all Hermitean operators with support included in ${\mathcal H}_2$.
We define the following two orthonormal bases in $B(F)$:
\begin{eqnarray}\label{sigmapm0}
 \sigma^\pm_0(F)&=&\frac{1}{\sqrt{2}}\left(|\varphi\rangle \langle \varphi | +|\psi\rangle \langle \psi | \right),\\
 \label{sigmapm1}
 \sigma^\pm_1(F)&=&\frac{1}{\sqrt{2}}\left(|\varphi\rangle \langle \psi | +|\psi\rangle \langle \varphi| \right),\\
 \label{sigmapm2}
 \sigma^\pm_2(F)&=&\pm\frac{{\sf i}}{\sqrt{2}}\left(-|\varphi\rangle \langle \psi | +|\psi\rangle \langle \varphi| \right),\\
 \label{sigmapm3}
 \sigma^\pm_3(F)&=&\frac{1}{\sqrt{2}}\left(|\varphi\rangle \langle \varphi | -|\psi\rangle \langle \psi | \right).
 \end{eqnarray}
$(\sigma_0^+(F))_{i=0,1,2,3}$  will be referred to as the ``standard basis" in $B(F)$.
This gives rise to two
affine isomorphisms $\phi^\pm:E^3 \to F$ in the following way:
\begin{eqnarray}
\label{defphi1}
  \tau^\pm_0(F)=\frac{1}{\sqrt{2}} \, \sigma^\pm_0(F),\; \tau^\pm_i(F)&=&
  \tau^\pm_0(F) +\frac{1}{\sqrt{2}} \,\sigma^\pm_i(F),\quad \mbox{for}\; i=1,2,3, \\
  \label{defphi2}
  \phi^\pm(F)(0,0,0) &=& \tau^\pm_0(F),\\
   \label{defphi3}
  \phi^\pm(F)(1,0,0) &=& \tau^\pm_1(F),\\
   \label{defphi4}
  \phi^\pm(F)(0,1,0) &=& \tau^\pm_2(F),\\
   \label{defphi5}
  \phi^\pm(F)(0,0,1) &=& \tau^\pm_3(F)
  \;.
\end{eqnarray}
Here we have used the identification of $E^3$ with the unit ball in ${\mathbb R}^3$ and that an affine
isomorphism $\phi:E^3 \to F$ can be uniquely determined by the image of four point that affinely generate ${\mathbb R}^3$.

Then we define $\Sigma^\pm: {\mathcal F}_3\to {\mathcal B}/SO(3)$ by $\Sigma^\pm(F):=\phi\pm(F)\, SO(3)$ for all $F\in  {\mathcal F}_3$.
Clearly, these are two different cross sections of the  ${\mathbb Z}_2$-bundle $\Pi: {\mathcal B}/SO(3)\to {\mathcal F}_3$.
It remains to show that the $\Sigma^\pm:  {\mathcal F}_3 \to  {\mathcal B}/SO(3)$ are continuous maps.
Consider a fixed face $F\in {\mathcal F}_3$ and
let $\sigma: {\mathcal M}_F \to {\mathcal B}$ be a local section of the  $O(3)$-bundle $\pi: {\mathcal B}\to {\mathcal F}_3$
satisfying $\sigma(F) = \phi^+(F)$ according to the definition (\ref{defphi1} - \ref{defphi5}). Consider an arbitrary face
$F'\in  {\mathcal M}_F$ and let $\phi'=\phi^+(F')$ such that $\phi':E^3\to F'$,
moreover $\phi''=\sigma(F')$ such that also  $\phi':E^3\to F'$. It follows that $\phi'' = \phi'\, R$ for
some $R\in O(3)$ since $\phi'$ and $\phi''$ lie in the same fiber over $F'$. By continuity of $\sigma$
we have $\det R=1$, i.~e.,  $R\in SO(3)$. Let $\Phi: {\mathcal B}\to {\mathcal B}/SO(3)$ be the continuous quotient map
that assigns to each $\phi\in {\mathcal B}$ the orbit $\Phi(\phi)=\phi\,SO(3)$, then we have shown that
$\Sigma^+$ and $\Phi \circ \sigma$ coincide on  ${\mathcal M}_F$.
Therefore, $\Sigma^+$ is continuous on ${\mathcal M}_F$ and thus also on the entire ${\mathcal F}_3$,
which can be covered by open neighborhoods of the form ${\mathcal M}_F$.
The proof of the continuity of $\Sigma^-$ proceeds in exactly the same way.

\section{Proof of Proposition \ref{P2}}\label{AP2}

Let $I$ denote the multiplication with the imaginary unit ${\sf i}$ and write $\psi= I \circ\Psi$
such that $\Psi: B({\mathcal H}) \to B_0({\mathcal H})$ is a linear map.
First, we only consider the conditions  (\ref{cond1}) and  (\ref{cond2}) of Proposition \ref{P2}.
From (\ref{cond2})
it follows that $[\Psi(a),a]=0$ for all $a\in B({\mathcal H})$. Choose some  $a\in B_0({\mathcal H})$, i.~e.,
$a\in B({\mathcal H})$ with $\mbox{Tr } a=0$ and  a spectral decomposition
$a=\sum_{n=1}^{N}a_n P_n$ where all $a_n$ are pairwise different, $a_1=1,a_2=-1$
and the $P_n$ are one-dimensional projectors satisfying $\sum_{n=1}^{N}P_n={\mathbbm 1}$.
Since $\Psi(a)\in B_0({\mathcal H})$ commutes with $a$ it must be of the form $\Psi(a)=\sum_{n=1}^{N}b_n P_n$ such that $\sum_{n=1}^{N}b_n=0$.
Let $a_n \to 0$ for $n=3,\ldots,N$. Then $\Psi(a)\to \Psi(P_1-P_2)=\sum_{n=1}^{N}\tilde{b}_n P_n$. Since
the projectors $P_3, \ldots, P_N$ can be chosen as forming an arbitrary system of orthogonal projectors in the null space
of $P_1+P_2$ the coefficients $\tilde{b}_n$ must be constant for $n=3,\ldots,N$ and hence
\begin{eqnarray}
\label{PsiP1P2a}
   \Psi(P_1-P_2) &=& \tilde{b}_1 P_1+ \tilde{b}_2 P_2+\tilde{b}_3 \sum_{n=3}^{N}P_n\\
   \label{PsiP1P2b}
  &=&  \tilde{b}_1 P_1+ \tilde{b}_2 P_2+N \tilde{b}_3 \left( \frac{1}{N}{\mathbbm 1} - \frac{1}{N}(P_1+P_2)\right) \\
  \label{PsiP1P2c}
  &=&\underbrace{ \left(\tilde{b}_1 +N \tilde{b}_3\left(\frac{1}{2}-\frac{1}{N} \right) \right)}_{:=\lambda_1} P_1 +
 \underbrace{ \left(\tilde{b}_2 +N \tilde{b}_3\left(\frac{1}{2}-\frac{1}{N} \right) \right)}_{:=\lambda_2} P_2 +
 \underbrace{N \tilde{b}_3}_{:=\mu}
   \underbrace{\left(\frac{1}{N}{\mathbbm 1}-\frac{1}{2}(P_1+P_2)\right)}_{:=\sigma_{12}\in B_0({\mathcal H})}\\
   \label{PsiP1P2d}
  &=&\lambda\,(P_1-P_2)+ \mu\,\sigma_{12}
 \;,
\end{eqnarray}
where we have used in (\ref{PsiP1P2d}) that $\mbox{Tr }\left( \Psi(P_1-P_2) \right)=0$ and hence  $\lambda_1=-\lambda_2=:\lambda$.
Note that for $N=2$ we have $\sigma_{12}=0$. To exclude this special case, we will provisionally assume that $N>2$ in the following.

Let ${\mathcal F}$ be the face of $K(\mathcal H)$ generated by $P_1,P_2$, affinely isomorphic to a $3$-ball,
and  ${\mathcal S}_0:= \{ P-P'\left|\right. P,P'\in {\mathcal F} \mbox{ orthogonal one-dimensional projectors}\}$.
${\mathcal S}_0$ is linearly and isometrically isomorphic to a $3$-sphere with radius $\sqrt{2}$.
Moreover, let ${\mathcal R}_0$ denote the $4$-dimensional Euclidean subspace of $B_0({\mathcal H})$ spanned by ${\mathcal S}_0$
and $\sigma_{12}=\left(\frac{1}{N}{\mathbbm 1}-\frac{1}{2}(P_1+P_2)\right)$. Equation (\ref{PsiP1P2d})
can be rewritten by setting $P_1=P$ and $P_2=P'$ and then shows that $\Psi$ maps ${\mathcal S}_0$ injectively into ${\mathcal R}_0$:
\begin{equation}\label{Psie}
  \Psi\left( \sqrt{2}\,{\mathbf e}\right) = \lambda({\mathbf e})\, \sqrt{2}\, {\mathbf e} + \mu({\mathbf e}) \sigma_{12}
  \;.
\end{equation}
This holds for arbitrary $ \sqrt{2}\,{\mathbf e}\in {\mathcal S}_0$ such that $\| {\mathbf e} \|=1$.
Here we have explicitly indicated the possible ${\mathbf e}$-dependence of $\lambda$ and $\mu$.
Upon introducing suitable Euclidean coordinates in
${\mathcal R}_0$ we may identify the restriction of $\Psi$  to ${\mathcal S}_0$
with a real $4\times 3$-matrix with entries $\Psi_{ij}$.
W.~r.~t.~these coordinates $\sigma_{12}$ assumes the form $\sigma_{12}=(0,0,0,s_{12})$
and ${\mathbf e}=({e}_1,{e}_2,{e}_3)$.

We will choose ${\mathbf e}=(1,0,0)$. Then (\ref{Psie}) implies that $\Psi_{11}=\lambda(1,0,0)$ and
$\Psi_{21}=\Psi_{31}=0$. Analogously for ${\mathbf e}=(0,1,0)$ and ${\mathbf e}=(0,0,1)$.
Hence the $3\times 3$-submatrix of $\Psi$ with entries $\Psi_{ij},\;i,j=1,2,3$ must be of diagonal form.
Moreover,
\begin{equation}\label{Psidiag}
  \left(
  \begin{array}{ccc}
    \Psi_{11} & 0 & 0 \\
    0 & \Psi_{22} & 0 \\
    0 & 0 & \Psi_{33}
  \end{array}
  \right)\;\mathbf{e} = \lambda({\mathbf e})\,{\mathbf e}
\end{equation}
shows that all diagonal elements $\Psi_{ii},\,i=1,2,3$ have the constant value $\lambda= \lambda\,({\mathbf e})$
which is hence independent of ${\mathbf e}$. The remaining equation corresponding to the $4$th component of (\ref{Psie}) reads
\begin{equation}\label{Psi4}
  \sqrt{2} \left(\Psi_{41} {e}_1+\Psi_{42} {e}_2+\Psi_{43} {e}_3\right)=
  \mu({\mathbf e}) s_{12}
  \;.
\end{equation}

Our next aim is to show that  $\mu({\mathbf e})=0$. For this we provisionally assume $N>3$.
Again, let $\left(P_i\right)_{i=1,\ldots,N}$ be a general system of mutually orthogonal one-dimensional projectors.
By adding the two equations
\begin{eqnarray}
\label{Psi12}
  \Psi(P_1-P_2) &=& \lambda_{12}\left(P_1-P_2 \right) +\mu_{12}\left(\frac{1}{N}{\mathbbm 1}-\frac{1}{2}(P_1+P_2)\right)\;,\\
  \label{Psi23}
 \Psi(P_2-P_3) &=& \lambda_{23}\left(P_2-P_3 \right) +\mu_{23}\left(\frac{1}{N}{\mathbbm 1}-\frac{1}{2}(P_2+P_3)\right)\;,
\end{eqnarray}
corresponding to (\ref{PsiP1P2d}) we obtain
\begin{eqnarray}
\label{Psi13a}
  \Psi(P_1-P_3) &=& \left(\lambda_{12}-\frac{1}{2}\mu_{12} \right) P_1+
  \left(\lambda_{23}-\lambda_{12}-\frac{1}{2}(\mu_{12}+\mu_{23}) \right)P_2
  +\left( -\lambda_{23}-\frac{1}{2}\mu_{23}\right) P_3
  +\frac{1}{N}\left( \mu_{12}+\mu_{23}\right) {\mathbbm 1}
   \\
   \label{Psi13b}
   &=& \lambda_{13}\left(P_1-P_3 \right) +\mu_{13}\left(\frac{1}{N}{\mathbbm 1}-\frac{1}{2}(P_1+P_3)\right)
   \;.
\end{eqnarray}
Due to $N>3$ the four operators $P_1,P_2,P_3,{\mathbbm 1}$ are linearly independent in $B({\mathcal H})$.
Hence the identity of (\ref{Psi13a}) and (\ref{Psi13b}) entails a system of four linear equations
for the six unknowns $\lambda_{12},\ldots, \mu_{13}$.
It solution reads
\begin{eqnarray}
\label{mu12}
  \mu_{12} &=&2(\lambda_{23}-\lambda_{13})\;, \\
  \label{mu23}
  \mu_{23} &=& 2(\lambda_{13}-\lambda_{12})\;, \\
  \label{mu13}
  \mu_{13}&=&  2(\lambda_{23}-\lambda_{12})
  \;.
\end{eqnarray}
It follows that not only the $\lambda_{ij}$, but also the $\mu_{ij}$ are independent of ${\mathbf e}$.
For the case $N=3$ the proof is analogous, using that $P_1, P_2, P_3$ are linearly independent.

Reconsidering eq.~(\ref{Psi4}) which now assumes the form
\begin{equation}\label{Psi4a}
  \sqrt{2} \left(\Psi_{41} {e}_1+\Psi_{42} {e}_2+\Psi_{43} {e}_3\right)=
  \mu_{12} s_{12}
  \;,
\end{equation}
where $\mu_{12}$ is independent of ${\mathbf e}$, we  conclude that it cannot be satisfied for all
${\mathbf e}$ with $\|{\mathbf e}\|=1 $, unless $\mu_{12}=\Psi_{41}=\Psi_{42}=\Psi_{43}=0$.
(In the case $N=2$ this trivially holds because $s_{12}=0$.)
Inserting this result into (\ref{mu12}) gives $\lambda_{23}=\lambda_{13}$.
Slightly generalizing this result, we may state that $\lambda_{ij}=\lambda_{k\ell}$ for all pairs of faces
${\mathcal F}_{ij},\, {\mathcal F}_{k\ell}$ generated by two orthogonal projectors $P_i,P_j$ , resp.~$P_k,P_\ell$,
of the family $\left(P_i\right)_{i=1,\ldots,N}$.

Thus we have shown that $\Psi$ acts as the multiplication with a factor $\lambda_{{\mathcal F}}$ on each sphere ${\mathcal S}_0$
formed by differences $P-P'$ of any two orthogonal projectors from a face ${\mathcal F}$  of
$K({\mathcal H})$ generated by two orthogonal one-dimensional projectors $P_1,P_2$.
It remains to show that $\lambda_{{\mathcal F}}$ is constant, i.~e., does not depend on ${\mathcal F}$.
The above result $\lambda_{ij}=\lambda_{k\ell}$ is a first step of this proof.
The claim follows since there exists a basis, albeit not an orthonormal one, in $B_0({\mathcal H})$
formed of differences $P-P'$ of orthogonal projectors from faces of the form ${\mathcal F}_{ij}$.
which will be described in the following.
Let $(|i\rangle)_{i=1,\ldots ,N}$ denote an orthonormal basis in ${\mathcal H}$ such that $P_i=|i\rangle \langle i|$
for $i=1,\ldots,N$. Then one part of the basis in $B_0({\mathcal H})$ is formed by the ``generalized Pauli matrices"
$|k\rangle \langle \ell|+|\ell\rangle \langle k|$ and ${\sf i}(|k\rangle \langle \ell|-|\ell\rangle \langle k|)$, where
$1\le k < \ell \le N$. The remaining $N-1$ basis matrices can be chosen as $P_1-P_k, \, k=2,\ldots,N$.
Since $\Psi$ is multiplication with a constant $\lambda$ on this basis it will be of this form for all $B_0({\mathcal H})$.
Further, $\lambda \neq 0$ follows from (\ref{cond1}).

Finally we invoke condition  (\ref{cond3}) and conclude that $\lambda =\pm 1$,
by which the proof of proposition \ref{P2} is complete.

\section{Proof of Proposition \ref{Pid}}\label{APid}

\subsection{Proof of (\ref{Pid1})}\label{AP1}

These identities follow from the following Lemma:
\begin{lemma}\label{LAP1}
 $ \pi_2=\tau_2$ .
\end{lemma}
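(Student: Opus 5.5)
We want to show $\pi_2=\tau_2$, where $\pi_2=P\,\sigma_2(F)\,P$ and $\tau_2=\sqrt{\alpha'\beta''}\,\sigma_2(F')$. By (\ref{sigmap2}) the latter is
\[
\tau_2=\frac{\sf i}{\sqrt{2}}\left(-|\varphi'\rangle\langle\psi''|+|\psi''\rangle\langle\varphi'|\right).
\]
The plan is a direct computation, essentially already begun in (\ref{pi2a}) and (\ref{pi2b}). First I use $P=P^\ast$, $P\varphi=\varphi'$ and $P\psi=\psi'$ to write $P|\varphi\rangle\langle\psi|P=|\varphi'\rangle\langle\psi'|$, and similarly for the other dyad. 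This gives
\[
\pi_2=\frac{\sf i}{\sqrt2}\left(-|\varphi'\rangle\langle\psi'|+|\psi'\rangle\langle\varphi'|\right).
\]

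Next I substitute $\psi'=\psi''+\gamma\varphi'$ from (\ref{psipp}). The bra becomes $\langle\psi'|=\langle\psi''|+\overline{\gamma}\langle\varphi'|$, so the conjugation of $\gamma$ must be handled. Here I invoke (\ref{phippsip}): because the phases were fixed so that $\varphi_1,\psi_1\ge0$, the overlap $\langle\varphi'|\psi'\rangle=(p^2-1)\varphi_1\psi_1$ is real. Since $\alpha'>0$ is real as well, $\gamma$ is real and $\overline{\gamma}=\gamma$.

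Expanding then produces the extra terms $-\gamma|\varphi'\rangle\langle\varphi'|$ and $+\gamma|\varphi'\rangle\langle\varphi'|$, which cancel. This antisymmetric cancellation is exactly why only $\sigma_2$ is mapped so cleanly. What remains is $\frac{\sf i}{\sqrt2}\left(-|\varphi'\rangle\langle\psi''|+|\psi''\rangle\langle\varphi'|\right)=\tau_2$.

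The only delicate point is the reality of $\gamma$. It relies on the phase normalization and on $P$ being real diagonal with a single non-unit entry $p$ in the chosen eigenbasis. If $\gamma$ were complex, the $|\varphi'\rangle\langle\varphi'|$ terms would leave a residue ${\sf i}(\gamma-\overline{\gamma})/\sqrt{2}$.

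Finally, (\ref{Pid1}) and (\ref{Pid2}) follow from the lemma. The $\tau_i$ are mutually orthogonal with respect to the trace form, being rescalings of the orthonormal standard basis of $B(F')$. Hence $\Lambda_{i2}=\mbox{Tr}(\tau_2\tau_i)=0$ for $i\neq2$. For $\Lambda_{2j}$ with $j=0,1,3$, I use the expressions (\ref{pi0b}), (\ref{pi1b}) and (\ref{pi3b}): each $\pi_j$ is a real combination of $|\varphi'\rangle\langle\varphi'|$, $|\psi''\rangle\langle\psi''|$ and $|\varphi'\rangle\langle\psi''|+|\psi''\rangle\langle\varphi'|$. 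Each of these is trace-orthogonal to $\tau_2$, since $\varphi'\perp\psi''$.
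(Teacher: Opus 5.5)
Your proof is correct and is the same direct computation the paper gives. You expand $P\,\sigma_2(F)\,P$ as in (\ref{pi2a}), substitute $\psi'=\psi''+\gamma\varphi'$ so that the $|\varphi'\rangle\langle\varphi'|$ terms cancel as in (\ref{pi2b}), and compare with (\ref{sigmap2}). Your explicit check that $\gamma$ is real, via (\ref{phippsip}) and the phase choice $\varphi_1,\psi_1\ge 0$, spells out what the paper uses silently in that cancellation, and your derivation of (\ref{Pid1}) and (\ref{Pid2}) from the lemma is also sound.
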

If this Lemma holds, then the expansion coefficients of $\pi_2=\sum_{i=0}^{3} \Lambda_{i2}\,\tau_i$
vanish except $\Lambda_{22}$.\\
{\bf Proof}:
\begin{equation}\label{AP1a}
 \pi_2 \stackrel{(\ref{pi2b})}{=}\frac{\sf i}{\sqrt{2}}\left(-|\varphi'\rangle \langle \psi''| + |\psi''\rangle \langle \varphi'|\right)
 \stackrel{(\ref{sigmap2})}{=} \tau_2
 \;.  \quad\quad\quad\quad \square
\end{equation}

In the following proofs we have to evaluate some entries $\Lambda_{ij}=\mbox{Tr} \left(\pi_j\,\tau_i \right)$.
The involved operators have been expressed as linear combinations of the four operators
\begin{equation}\label{op4}
 |\varphi'\rangle\langle \varphi'|,\quad
 |\varphi'\rangle\langle \psi''|,\quad
 |\psi''\rangle\langle \varphi'|,\quad
 |\psi''\rangle\langle \psi''|\;.
\end{equation}
It can be easily shown that only four of the possible products of (\ref{op4}) have a non-vanishing trace, namely
\begin{eqnarray}\label{possprod1}
&& \mbox{Tr} \left( |\varphi'\rangle\langle \varphi'| \varphi'\rangle\langle \varphi'| \right)=\alpha'^2\\
\label{possprod2}
&& \mbox{Tr} \left( |\varphi'\rangle\langle \psi''| \psi''\rangle\langle \varphi'| \right)=\alpha'\,\beta''\\
\label{possprod3}
&& \mbox{Tr} \left( |\psi''\rangle\langle \varphi'| \varphi'\rangle\langle \psi''| \right)=\alpha'\,\beta''\\
\label{possprod4}
&& \mbox{Tr} \left( |\psi''\rangle\langle \psi''| \psi''\rangle\langle \psi''| \right)=\beta''^2\;,
\end{eqnarray}
which facilitates the following proofs.

\subsection{Proof of (\ref{Pid2})}\label{AP2}

\subsubsection{$\Lambda_{20}=0$}

Taking into account the products (\ref{possprod2}) and (\ref{possprod3}) with non-vanishing trace
we obtain from (\ref{pi0b}) and (\ref{sigmap2}):
\begin{equation}
\label{Pid21}
  \Lambda_{20} = \mbox{Tr}\,\pi_0\,\tau_2=
  \frac{{\sf i}\,\gamma}{2}  \mbox{Tr}\,\left( -|\varphi'\rangle\langle \psi''| \psi''\rangle\langle \varphi'|
  +|\psi''\rangle\langle \varphi'| \varphi'\rangle\langle \psi''|\right)=0\;.
 \end{equation}

\subsubsection{$\Lambda_{21}=0$}

Taking into account the products (\ref{possprod2}) and (\ref{possprod3}) with non-vanishing trace
we obtain from (\ref{pi1b}) and (\ref{sigmap2}):
\begin{equation}
\label{Pid22}
  \Lambda_{21} = \mbox{Tr}\,\pi_1\,\tau_2=
  \frac{{\sf i}}{2}  \mbox{Tr}\,\left( -|\varphi'\rangle\langle \psi''| \psi''\rangle\langle \varphi'|
  +|\psi''\rangle\langle \varphi'| \varphi'\rangle\langle \psi''|\right)=0\;.
 \end{equation}

\subsubsection{$\Lambda_{23}=0$}

Taking into account the products (\ref{possprod2}) and (\ref{possprod3}) with non-vanishing trace
we obtain from (\ref{pi3b}) and (\ref{sigmap2}):
\begin{equation}
\label{Pid23}
  \Lambda_{23} = \mbox{Tr}\,\pi_3\,\tau_2=
  -\frac{{\sf i}\,\gamma}{2}  \mbox{Tr}\,\left( -|\varphi'\rangle\langle \psi''| \psi''\rangle\langle \varphi'|
  +|\psi''\rangle\langle \varphi'| \varphi'\rangle\langle \psi''|\right)=0\;.
 \end{equation}

\subsection{Proof of (\ref{Pid3})}\label{AP3}

Note that $\Lambda_{10}+\Lambda_{13}= \mbox{Tr}\,(\pi_0+\pi_3)\,\tau_1$
and, due to (\ref{pi0b}) and (\ref{pi3b}), $\pi_0+\pi_3=\sqrt{2}|\varphi'\rangle\langle \varphi'|$.
From  (\ref{sigmap1}) it follows that all possible products with the terms of  $\tau_1$
have vanishing trace and hence $\Lambda_{10}+\Lambda_{13}=0$.

\subsection{Proof of (\ref{Pid4})}\label{AP4}

Note that $\Lambda_{01}-\Lambda_{31}= \mbox{Tr}\,\pi_1\,(\tau_0-\tau_3)$
and, due to (\ref{sigmap0}) and (\ref{sigmap3}), $\tau_0-\tau_3=\frac{\sqrt{2}}{\beta''}|\psi''\rangle\langle \psi''|$.
From  (\ref{pi1b}) it follows that all possible products with the terms of  $\pi_1$
have vanishing trace and hence $\Lambda_{01}-\Lambda_{31}=0$.

\subsection{Proof of (\ref{Pid5})}\label{AP5}

First, consider
\begin{equation}\label{Pid51}
  \Lambda_{11}= \mbox{Tr}\,\pi_1\,\tau_1 = \frac{1}{2} \mbox{Tr}\left(  |\varphi'\rangle\langle \psi''| \psi''\rangle\langle \varphi'|
  + |\psi''\rangle\langle \varphi'| \varphi'\rangle\langle \psi''|\right)
  \stackrel{(\ref{possprod2},\ref{possprod3})}{=}\alpha'\,\beta''
  \;.
\end{equation}
Analogously,
\begin{equation}\label{Pid52}
  \Lambda_{22}= \mbox{Tr}\,\pi_2\,\tau_2 = \frac{1}{2} \mbox{Tr}\left(  |\varphi'\rangle\langle \psi''| \psi''\rangle\langle \varphi'|
  + |\psi''\rangle\langle \varphi'| \varphi'\rangle\langle \psi''|\right)
  \stackrel{(\ref{possprod2},\ref{possprod3})}{=}\alpha'\,\beta''
  \;.
\end{equation}
This proves $ \Lambda_{11}= \Lambda_{22}$.

\subsection{Proof of (\ref{Pid6})}\label{AP6}

Note that $\Lambda_{30}+\Lambda_{33}= \mbox{Tr}\,\tau_3\,(\pi_0+\pi_3)$
and, due to  (\ref{pi0b}) and  (\ref{pi3b}), $\pi_0+\pi_3=\sqrt{2}|\varphi'\rangle\langle \varphi'|$.
From   (\ref{sigmap3}) it follows that all possible products with the terms of  $\tau_3$
with non-vanishing trace yield
\begin{equation}\label{Pid61}
 \mbox{Tr}\,\tau_3\,(\pi_0+\pi_3) = \frac{1}{\alpha'}
 \mbox{Tr} \left( |\varphi'\rangle\langle \varphi'| \varphi'\rangle\langle \varphi'| \right)
  = \frac{\alpha'^2}{\alpha'}=\alpha'
 \;.
\end{equation}

Similarly, we can evaluate $\Lambda_{00}+\Lambda_{03}= \mbox{Tr}\,\tau_0\,(\pi_0+\pi_3)$
using  (\ref{sigmap0}) and obtain
\begin{equation}\label{Pid62}
 \mbox{Tr}\,\tau_0\,(\pi_0+\pi_3) = \frac{1}{\alpha'}
 \mbox{Tr} \left( |\varphi'\rangle\langle \varphi'| \varphi'\rangle\langle \varphi'| \right)
  = \frac{\alpha'^2}{\alpha'}=\alpha'
 \;,
\end{equation}
which proves (\ref{Pid6}).

\section{Proof of Proposition \ref{CPP}}\label{ACPP}

The if part directly follows from Proposition \ref{PC}.
For the only-if part consider an arbitrary one-dimensional
projector $P_\varphi$ and the Kraus operator sum representation of  $T P_\varphi$:
\begin{equation}\label{CPP1}
T P_\varphi = \sum_{i=1}^{n} A_i\, | \varphi\rangle\langle\varphi| \,A_i^\ast
= \sum_{i=1}^{n}\, |  A_i\,\varphi\rangle\langle  A_i\,\varphi|
= \sum_{i=1}^{n}\, \left| \lambda_i \right|^2\, | \psi_i\rangle\langle \psi_i|
\;,
\end{equation}
where $A_i\,\phi = \lambda_i\;\psi_i$ such that $\| \psi_i \|=1$ and
$\lambda_i\in{\mathbb C}$ for all $i=1,\ldots,n$.

Since $T$ is pure, $T P_\varphi$ must be a one-dimensional projector, up to a non-negative factor.
Hence either $n=1$, which means that $T$ is of type (i), or $n>1$ and
all projectors $|\psi_i\rangle\langle \psi_i|$ must be identical to a constant projector
$P_\psi=|\psi\rangle\langle \psi|$. This proves the following Lemma:
%%%%%%%%%%%%%%%%%%%%%%%%%%%%%%%%%%%%%%%%%%%%%%%%%%%%%%%%%%%%%%%%%%%%%%%%%%%%%%%%%%%%%%%%%%%%%%%%%%
\begin{lemma}\label{LCPP}
 Let $T$ be a pure, completely positive operation with a Kraus operator sum
 representation (\ref{CPP1}) with a minimal number of terms $n>1$. Then for all $\varphi\in{\mathcal H}$ with $\| \varphi\|=1$
 there exists a $\psi\in{\mathcal H}$ with $\| \psi\|=1$ and complex numbers $\lambda_i\in{\mathbb C},\,i=1,\ldots,n,$ such that
 \begin{equation}\label{CPP1a}
  A_i\,\varphi = \lambda_i\;\psi\;\mbox{for all}\; i=1,\ldots,n\;.
\end{equation}
\end{lemma}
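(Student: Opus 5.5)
The plan is to work directly from the Kraus representation (\ref{CPP1}) of $T P_\varphi$ and use purity to force all the vectors $A_i\varphi$ onto a common ray. Fix a unit vector $\varphi$ and write $T P_\varphi=\sum_{i=1}^n |A_i\varphi\rangle\langle A_i\varphi|$. Since $T$ is pure, $T P_\varphi=\lambda P_\psi$ for some unit $\psi$ and some $\lambda\ge 0$.

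If $\lambda=0$, then $\sum_i \|A_i\varphi\|^2=\mbox{Tr}(T P_\varphi)=0$. Hence $A_i\varphi=0$ for every $i$, and (\ref{CPP1a}) holds with all $\lambda_i=0$ and an arbitrary unit vector $\psi$.

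If $\lambda>0$, the key step is to take any unit vector $\chi\perp\psi$ and evaluate
\begin{equation*}
0=\lambda\langle\chi|P_\psi|\chi\rangle=\sum_{i=1}^n\left|\langle \chi|A_i\varphi\rangle\right|^2 .
\end{equation*}
This is a sum of non-negative terms, so each $A_i\varphi$ is orthogonal to $\psi^\perp$. Therefore $A_i\varphi\in{\mathbb C}\psi$, that is, $A_i\varphi=\lambda_i\psi$ with $\lambda_i=\langle\psi|A_i\varphi\rangle$. This is exactly (\ref{CPP1a}).

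The only point needing care is the informal wording before the lemma, which says ``either $n=1$ or all projectors coincide''. For the lemma itself the range argument above works for every $n$, including minimal $n>1$. Minimality of $n$ is not actually needed for the statement; it only matters afterwards, when one deduces type (iii).

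I expect no real obstacle. The one subtlety is the degenerate case where some or all $A_i\varphi$ vanish, which is why $\lambda_i=0$ is allowed and $\psi$ is chosen arbitrarily when $T P_\varphi=0$.
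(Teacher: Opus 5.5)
Your argument is correct and is essentially the paper's: both expand $T P_\varphi=\sum_i |A_i\varphi\rangle\langle A_i\varphi|$ and use purity to force every $A_i\varphi$ onto the single ray $\mathbb{C}\psi$. Your test against unit vectors $\chi\perp\psi$ supplies the justification the paper leaves implicit, since the paper simply asserts that all $|\psi_i\rangle\langle\psi_i|$ must coincide; you are also right that neither $n>1$ nor minimality of $n$ is used, and you handle the case $T P_\varphi=0$, which the paper passes over.
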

%%%%%%%%%%%%%%%%%%%%%%%%%%%%%%%%%%%%%%%%%%%%%%%%%%%%%%%%%%%%%%%%%%%%%%%%%%%%%%%%%%%%%%%%%%%%%%%%%%%%

For the remaining proof we hence will assume that $n>1$.
From Lemma (\ref{LCPP}) it follows that
\begin{equation}\label{CPP2}
T P_\varphi=\left( \sum_{i=1}^{n}\, \left| \lambda_i \right|^2\right)\,P_\psi
=\mbox{Tr } \left(W \,P_\varphi \right)\; P_\psi
\;,
\end{equation}
using that the trace of (\ref{CPP1}) reads:
\begin{equation}\label{CPP3}
\mbox{Tr }(T\,P_\varphi)=
  \sum_{i=1}^{n}\mbox{Tr }\left(A_i\,P_\varphi\,A_i^\ast\right) =
  \mbox{Tr }\underbrace{\left(\sum_{i=1}^{n}A_i^\ast\,A_i\right)}_{=:W\ge 0}\,P_\varphi
  =\mbox{Tr } \left(W \,P_\phi \right) =
  \sum_{i=1}^{n}\left| \lambda_i \right|^2
  \;.
\end{equation}

It remains to show that $P_\psi$  is one and the same projector for all $P_\varphi$, or, more precisely:
%%%%%%%%%%%%%%%%%%%%%%%%%%%%%%%%%%%%%%%%%%%%%%%%%%%%%%%%%%%%%%%%%%%%%%%%%%%%%%%%%%%%%%%%%%%%%%%%%%%%%%%%%%%%%%%%%%%
\begin{ass}\label{ASS1}
 There exists a  $\psi\in{\mathcal H},\; \|\psi\|=1$, such that for all $\varphi\in{\mathcal H},\; \|\varphi\|=1$
there exists some $\lambda\ge 0$ such that $T P_\varphi=\lambda\,P_{\psi}$.
\end{ass}
%%%%%%%%%%%%%%%%%%%%%%%%%%%%%%%%%%%%%%%%%%%%%%%%%%%%%%%%%%%%%%%%%%%%%%%%%%%%%%%%%%%%%%%%%%%%%%%%%%%%%%%%%%%%%%%%%%%%
Then, by linear extension, equation (\ref{CPP2}) holds for any $X\in B({\mathcal H})$, and thus $T$ would be of type (iii). \\

In order to prove Assertion \ref{ASS1} we consider the polar decomposition $ A_j=U_j\,P_j$ of the Kraus operators,
where $P_j$ is a positive semi-definite operator and $U_j$ a unitary one for $j=1,\ldots,n$.
Let $(\varphi_{ij})_{i=1,\ldots,N}$ denote the orthonormal eigenbasis of $P_j$ such that
\begin{equation}\label{eigenP}
  P_j\,\varphi_{ij}=\left\{
  \begin{array}{r@{\quad:\quad}l}
  p_{ij}\,\varphi_{ij} & i=1,\ldots,r_j\\
  0 &i=r_j+1,\ldots,N\;,
  \end{array}
  \right.
\end{equation}
and $p_{ij}>0$ for $i=1,\ldots,r_j,$ and $j=1,\ldots,n$.
$r_j$ denotes the rank of $A_j$.
It follows that
\begin{equation}\label{AeigenP}
A_j\,\varphi_{ij}=U_j\,P_j\,\varphi_{ij}= p_{ij}\,U_j \,\varphi_{ij}=:p_{ij}\,\psi_{ij}\;,\mbox{for }
 i=1,\ldots,r_j,\;\mbox{and } j=1,\ldots,n
 \;.
\end{equation}
Note that the $(\psi_{ij})_{i=1,\ldots,r_j}$ form an orthonormal basis of the subspace $A_j({\mathcal H})$
for all $j=1,\ldots,n$.

We will consider the following case distinction:
\begin{itemize}
  \item Case A: There exists a $1\le j\le n$ such that $r_j\ge 2$:
  At least one Kraus operator has rank $\ge 2$, or
  \item Case B: For all $1\le j\le n$ there holds $r_j=1$:
  All Kraus operators have rank $1$.
\end{itemize}

\subsection{Case A}\label{sec:CaseA}
Without loss of generality, we can assume $j=1$, since the numbering of the Kraus operators is arbitrary anyway.
Correspondingly, we will omit the index $j$ from $\varphi_{ij},\,\psi_{ij}$ and $p_{ij}$. Thus we obtain
\begin{equation}\label{CA1}
 P_1\,\varphi_1= p_1\,\varphi_1,\quad\mbox{and}\quad P_2\,\varphi_2= p_2\,\varphi_2\;.
\end{equation}
By applying Lemma (\ref{LCPP}) we conclude
\begin{equation}\label{CA2}
 A_k\,\varphi_1= \lambda_k\,\psi_1,\quad \mbox{and}\quad A_k\,\varphi_2= \mu_k\,\psi_2,\quad
 \mbox{for all}\,k=1,\ldots,n
 \;.
\end{equation}
Especially, $\lambda_1=p_1>0$ and $\mu_1=p_2>0$.

Let ${\mathcal H}_2$ denote the subspace of ${\mathcal H}$ spanned by $\varphi_1$ and $\varphi_2$
and, analogously, ${\mathcal H}'_2$  the subspace of ${\mathcal H}$ spanned by $\psi_1$ and $\psi_2$.
Define the vector
\begin{equation}\label{CA3}
  \Phi:= c_1 \,\varphi_1+c_2 \,\varphi_2, \quad c_i\in{\mathbb C},\; c_i\neq 0\;\mbox{for}\; i=1,2,\quad
  \mbox{such that } \| \Phi\|^2=|c_1|^2+|c_2|^2 =1
  \;.
\end{equation}
Again applying Lemma (\ref{LCPP}) we conclude
\begin{equation}
\label{CA4}
 A_k\,\Phi = \rho_k\,\Psi
   \stackrel{(\ref{CA3})}{=} c_1\,A_k\,\varphi_1 + c_2\,A_k\,\varphi_2\stackrel{(\ref{CA2})}{=}
    c_1\,\lambda_k\,\psi_1 + c_2\,\mu_k\,\psi_2
    \;,
\end{equation}
for all $k=1,\ldots,n$. Note that
\begin{equation}\label{CA5}
  \|\rho_1\,\Psi\|^2 =|c_1|^2\,p_1^2+|c_2|^2\,p_2^2 >0\quad \mbox{and hence}\; \rho_1\neq 0
  \;.
\end{equation}
By forming the scalar product of (\ref{CA4}) with $\psi_1$ and $\psi_2$ we conclude
\begin{equation}\label{CA6}
 \lambda_k=\frac{\langle \psi_1|\Psi\rangle }{c_1} \rho_k =: \nu_1\,\rho_k,\quad
 \mu_k=\frac{\langle \psi_2|\Psi\rangle }{c_2} \rho_k =: \nu_2\,\rho_k
 \;.
\end{equation}
Inserting this result into (\ref{CA2}) yields
\begin{eqnarray}
\label{CA7a}
  A_k\,\varphi_1 &=& \frac{\lambda_k}{\lambda_1} \lambda_1 \psi_1= \frac{\lambda_k}{\lambda_1} A_1\,\varphi_1
  =\frac{\nu_1}{\lambda_1}\rho_k\, A_1\,\varphi_1=:\kappa_1\,\rho_k\, A_1\,\varphi_1,\\
  \label{CA7b}
  A_k\,\varphi_2 &=& \frac{\mu_k}{\mu_1} \mu_1 \psi_2= \frac{\mu_k}{\mu_1} A_1\,\varphi_2
  =\frac{\nu_2}{\mu_1}\rho_k\, A_1\,\varphi_2=:\kappa_2\,\rho_k\, A_1\,\varphi_2\;.
\end{eqnarray}
We consider the basis $\left( |\varphi_i\rangle \langle\varphi_j|\right)_{1\le i,j\le 2}$
in $L({\mathcal H}_2)=B({\mathcal H}_2)\oplus {\sf i}B({\mathcal H}_2)$,
the space of all linear operators on ${\mathcal H}_2$ and
the extension of the operation $T$ to  $L({\mathcal H})=B({\mathcal H})\oplus {\sf i}B({\mathcal H})$.

It holds that
\begin{equation}\label{CA8}
 A_k\,|\varphi_i\rangle \langle \varphi_j |\,A_k^\ast \stackrel{(\ref{CA7a},\ref{CA7b})}{=}
 |\rho_k|^2 \, \kappa_i\,\overline{\kappa_j} A_1\,|\varphi_i\rangle \langle \varphi_j| \,A_1^\ast
 \;,
\end{equation}
for $1\le i,j\le 2$.
Hence
\begin{equation}\label{CA9}
 \sum_{k=1}^{n}A_k\,|\varphi_i\rangle \langle \varphi_j |\,A_k^\ast
 ={\mathcal A}\,|\varphi_i\rangle \langle \varphi_j |\,{\mathcal A}^\ast =:\widetilde{T}\left(|\varphi_i\rangle \langle \varphi_j | \right)
 \;,
\end{equation}
defining
\begin{equation}\label{CA10}
 {\mathcal A}:= \rho\,A_1\, \Delta,\quad \mbox{where} \
 \Delta= \kappa_1\,|\varphi_1\rangle\langle \varphi_1| +\kappa_2\,|\varphi_2\rangle\langle \varphi_2|
\quad \mbox{and} \quad
 \rho^2=\sum_{k=1}^{n}|\rho_k|^2
 \;.
\end{equation}
This means that the operation $T$ coincides with the pure operation $\widetilde{T}$, defined by
$\widetilde{T}\,X= {\mathcal A}\,X\,{\mathcal A}^\ast$ and hence being of type (i),
when restricted to $B({\mathcal H}_2)$. Let ${\mathcal F}'\in {\mathcal F}_3$ be the corresponding face of $K({\mathcal H})$
and $(\sigma_0({\mathcal F}),\sigma_1({\mathcal F}),\sigma_2({\mathcal F}),\sigma_3({\mathcal F}))$ be
a standard basis depending continuously on ${\mathcal F}\in {\mathcal F}_3$. Let $\det {\sf T}=\mbox{Det}
(T\,\sigma_0({\mathcal F}'),T\,\sigma_1({\mathcal F}'),T\,\sigma_2({\mathcal F}'),T\,\sigma_3({\mathcal F}'))$
and analogously for $\det \widetilde{\sf T}$.

In order to derive a contradiction let us assume that $T$ is of type (ii).
Then it follows that
$\det {\sf T}\le 0$, see the partial proof of Theorem \ref{Tpure} above. Analogously, $\det \widetilde{\sf T}\ge 0$,
since $\widetilde{T}$ is of type (i). Both determinants must coincide because
$T$  and $\widetilde{T}$ coincide on $B({\mathcal H}_2)$, which implies $\det {\sf T}=\det \widetilde{\sf T}=0$.
But this is impossible because $T$ maps $B({\mathcal H}_2)$ bijectively onto $B({\mathcal H}'_2)$.

$T$ cannot be of type (i) either since $n>1$.
This leaves only the possibility that $T$ is of type (iii),
which completes the proof for case A.

\subsection{Case B}\label{sec:CaseB}

In this case we will omit the index $i$ from $\varphi_{ij},\,\psi_{ij}$ and $p_{ij}$
and obtain
\begin{equation}\label{CB1}
  A_j\,\varphi_j = U_j\,P_j\,\varphi_j = p_j\, U_j\,\varphi_j = p_j\,\psi_j\quad \mbox{for}\quad j=1,\ldots,n.
\end{equation}
It follows that
\begin{equation}\label{CB2}
  A_j= p_j\,|\psi_j\rangle \langle \varphi_j| \quad \mbox{for}\quad j=1,\ldots,n
  \;,
\end{equation}
and hence for all projectors $P_\phi$ there holds
 \begin{equation}\label{CB3}
  T\,P_\phi = \sum_{j=1}^{n} A_j\,P_\phi \,A_j^\ast \stackrel{(\ref{CB2})}{=}
  \sum_{j=1}^{n} p_j^2 |\psi_j\rangle \langle \varphi_j|\phi\rangle \langle \phi|\varphi_j\rangle\langle \psi_j|
  = \sum_{j=1}^{n}  p_j^2  \left|\langle \varphi_j|\phi\rangle \right|^2\, P_{\psi_j}
  \;.
 \end{equation}
 Now we can argue analogously as in the proof of Lemma \ref{LCPP}:
 Since $T$ is a pure operation, $T\,P_\phi$ must be a non-negative multiple of a projection
 $P_\psi$ and hence all $P_{\psi_j},\,j=1,\ldots,n,$ must coincide with $P_\psi$.
 Moreover,
 \begin{equation}\label{CB4}
  T\,P_\phi  \stackrel{(\ref{CB3})}{=} \sum_{j=1}^{n}  p_j^2  \left|\langle \varphi_j|\phi\rangle \right|^2\, P_{\psi}
  =\mbox{Tr}\left(\underbrace{\left(\sum_{j=1}^{n}p_j^2\,P_{\varphi_j}\right)}_{=:W\ge 0} \,P_\phi \right) \,P_{\psi}
  =\mbox{Tr}(W\,P_\phi)\,P_\psi
  \;,
 \end{equation}
and hence $T$ will be of type (iii).
This completes the proof of Proposition  \ref{CPP}.

%%%%%%%%%%%%%%%%%%%%%%%%%%%%%%%%%%%%%%%%%%%%%%%%%%%%%%%%%%%%%%%%%%%%%%%%%%%%%%%%%%%%%%%%%%%%%%%%%%%%%%%%%%%%%%%%%%%%%%%%%%%%%%%%%%%%%%%%

\end{document}